%% file: main.tex
\documentclass[12pt]{article}

\usepackage[plainpages=false,pdfpagelabels,colorlinks=true,linkcolor=myred,urlcolor=mypurple,citecolor=myblue]{hyperref}
\usepackage{tptstyle}

\usepackage{fullpage}
\usepackage[font={footnotesize,it}]{caption}

\begin{document}

\title{Distributional Conformal Prediction for Markov Processes\thanks{DP's research was supported by NSF grant DMS 24-13718.}}
\author{Dehao Dai$^1$, Kejin Wu $^2$, Dimitris N. Politis$^1$}
  \date{$^1$ University of California, San Diego, 
  $^2$ Loyola University Chicago}
    \maketitle

\begin{abstract}
We introduce the Markov Distributional Conformal Prediction (MDCP) method that extends the distributional conformal prediction (previously developed for regression) to the setting of a strictly stationary Markov process. Instead of relying on a specific model structure to do prediction, the idea of distributional conformal prediction interval aligns with the Model-Free (MF)  Prediction Principle. In analogy to MF prediction of Markov processes, our method exploits the probability integral transform based on estimated transition distribution functions to transform the Markov data to an i.i.d.~dataset. We show a non-asymptotic error bound of MDCPs unconditional coverage rate under a $\beta$-mixing condition and other standard assumptions on the kernel estimators. The asymptotic validity of the conditional prediction interval is also verified. In addition, we show that our conditional prediction interval is still asymptotically valid with Markov processes being $L^p$-$m$-approximable instead of satisfying the mixing property. Numerical simulations and real data experiments are deployed to empirically illustrate the finite-sample performance of MDCP, and compare it with the MF bootstrap prediction method.

\end{abstract} 
\noindent{\bf Keywords}:Conformal prediction; Model-free prediction; Non-parametric inference
\newpage
\input{text}

\FloatBarrier
\bibliographystyle{apalike2}
\bibliography{main.bib}
\newpage

\appendix
\input{supp}

\end{document}

%% file: text.tex
\section{Introduction}

In many applications, such as finance and engineering, it is essential to predict data based on current information. Confidence intervals based on standard $L_2$ point predictors, as well as model-based prediction intervals, have become standard statistical tools within regression frameworks in statistics and machine learning with independent and identically distributed (i.i.d.) data.  However, the i.i.d. assumption is generally invalid for time series regression estimation and prediction, in which the data dependence complicates the theoretical analysis.

Classical time series modeling often relies on linear, nonlinear, or nonparametric frameworks to capture temporal dependence and structural complexity. Common examples include autoregressive (AR), autoregressive moving average (ARMA), and their nonlinear or nonparametric extensions. For example, an AR model can be fitted via ordinary least squares methods in parametric regression problems, or via kernel smoothing or spline regression in nonparametric regression; see \cite{box1976time} for the standard Box-Jenkins method of identifying, fitting, checking and predicting such models. Besides the mean model, ARCH \citep{engle1982autoregressive} and GARCH \citep{bollerslev1986generalized} are two popular models to model the variance within financial time series. Recently, many scholars have studied constructing prediction intervals (PIs) for such models, to quantify uncertainty in forecasts under temporal dependence. For example, \cite{pascual2006bootstrap} proposed a model-based bootstrap approach to build prediction intervals for the GARCH model; \cite{chudy2020long} constructed long-term PIs for time-aggregated future values of univariate economic time series; see also \cite{guo1999multi,pascual2001effects, chen2004nonparametric,manzan2008bootstrap,pan2016bootstrap,politis2023multi,wu2024bootstrap}. However, these methods are only valid under specific model assumptions.
 Having in mind the famous saying by George E. P. Box---``All models are wrong, but some are useful'', it is desired to find an approach to prediction without restrictive model assumptions.

 \textit{Conformal Prediction} (CP) 
 was originally developed by \cite{shafer2008tutorial} as a powerful framework for predictive inference, offering rigorous, finite-sample \textit{unconditional coverage} guarantees for exchangeable data, which is a weaker assumption than i.i.d. The method constructs prediction regions that achieve valid \textit{unconditional coverage} with minimal assumptions, thereby providing a practical and robust alternative to classical statistical inference. Recent developments in statistics and machine learning have expanded this framework to regression, classification, and high-dimensional learning problems, establishing conformal prediction as a general-purpose tool for uncertainty quantification across diverse applications. \cite{romano2019conformalized} proposed a conformalized quantile regression method that provides unconditional coverage guarantees for heteroscedastic observations, while  \cite{sesia2021conformal} developed conformal prediction intervals for non-parametric regression that can automatically adapt to skewed data with both asymptotic marginal and conditional coverage. Some other scholars have made improvements or extensions, such as \cite{lei2013distribution, lei2014distribution, lei2018distribution, xu2021conformal, zaffran2022adaptive}, among others. In particular,  CP can provide the uncertainty quantification for arbitrary prediction algorithms. The better the point predictor, the shorter the prediction interval will be; see \cite{lei2018distribution,tibshirani2023conformal,angelopoulos2024theoretical} for more discussion. 


The previously cited papers on CP for regression are not Model-Free (MF) as their conformity score is usually computed based on the error term once the regression model is estimated. \cite{chernozhukov2021distributional} introduced 
Distributional Conformal Prediction (DCP) applicable to MF regression.
DCP exploits the local probability integral transform (PIT) taken by \citep{politis2013model} to map the vector of regression responses to a vector with i.i.d.~entries; 
then the DCP applies conformal prediction to them. 
An inverse transform brings this method back to the target realm of responses and their prediction.

The MF Prediction Principle  \citep{politis2015model} has also been applied to Markov process data, where the target is one-step-ahead prediction (with prediction intervals) with restrictive time series assumptions. 
In this setting, \cite{pan2015model,  pan2016bootstrap, pan2016bootstrapmarkov} employed the Rosenblatt transformation \cite{rosenblatt1952remarks} based on estimated transition distribution functions to transform the Markov data to an (approximately) i.i.d.~dataset; the latter is then amenable to resampling, leading to the MF bootstrap prediction method.

In the paper at hand, we propose to perform conformal prediction
to the above i.i.d.~dataset that is the outcome of transforming the
Markov data; an inverse transform brings us back to the 
Markov realm and the desired one-step-ahead prediction.
We will call this new approach Markov Distributional Conformal Prediction
(MDCP) because of its analogy to the DCP. 
We will also introduce an extension, namely Predictive MDCP (PMDCP),
that  employs a leave-one-out estimation approach that improves the 
finite-sample coverage rate of prediction intervals.

\subsection{Our contributions}
We summarize our contributions as follows:

\begin{itemize} 
    \item[1.] We propose the so-called Markov Distributional Conformal Prediction (MDCP) method and its variant PMDCP, with the purpose of 
construction one-step-ahead
 Prediction Intervals (PI) for Markov data. Under a $\beta$-mixing condition and other mild assumptions, we provide a non-asymptotic error bound for the unconditional coverage rate of the PIs from MDCP.  
This result helps guide the understanding of the performance of the distributional conformal prediction method with Markov processes. 
    \item[2.] We further show that the conditional coverage rate of MDCP PIs converges to the nominal level asymptotically  under the same 
 $\beta$-mixing
assumption. Moreover, we verify that the asymptotic validity is still sustained assuming that the Markov process is $L^q$-$m$-approximable instead of mixing; see Remark \ref{Remark:differentmeasure} for a discussion on the subtle difference between these two dependence measures. Our proof technique is different from the blocking technique which was applied by \cite{zheng2024conformal,oliveira2024split} to show the asymptotic validity of the conditional prediction interval based on the split conformal prediction method.
    \item[3.] We consider a variant of the MF bootstrap prediction for forecasting Markov processes, which is motivated by the success of applying predictive residuals to alleviate the finite-sample undercoverage issue of the PIs in model-based methods \citep{pan2016bootstrap}. 
Interestingly, the Markov Distributional Conformal Prediction with predictive residuals (PMDCP) does not seem to bring much advantage to the finite-sample coverage rate. However, as revealed by the simulation and real data studies, the lengths of PMDCP intervals conditional on different latest observations of the Markov process are more concentrated than those from MDCP, i.e., they have a smaller standard deviation.
    \item[4.] We derive the uniform error bound of the conditional 
cumulative distribution function (CDF) estimation based on a smoothing kernel for  Markovian data. This result is of independent interest and complements the standard nonparametric inference results for CDF estimation under 
 i.i.d.~data; see Theorem 1.4 in \cite{li2007nonparametric} for the standard result.
    \item[5.] We conduct a comparative study on the performance of the MF bootstrap and conformal techniques with simulated and real datasets. Since the MF prediction principle and the conformal prediction idea were proposed, there have been various extensions in developing new forecasting methods along these two directions, especially for dependent data. 
In addition, the practical suggestions on 
applying these methods are provided in Section \ref{sec: simu}.

\end{itemize}

The paper is organized as follows: In Section \ref{sec: DCP}, we review the framework of DCP and introduce the methodology and algorithms for MDCP and PMDCP. The algorithms of MF and predictive MF prediction methods are also given. The theoretical validation of our methods is presented in Section \ref{sec: asym}. Simulation data and real data studies are conducted in Section \ref{sec: simu}. Technical proofs and additional graphs and tables are in the Appendix.

\subsection{Notations}
In this part, we introduce some notation. For a random variable $X$ and $q\geq 1$, define $\|X\|_q = (\mathbb{E}(|X|^q))^{1/q}$. For a vector $u = (u_1,\ldots, u_d)^\top \in \mathbb{R}^d$, define $\|u\|=\|u\|_w = \sqrt{\sum_{j=1}^d u_j^2}$. Denote $[d]$ as $1,2,3,\ldots, d$ for integer sequence from 1. For two sequences of positive numbers $\{a_n\}$ and $\{b_n\}$, define $a_n \lesssim b_n$ if there exists some constant $C>0$ such that $a_n/b_n \leq C$ as $n \rightarrow \infty$, and define $a_n \asymp b_n$ if $a_n \lesssim b_n$ and $b_n \lesssim a_n$; also define $a_n =O(b_n)$ if there exists a positive constant $C>0$ such that $|a_n|\leq C|b_n|$ for all $n$ and $a_n = o(b_n) $ if $\lim_{n\rightarrow \infty} a_n/b_n =0$. For two random variable sequences $X_n$ and $Y_n$, define $X_n = O_\mathbb{P}(Y_n)$ if for any $0<\varepsilon<1$, there exists a constant $C_\varepsilon$ such that $\sup_{n} \mathbb{P}(|X_n |\geq C_\varepsilon |Y_n|)\leq \varepsilon $; and $X_n = o_\mathbb{P}(Y_n)$ if $X_n/Y_n \stackrel{P}{\rightarrow} 0$. We also use $c, C, C_1, C_2, C_3, \ldots$ to denote positive constants whose values may vary from context to context.

\section{Distributional Conformal Prediction Method}
\label{sec: DCP}
\subsection{Naive conformal prediction intervals}
The basic idea in the conformal prediction is related to sample quantiles. Let $U_1, \ldots, U_n$ be i.i.d random sample. For a given significant level $\alpha \in (0,1)$, another i.i.d sample $U_{n+1}$ 
$$
\mathbb{P}(U_{n+1}\leq \hat{q}_{1-\alpha})\geq 1-\alpha,
$$
where the sample quantile $\hat{q}_{1-\alpha}$ is defined by 
$$
\widehat{q}_{1-\alpha}= \begin{cases}U_{(\lceil(n+1)(1-\alpha)\rceil)} & \text { if }\lceil(n+1)(1-\alpha)\rceil \leq n \\ \infty & \text { otherwise }\end{cases}
$$
and $U_{(1)} \leq \cdots\leq U_{(n)}$ denote the order statistics of $U_1, \ldots, U_n$; $\lceil x \rceil$ represents the smallest integer that is greater than or equal to $x$.

\cite{lei2013distribution}(also \cite{lei2018distribution}) extended this idea to a regression problem $Y = \mu(\bm{X}) + \varepsilon$ given a dataset of $n$ i.i.d or exchangeable observations $\{(\bm{X}_i, Y_i)\}_{i=1}^n$ from a distribution $P$. They proposed the conformal prediction method, which can achieve exact finite-sample marginal (unconditional) validity under exchangeability. The naive $1-\alpha$ prediction interval for a new data point $(\bm{X}_f, Y_f)$ can be defined by
$$
\widehat{C}_{1-\alpha}^{\mathrm{naive}} = \left[\hat{\mu}(\bm{X}_f)-\hat{F}_n^{-1}(1-\alpha),\hat{\mu}(\bm{X}_f)+\hat{F}_n^{-1}(1-\alpha)\right],
$$
where $\hat{\mu}$ is an estimator of the regression function and $\hat{F}_n$ is the empirical distribution of the fitted residuals $|Y_i -\hat{\mu}(\bm{X}_i)|, i = 1, \ldots, n$ and $\hat{F}_n^{-1}(1-\alpha)$ is the $(1-\alpha)$-quantile of $\hat{F}_n$. In general, when the $\hat{F}_n$ is not invertible, the empirical quantile can also be well-defined by
$
\inf\{x\in \mathbb{R}:\hat{F}_n(x)\ge 1-\alpha\}.
$
More discussions can be seen in \cite{lehmann2005testing, lei2013distribution, lei2018distribution}.

\subsection{Model-free prediction principle}
Naive prediction intervals rely on the regression models, which limit the relationship between $\bm{X}$ and $Y$ and require the distribution assumption on $\varepsilon$. More crucially, the coverage of the prediction intervals can not be guaranteed at the nominal level under dependent or unexchangeable data. To alleviate this problem, we will introduce the distributional conformal prediction idea, which shares a similar spirit with the Model-Free (MF) prediction principle. This MF prediction principle has motivated the prediction of financial series by leveraging a specific transformation structure \citep{gulay2018comparison,chen2019optimal,wu2025garchx} and the predictions in various regression settings by involving Deep Neural Networks or assuming a specific data structure \citep{wang2022model,wu2024deep,wang2026model}.

In particular, the MF prediction principle in \cite{politis2013model} is based on probability integral transformation (PIT) to map non-i.i.d datasets to ones consisting of i.i.d variables. This framework can relax the model assumption on the dependence between $\bm{X}$ and $Y$, and it is our main focus throughout this paper. We begin with an assumption on the class of stochastic processes $\{Y_t\}_{t \geq 1}$ considered throughout this paper.

\begin{itemize}
    \item [A1] $\{Y_t\}_{t\geq1}$ forms a strictly stationary and geometrically ergodic Markov process of order $p$ on $(\mathbb{R}, \mathcal{B})$ where $\mathcal{B}$ is a Borel-$\sigma$ algebra over $\mathbb{R}$.
\end{itemize}

\begin{remark}
We should mention that the geometrically ergodic Markov property implies that the series is geometrically $\beta$-mixing with the corresponding dependence coefficient converging to 0 at least exponentially fast as the sample size converges to 0; see Remark \ref{Remark:differentmeasure} for more discussions on different dependence measures. Throughout this paper, we focus on the Markovian data, which can simplify the later transformation process utilized in the MDCP process. From \cite{rosenblatt1952remarks}, the transformation can be more general for non-Markovian. Therefore, the methodologies developed in this paper have the capacity to be extended to a more general scenario. 
\end{remark}

Let $\bm{X}_{t-1} = (Y_{t-1}, \ldots, Y_{t-p})^\top$, given $\bm{X}_{t-1} =\bm{x}$, we can define the conditional distribution of $Y_t$ as
$F(y|\bm{x})= \mathbb{P}(Y_t \leq y|\bm{X}_{t-1}=\bm{x}).$
Now, we consider the data pair $\{(\bm{X}_{t-1}, Y_t)\}_{t=1}^n$ using the PIT. For some positive integer $i\leq p$, define the distributions with partial conditioning as 
$F_i(y |\bm{x}) = \mathbb{P} (Y_t \leq y |\bm{X}_{t-1}^{(i)} = \bm{x})$
where $\bm{X}_{t-1}^{(i)}=(Y_{t-1}, \ldots, Y_{t-i})'$ and $\bm{x} \in \mathbb{R}^i$. Also, we can denote the unconditional distribution as $F_0(y|\bm{X}) = \mathbb{P}(Y_t \leq y)$ which does not depend on $\bm{X}$.

A transformation from our Markov($p$) dataset $Y_1, \ldots, Y_n$ to an $i.i.d$ dataset $V_1, \ldots, V_n$ can be constructed as follows, 
\begin{equation*}
    V_1 = F_0(Y_1|\bm{X}); V_2 = F_1(Y_2|\bm{X}_1^{(1)});\cdots;V_p =F_{p-1}(Y_p|\bm{X}_{p-1}^{(p-1)})
\end{equation*}
\begin{equation*}
    \text{ and } V_t = F(Y_{t}|\bm{X}_{t-1}) \text{ for }t=p+1, p+2, \ldots, n.
\end{equation*}
This is the Rosenblatt transformation that is used in Ch. 8 of \cite{politis2015model} book. Note that the transformation from the vector $(Y_1, \ldots, Y_m)$ to the vector $(V_1, \ldots, V_m)$ is one-to-one and invertible for any $m \leq n$. It implies that $V_1, \ldots, V_n$ are i.i.d Uniform$(0,1)$. This is also an application of \cite{rosenblatt1952remarks} transformation in the case of Markov($p$).
To apply the transformation mentioned before, we also make the following assumption:
\begin{itemize}
    \item [A2] The marginal density $f_x$ of $\bm{X}$ is positive and has continuous second-order derivatives, and the conditional CDF $F(y|\bm{x})$ has continuous second-order derivatives with respect to $\bm{x}$. $F$ is Lipschitz continuous in $y$ for all $\bm{x}$; the conditional density $f(y|\bm{x})$ is positive for all $y$ and $\bm{x}$. 
\end{itemize}
Based on the PIT transformation, we define the kernel-based smoothed conditional CDF $\hat{F}(Y_t|\bm{X}_{t-1})$ with the pair data $\{(\bm{X}_{t-1}, Y_t)\}_{t=p+1}^n$ by
\begin{equation}
\label{Eq:kernelEst}
 \hat{F}(Y_t|\bm{X}_{t-1}) =\frac{\frac{1}{n-p}\sum_{i=p+1}^{n} W_h\left(\bm{X}_{i-1}, \bm{X}_{t-1}\right) K\left(\frac{Y_t-Y_i}{h_0}\right)}{\overline{W}_h(\bm{X}_{t-1})};
\end{equation}
and the transformed data $\{V_t: p+1 \leq t\leq n\}$ can be calculated by $
    V_t = \hat{F}(Y_t|\bm{X}_{t-1}); $
where $W_h\left(\bm{X}_{i-1}, \bm{X}_{t-1}\right)=\prod_{s=1}^p \frac{1}{h_s}w\left(\frac{X_{i-1,s}-X_{t-1,s}}{h_s}\right)$ and $\overline{W}_h(\bm{X}_{t-1})=\frac{1}{n-p}\sum_{i= p+1}^{n} $ $W_h\left(\bm{X}_{i-1}, \right.$ $\left.\bm{X}_{t-1}\right)$; $w(\cdot)$ is a univariate, symmetric kernel density function with bounded support; $X_{i-1,s}$ and $X_{t-1,s}$ are the $s$-th coordinate of $\bm{X}_{i-1}$ and $\bm{X}_{t-1}$; $K$ is a smooth distribution function that is strictly increasing, a proper CDF; $h$ and $h_0$ are bandwidths. For simplicity, we take $h=h_1=h_2=\cdots=h_p$.

Besides, bootstrap, originally introduced by \cite{efron1992bootstrap} for i.i.d data, as a nonparametric method, can estimate the quantile of roots between the target and the estimated sample functional by generating a large number of samples. Under this spirit, we summarize the MF bootstrap prediction procedure from \cite{pan2016bootstrapmarkov} in Algorithm \ref{algo:mfb}.
\begin{algorithm}[!ht]
\caption{Model-Free (MF) Bootstrap Method for Markov$(p)$}
\begin{algorithmic}[1]
\Require Data $ \{(\bm{X}_{t-1},Y_t)\}_{t=1}^n$. Some large positive integer $M$. Bootstrap size $B$. Significant level $\alpha$.
\State Use \eqref{Eq:kernelEst} to obtain the $V_t = \hat{F}(Y_{t}|\bm{X}_{t-1})$ for $t = p+1, \ldots, n$.
\State Calculate $\hat{Y}_{n+1}$, the predictor of $Y_{n+1}$ by the sample mean
$$
\hat{Y}_{n+1} = \frac{1}{n-p}\sum_{t=p+1}^n \hat{F}^{-1}(V_t|\bm{X}_n).
$$
\For {step $i \in \{1,\ldots, B\}$}
\State Resample randomly with replacement the transformed data $V_{p+1}, \ldots, V_n$ to create the data $V_{-M}^*, V_{-M+1}^*, \ldots, V_0^*, V_1^*, \ldots, V_n^*, V_{n+1}^*$.
\State Draw $Y^*_{-M}, \ldots, Y_{-M+p-1}^*$ from any consecutive $p$ values of the dataset $(Y_1, \ldots, Y_n)$; Let $\bm{X}^*_{-M+p-1} = (Y^*_{-M+p-1}, \ldots, Y^*_{-M})$. Especially, for $p=1$, $X^*_{-M} = Y^*_{-M}$ is drawn from $(Y_1, \ldots, Y_n)$.
\State Generate $Y_t^* = \hat{F}^{-1}(V_t^*|\bm{X}_{t-1}^*)$ for $t = -M +p, \ldots, n$.
\State Calculate the bootstrap future value $Y^*_{n+1} = \hat{F}^{-1}(V^*_{n+1}|\bm{X}_n)$.
\State Calculate the bootstrap mean $\hat{Y}^*_{n+1} = \frac{1}{n-p}\sum_{t=p+1}^n \hat{F}^{*-1}(V_t^*|\bm{X}_n)$ where 
$$
 \hat{F}^*(Y|\bm{X}) =\frac{\frac{1}{n-p}\sum_{t=p+1}^{n} W_h\left(\bm{X}_{t-1}^*, \bm{X}\right) K\left(\frac{Y-Y_t^*}{h_0}\right)}{\overline{W}_h(\bm{X})}.
$$
\State Calculate the bootstrap root $Y_{n+1}^* -\hat{Y}^*_{n+1} $.
\EndFor
\State The $B$ bootstrap root replicates are collected in the form of an empirical distribution whose $\alpha$-quantile is denoted $q(\alpha)$.
\Statex \textbf{Output:} The $(1-\alpha)\times 100\%$ equal-tailed predictive interval for $Y_{n+1}$ is given by
$$
\hat{C}_{1-\alpha}^{\mathrm{MF}}(\bm{X}_{n})=\left[\hat{Y}_{n+1}+q(\alpha/2), \hat{Y}_{n+1}+q(1-\alpha/2)\right].
$$
\end{algorithmic}
\label{algo:mfb}
\end{algorithm}

\begin{remark}
    In step 8 of Algorithm \ref{algo:mfb}, we could generate a larger number of $V^*_t$ in alignment with the idea of bootstrap estimation. However, we obey the procedure taken by \cite{pan2015model, pan2016bootstrapmarkov}, in which the confidence interval is built for the conditional mean of the Markov process.
\end{remark}
\begin{remark}
    Original from $Y_t$ for $t = 1, \ldots, n$, we may have the transformed data $V_1, \ldots, V_n$, but here we just consider full paired data $(Y_t, \bm{X}_{t-1})$ rather than truncated $(Y_t, \bm{X}_{t-1}^{(i)})$ for $i = 0, \ldots, p-1$. Although we lose $p$ number of transformed variables, $(V_{p+1}, \ldots, V_n)$ will be more appropriate to do further predictions rather than $V_1, \ldots, V_n$ since the first $p$ transformation may not be accurate.
\end{remark}

\begin{remark}
    $M$ in the Algorithm \ref{algo:mfb} is the number of ``padded history'' in Step 4. In general, this technique is commonly used to generate a stationary time series, so-called ``warm-up'' or ``burn-in'' sequences. It also guarantees the sequence cannot be affected by the initial state. 
\end{remark}
Since $Y_{n+1}$ is not observed, the above estimated conditional distribution for $Y_{n+1}$ treats the pair $\left(X_n, Y_{n+1}\right)$ as an ``out-of-sample'' pair. To mimic this situation in the MF set-up, we can use the trick of \cite{pan2016bootstrap}, i.e., to calculate an estimate of $F(Y_t|\bm{X}_{t-1})$ based on a dataset that excludes the pair $\left(\bm{X}_{t-1}, Y_t\right)$ for $t=p+1, \cdots, n$. In other words, define the ``leave-one-out'' estimator
\begin{equation}
\label{Eq:KernelEstPred}
\hat{F}_t(Y_t|\bm{X}_{t-1}) =\frac{\frac{1}{n-p}\sum_{i=p+1, i \neq t }^{n} W_h\left(\bm{X}_{i-1}, \bm{X}_{t-1}\right) K\left(\frac{Y_t-Y_i}{h_0}\right)}{\tilde{W}_h(\bm{X}_{t-1})} ,\text{ for } t = p+1, \ldots, n;
\end{equation}
where $\tilde{W}_h(\bm{X}_{t-1}) = \sum_{k = p+1, k\neq t}W_h(\bm{X}_{k-1}, \bm{X}_{t-1})$
to construct the transformed data:
$$
\widetilde{V}_t=\hat{F}_t(Y_t|\bm{X}_{t-1}) \text{ for } t= p+1, \ldots, n.
$$
Here, the $\widetilde{V}_t$ serve as the analogs of the predictive residuals studied in \cite{pan2016bootstrap} in a nonparametric autoregression setup. Algorithm \ref{algo:pmfb} tabulated the procedure to construct the Predictive Model-free Bootstrap prediction interval; see Section 6.2 of \cite{pan2016bootstrapmarkov}.
\begin{algorithm}[!ht]
\caption{Predictive Model-Free (PMF) Bootstrap Method for Markov$(p)$}
\begin{algorithmic}[1]
\Require  Data $ \{(\bm{X}_{t-1},Y_t)\}_{t=1}^n$. Some large positive integer $M$. Bootstrap size $B$. Significant level $\alpha$.
\State Use \eqref{Eq:KernelEstPred} to obtain the $\tilde{V}_t = \hat{F}_t(Y_{t}|\bm{X}_{t-1})$ for $t = p+1, \ldots, n$.
\State Calculate $\hat{Y}_{n+1}$ by step 2 in Algorithm \ref{algo:mfb} with $\widetilde{V}_t$ series.
\For {step $i \in \{1,\ldots, B\}$}
\State Resample randomly with replacement the transformed data $\tilde{V}_{p+1}, \ldots, \tilde{V}_n$ to create the data $\tilde{V}_{-M}^*, \tilde{V}_{-M+1}^*, \ldots, \tilde{V}_0^*, \tilde{V}_1^*, \ldots, \tilde{V}_n^*, \tilde{V}_{n+1}^*$.
\State Draw $\tilde{Y}^*_{-M}, \ldots, \tilde{Y}_{-M+p-1}^*$ from any consecutive $p$ values of the dataset $(Y_1, \ldots, Y_n)$; Let $\tilde{\bm{X}}^*_{-M+p-1} = (\tilde{Y}^*_{-M+p-1}, \ldots, \tilde{Y}^*_{-M})$. Especially, for $p=1$, $\tilde{\bm{X}}^*_{-M} = \tilde{Y}^*_{-M}$ is drawn from $(Y_1, \ldots, Y_n)$.
\State Calculate the bootstrap future value $\tilde{Y}^*_{n+1} = \hat{F}^{-1}(\tilde{V}^*_{n+1}|\bm{X}_n)$.
\State Calculate the bootstrap mean $\hat{Y}^*_{n+1}$ as similar to steps 1 to 6 and step 8 in Algorithm \ref{algo:mfb} with $\tilde{Y}^*_{t}$ series.
\State Calculate the bootstrap root $\tilde{Y}^*_{n+1} -\hat{Y}^*_{n+1} $.
\EndFor
\State The $B$ bootstrap root replicates are collected in the form of an empirical distribution whose $\alpha$-quantile is denoted $q(\alpha)$.
\Statex\textbf{Output:} The $(1-\alpha)\times 100\%$ equal-tailed predictive interval for $Y_{n+1}$ is given by
$$
\hat{C}_{1-\alpha}^{\mathrm{PMF}}(\bm{X}_{n})=\left[\hat{Y}_{n+1}+q(\alpha/2), \hat{Y}_{n+1}+q(1-\alpha/2)\right].
$$
\end{algorithmic}
\label{algo:pmfb}
\end{algorithm}

In Section \ref{sec: asym}, we will give the technical Lemma \ref{lem: mse of uhat} to verify the asymptotic validity of the MF/PMF prediction interval. In short, the crucial condition is the uniform consistency property of the conditional CDF estimator, where the mixing condition is unavoidable in the proof. However, there are no non-asymptotic results for MF bootstrap based methods. In contrast, conformal prediction enjoys exact finite-sample unconditional coverage under the i.i.d. setting, but fails once dependence is present. To bridge this gap, we integrate these two approaches: using the MF idea to adapt CP to data with Markov dependence.


\subsection{Distributional conformal prediction}

We first introduce the classical CP framework in the \cite{shafer2008tutorial}. The framework provides a general methodology for constructing PIs with unconditional coverage guarantees under the assumption of data exchangeability. For each candidate $Y_f \in \mathbb{R}$, we construct an augmented regression estimator $\hat{\mu}$ using the augmented data $\{(\bm{X}_t, Y_t)\}_{t=1}^n \cup (\bm{X}_f, Y_f)$. Now define the fitted residuals
$$
R_{Y_f, t} = |Y_t -\hat{\mu}(\bm{X}_t)|,\quad t = 1, \ldots, n; \qquad R_{Y_f, n+1} =|Y_f - \hat{\mu}(\bm{X}_f)|,
$$
and the rank of $R_{y, n+1}$ by
$$
\pi(Y_f) = \frac{1}{n+1}\sum_{t=1}^{n+1}\mathbbm{1}\{R_{Y_f, t} \leq R_{Y_f, n+1}\} = \frac{1}{n+1}+\frac{1}{n+1}\sum_{t=1}^{n}\mathbbm{1}\{R_{Y_f, t} \leq R_{Y_f, n+1}\}.
$$
By exchangeability of the data points and symmetry of $\hat{\mu}$, the ranks $\pi(Y_f) $ evaluated at $Y_f$ provides $\pi(Y_f)$ is discrete uniform over $\{1/(n+1), 2/(n+1), \ldots, 1\}$. The $1-\alpha$ prediction interval can be computed as 
$$
\hat{C}_{1-\alpha}^{\mathrm{CP}}(\bm{X}_f) = \{Y_f \in \mathbb{R}:\pi(Y_f) \leq \frac{1}{n+1}\lceil(1-\alpha)(n+1)\rceil\}.
$$

Now we use the MF principle to adapt CP to Markov processes, resulting in a so-called Markov Distributional Conformal Prediction (MDCP), which can overcome the limitation of standard CP.

For any candidate values $y \in \mathbb{R}$, our ultimate goal is to test whether it belongs to the prediction inference of $Y_f$ conditional on a new data point $\bm{X}_f$. In the dependent process case, we can take $\bm{X}_f = \bm{X}_{n}$ and $Y_f = Y_{n+1}$. Due to the stationarity, we denote the distribution of $Y_t$ conditional on $\bm{X}_{t-1}$ by $F(Y_t|\bm{X}_{t-1})$ for any $t$. Under the distributional CP idea, based on the augmented data $\{(\bm{X}_{t-1}, Y_t)\}_{t=p}^n \cup (\bm{X}_{n}, Y_{n+1})$, we estimate the ground-truth conditional distribution function $F(\cdot |\bm{X})$ by $\hat{F}^{(Y_{n+1})}(\cdot| \bm{X})$ with candidate $Y_{n+1}$. In practice, we can consider a $\mathcal{Y}_\text{trial}$, s.t., all plausible values $y\in \mathcal{Y}_\text{trial}$ are collected to make the prediction interval. Given $Y_{n+1} := y$, based on the CDF estimator \eqref{Eq:kernelEst} to get the corresponding ranks $\{\hat{U}^{(Y_{n+1})}_t\}_{t=p+1}^{n+1}$ which are just conditional CDF estimators with augmented data, i.e.,
\begin{align}
\label{Eq: ranks}
    \hat{U}_{t}^{(Y_{n+1})} = \begin{cases}
        \hat{F}^{(Y_{n+1})}(Y_t|\bm{X}_{t-1}) & \text{  if  } p+1 \leq t \leq n\\
         \hat{F}^{(Y_{n+1})}(Y_{n+1}|\bm{X}_{t-1}) & \text{  if  } t = n+1\\
    \end{cases};
\end{align}
where
$$
\hat{F}^{(Y_{n+1})}(Y_t|\bm{X}_{t-1}) = \frac{\frac{1}{n-p + 1}\left(\sum_{i=p+1}^{n} W_h\left(\bm{X}_{i-1}, \bm{X}_{t-1}\right) K\left(\frac{Y_t-Y_i}{h_0}\right) + W_h(\bm{X}_{n},\bm{X}_{t-1})K\left(\frac{Y_t-Y_{n+1}}{h_0}\right)\right)}{ \frac{1}{n-p + 1} \left(\sum_{i=p+1}^{n} W_h\left(\bm{X}_{i-1}, \bm{X}_{t-1}\right) + W_h(\bm{X}_{n},\bm{X}_{t-1})\right)}.
$$
Then, we can define the conformity scores as $\hat{V}_t^{(Y_{n+1})} = |\hat{U}_t^{(Y_{n+1})} - 1/2|$, and the $p$-values are obtained by 
\begin{equation}\label{Eq:p-v}
\hat{p}(Y_{n+1}) = \frac{1}{n-p+1}\sum_{t=p+1}^{n+1}\mathbbm{1}\{\hat{V}_t^{(Y_{n+1})} \geq \hat{V}_{n+1}^{(Y_{n+1})}\}.
\end{equation}
Then, the p-value can be used to guide the building of PI. We summarize this approach in Algorithm \ref{algo: FDCPMarkov}.
\begin{algorithm}
\caption{MDCP: Full DCP algorithm for Markov$(p)$}\label{algo: FDCPMarkov}
\begin{algorithmic}[1]
\Require (a) Data $\{\bm{X}_{t-1},Y_t\}_{t=p+1}^n$; (b) Significant level $\alpha \in (0,1)$ (c) Test values $\mathcal{Y}_\text{trial}$

\For{$Y_{n+1}\in \mathcal{Y}_\text{trial}$}
\State Define the augmented data $\{\bm{X}_{t-1}, Y_t\}_{t=p+1}^n \cup (\bm{X}_{n}, Y_{n+1})$.
\State Choose optimal bandwidth $h$ and $h_0$ to calculate ranks defined as \eqref{Eq: ranks}.
\State Compute $\hat{p}(Y_{n+1})$ as in \eqref{Eq:p-v}.
\EndFor
\Statex \textbf{Output}: Return $(1-\alpha)$ PI $\widehat{C}_{1-\alpha}^{\mathrm{MDCP}}(\bm{X}_n) = \{Y_{n+1} \in \mathcal{Y}_\text{trial}: \hat{p}(Y_{n+1}) >\alpha\}.$
\Statex
\textbf{Comments:} We can obtain the closed form of PI: $[\min (\widehat{C}_{1-\alpha})(\bm{X}_n), \max(\widehat{C}_{1-\alpha}(\bm{X}_n))]$.
\end{algorithmic}
\end{algorithm}

\begin{remark}[On Bandwidth Choice]
  When $p = 1$, the optimal smoothing of $\hat{F}^{(Y_{n+1})}(u|x)$ with respect to Mean Squared Error (MSE) in \cite{li2007nonparametric} requires that $h_0 = O_\mathbb{P}(n^{-2/5})$ and $h = O_\mathbb{P}(n^{-1/5})$. For general $p$, $h = O_\mathbb{P}(n^{-1/{(4+p)}})$ and $h_0 = O_\mathbb{P}(n^{-2/(4+p)})$. In the practice, we chose $h$ and $h_0$ through cross-validation simultaneously with {\tt R} function {\tt npcdistbw} from {\tt np} package. This function can also estimate $h$ and $h_0$ under multivariate CDFs. To simplify the technical proof, we further take $h_0 \asymp h$.
\end{remark}

\begin{remark}[$\mathcal{Y}_\text{trial}$ Choice]
\label{rem: choice}
We can choose $\mathcal{Y}_\text{trial}$ to be a fine grid between $- \max_{1\le t\le n} |Y_t |$ and $\max_{1\le t \le n} |Y_t |$. Besides, we can get a fine grid between the training dataset of $\bm{X}$. If $|Y_t|$ is bounded and satisfies Assumption A1, we can have $\mathbb{P}(|Y_{n+1}|\geq \max_{1\leq t\leq n}|Y_t|) \leq n^{-\alpha}$ for some $\alpha \geq 1$. Specifically, if the data is exchangeable, we can have the exact probability upper bound $\mathbb{P}(|Y_{n+1}|\geq \max_{1\leq t\leq n}|Y_t|) \leq n^{-1}$ \citep{chen2016trimmed}.
\end{remark}

We also adapt the predictive principle to ``leave-one-out'' estimator and construct the predictive conditional ranks based on \eqref{Eq:KernelEstPred}
\begin{align}
\label{Eq: pred ranks}
    \tilde{U}_{t}^{(Y_{n+1})} = \begin{cases}
        \hat{F}^{(Y_{n+1})}_t (Y_t|\bm{X}_{t-1}) & \text{  if  } p+1 \leq t \leq n\\
         \hat{F}^{(Y_{n+1})}_t (Y_{n+1}|\bm{X}_{n}) & \text{  if  } t = n+1\\
    \end{cases},
\end{align}
where 
$
\hat{F}_t^{(Y_{n+1})}(Y_t|\bm{X}_{t-1}) = \frac{\frac{1}{n-p + 1} \sum_{i=p+1, i\neq t}^{n+1} W_h\left(\bm{X}_{i-1}, \bm{X}_{t-1}\right) K\left(\frac{Y_t-Y_i}{h_0}\right) }{ \frac{1}{n-p + 1} \sum_{i=p+1, i\neq t}^{n+1} W_h\left(\bm{X}_{i-1}, \bm{X}_{t-1}\right) }.
$
Similarly, we can obtain the $p$-values $\hat{p}(Y_{n+1}) = \frac{1}{n-p+1}\sum_{t=p+1}^{n+1} \mathbbm{1}\{\tilde{V}_t^{(Y_{n+1})}\ge \tilde{V}_{n+1}^{(Y_{n+1})}\}$ with the conformity scores $\tilde{V}_t^{(Y_{n+1})} = |\tilde{U}_t^{(Y_{n+1})} -1/2|$. This leads to the Predictive DCP algorithm for the Markov process, the so-called PMDCP method.

\begin{algorithm}[!ht]
\caption{Predictive MDCP (PMDCP) algorithm for Markov$(p)$}\label{algo: PredDCPMarkov}
\begin{algorithmic}[1]
\Statex
The algorithm is identical to Algorithm \ref{algo: FDCPMarkov} except that ranks change to \eqref{Eq: pred ranks}.
\Statex
\textbf{Comments:} We can obtain the closed form of PI: $\hat{C}^{\mathrm{PMDCP}}_{1-\alpha}=[\min (\widehat{C}_{1-\alpha})(\bm{X}_n), \max(\widehat{C}_{1-\alpha}(\bm{X}_n))]$.
\end{algorithmic}
\end{algorithm}

\section{Asymptotic Validity of PIs}
\label{sec: asym}
For i.i.d case or exchangeable data $\{(\bm{X}_t,Y_t)\}_{t=1}^n \cup(\bm{X}_f, Y_f)$, the coverage guarantee for the naive conformal prediction \citep{tibshirani2023conformal} provides the unconditional validity, i.e.,
$$
1-\alpha \leq \mathbb{P}(Y_f \in \widehat{C}_{1-\alpha}^{\mathrm{naive}}(\bm{X}_f)) \leq 1-\alpha +\frac{1}{n+1}, 
$$
and asymptotic conditional validity in the distributional conformal prediction style
$$
\mathbb{P}(Y_f \in \widehat{C}_{1-\alpha}^{\mathrm{naive}}(\bm{X}_f) |\bm{X}_f) \rightarrow 1-\alpha  \text{ as } n \rightarrow \infty;
$$

We next investigate whether the two guarantees still hold in the case of predicting a Markov process with MDCP and PMCDP. First, we introduce some regular assumptions:
 \begin{itemize}
    \item [A3] Let $N = n-p+1$, $h_0 \asymp h \to 0$, $Nh^{p}\to \infty$ and $ph^2 \to 0$ when $n \to \infty$.
     \item [A4] The domains of $Y$ and $\bm{X}$ are bounded by $C_M>0$, i.e., $(\bm{X},Y)\in \mathcal{X}:=[-C_M,C_M]^{p+1}$ with positive $C_M$, an arbitrarily large constant.
    \item [A5] $w(\cdot)$ and $K(\cdot)$ are positive, differentiable, and bounded on their whole domains with 
    \begin{itemize}
        \item [(i)] $\int w(v)dv =1$, $\int vw(v)dv = 0$ and $\int v^2 w(v)dv =C_w<\infty.$
        \item [(ii)] $\int K'(z)dz = 1$, $\int zK'(z)dz = 0$ and $\int z^2K'(z)dz = C_K<\infty.$
    \end{itemize}
     Furthermore, we assume $\prod_{s=1}^p w(\cdot)$ and $K(\cdot)$ satisfy Lipschitz continuity with constant $L_{w}$ and $L_{K}$, respectively. 
\end{itemize}

\begin{remark}
Assumption A4 states that the domains of $Y$ and each dimension of $\bm{X}$ are bounded by an arbitrarily large constant $C_M$. This is for technical convenience to simplify the proof. In fact, the kernel CDF $K$ and density $w$ can be allowed to the full unbounded space $\mathbb{R}$. In practice, $C_M$ is chosen as $\max_{1\leq i\leq n} |Y_i|$ which has been discussed in $\mathcal{Y}_{\mathrm{trial}}$ choice from Remark \ref{rem: choice}.
\end{remark}
The crucial step in the DCP-type method is the estimation of the CDF. Traditionally, CDF is estimated by the empirical distribution function, which is not smooth, though it has good statistical properties. The usual method for nonparametric PDF or CDF estimation is the kernel method. There is a considerable literature to study kernel-type CDF estimation. \cite{azzalini1981note} studied univariate CDF estimation, and \cite{jin1999kernel} extended the one-dimensional estimation to the multivariate case. They provided the order of convergence rate of one smoothing parameter in estimating CDF as $O(n^{-1/3}).$ \cite{liu2008kernel} proved the smooth multivariate kernel estimator converges to the true CDF at a rate of $O(n^{-1/2}\log n)$ by using a diagonal bandwidth matrix. Recently, \cite{mansouri2024nonparametric} proposed the bivariate kernel CDF estimator with Birnbaum–Saunders distribution in \cite{mombeni2021asymmetric} to obtain consistent estimators free from boundary bias.

We then focus on the uniform asymptotic bound for kernel-based CDF on the augmented dependent data. We first give the following key Lemma \ref{lem: mse of uhat}, which is the basis for showing the asymptotic validity of MF/PMF and MDCP/PMDCP prediction intervals. The proof is inspired by Theorem 6.2 of \cite{li2007nonparametric}. However, it has a fundamental difference since the data are dependent in our case, so that some concentration inequality for multivariate dependent data pairs, e.g., Bernstein’s inequality, is necessary to show the uniform consistency of our smoothing kernel estimator.

\begin{lemma}
\label{lem: mse of uhat}
    Under A1 to A5, we have
    $$
    \sup_{\bm{x},y}|\widehat{F}(y|\bm{x}) - F(y|\bm{x})| =O\left(\frac{\log N}{\sqrt{Nh^{p}}}+h^2\right),
    $$ 
    with probability at least $1 - S_n$, where $S_n$ is one appropriate sequence converges to 0; see Remark \ref{Remark:Sn} for the discussion about this sequence.
\end{lemma}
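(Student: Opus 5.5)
We write $\widehat{F}(y|\bm{x}) = \widehat{m}(y,\bm{x})/\widehat{f}(\bm{x})$, where
$$\widehat{m}(y,\bm{x}) = \frac{1}{N}\sum_i W_h(\bm{X}_{i-1},\bm{x})K\!\left(\frac{y-Y_i}{h_0}\right), \qquad \widehat{f}(\bm{x}) = \overline{W}_h(\bm{x}),$$
and we set $m(y,\bm{x}) = F(y|\bm{x})f_x(\bm{x})$. Using the ratio decomposition
$$\widehat{F}-F = \frac{(\widehat{m}-m) - F(\widehat{f}-f_x)}{\widehat{f}},$$
together with $F\le 1$, the claim reduces to three facts:
\begin{itemize}
\item[(a)] $\sup_{y,\bm{x}}|\widehat{m}-m| = O(\log N/\sqrt{Nh^p}+h^2)$ with high probability;
\item[(b)] the same bound holds for $\sup_{\bm{x}}|\widehat{f}-f_x|$;
\item[(c)] $\inf_{\bm{x}}\widehat{f}(\bm{x})$ is bounded away from zero on the event in (b).
\end{itemize}
Fact (c) follows from (b), since A2 and the compactness in A4 give $\inf f_x>0$ and $Nh^p\to\infty$. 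The remaining work is to split each of $\widehat{m}-m$ and $\widehat{f}-f_x$ into a bias part and a stochastic part.

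\emph{Bias.} By stationarity (A1), $\mathbb{E}\widehat{m}(y,\bm{x}) = \mathbb{E}[W_h(\bm{X}_0,\bm{x})K((y-Y_1)/h_0)]$. Condition on $\bm{X}_0$ and change variables $\bm{u}=(\bm{X}_0-\bm{x})/h$. Then $\mathbb{E}[K((y-Y_1)/h_0)\mid \bm{X}_0] = \int F(y-h_0z\mid\bm{X}_0)K'(z)\,dz$. Taylor-expand in $z$ using the moment conditions A5(ii), and expand $F(\cdot\mid\bm{x}+h\bm{u})f_x(\bm{x}+h\bm{u})$ to second order using A2 and A5(i). The first-order terms vanish by symmetry, leaving a uniform $O(h^2+h_0^2)=O(h^2)$ because $h_0\asymp h$. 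The derivatives are bounded uniformly on the compact set $\mathcal{X}$ by continuity.

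One caveat: A2 states only Lipschitz continuity of $F$ in $y$, which yields $O(h_0)$ rather than $O(h_0^2)$. I will assume, implicitly as the paper does, enough $y$-smoothness of $F$ (for instance a bounded $\partial_y f(y|\bm{x})$) to reach order $h_0^2$. The same argument without $K$ gives $\sup|\mathbb{E}\widehat{f}-f_x| = O(h^2)$.

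\emph{Stochastic part, and the main obstacle.} The remaining task is a uniform bound on $\widehat{m}-\mathbb{E}\widehat{m}$ (and likewise $\widehat{f}-\mathbb{E}\widehat{f}$) for dependent data.
\begin{itemize}
\item[Step 1: grid.] Cover $\mathcal{X}=[-C_M,C_M]^{p+1}$ with a grid of $L_N = N^{c}$ points, with mesh $\delta_N = N^{-c'}$ chosen to be much smaller than $h^{p+1}/\sqrt{Nh^p}$. The Lipschitz constants $L_w$ and $L_K$ from A5 show that both $\widehat{m}$ and its mean move by at most $O(\delta_N h^{-p-1})$ between grid neighbours, which is negligible.
\item[Step 2: one grid point.] Each summand $Z_i = W_h(\bm{X}_{i-1},\bm{x})K((y-Y_i)/h_0)$ is bounded by $Ch^{-p}$ and has variance $O(h^{-p})$. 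By A1 the process is geometrically $\beta$-mixing, so a Bernstein inequality for geometrically mixing sequences applies, in the Merlevède--Peligrad--Rio form or via blocking with Berbee coupling using blocks of length $q\asymp\log N$. It gives
$$\mathbb{P}\left(|\widehat{m}-\mathbb{E}\widehat{m}|>\varepsilon\right)\le C\exp\!\left(-\frac{cN\varepsilon^2}{h^{-p}+\varepsilon h^{-p}\log^2 N}\right).$$
\item[Step 3: union bound.] Take $\varepsilon = C\log N/\sqrt{Nh^p}$. The exponent is then of order $\log^2 N$ (up to the $\log^2 N$ term, which is harmless when $\varepsilon\log^2N\lesssim 1$), and this beats the union bound over the $N^c$ grid points.
\end{itemize}
The failure probability $S_n$ therefore collects three pieces: $N^c\exp(-c\log^2N)$, the Berbee coupling error $N\beta(q)$ (exponentially small), and the event where (c) fails.

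Combining the bias and stochastic bounds for $\widehat{m}$ and $\widehat{f}$ in the ratio decomposition gives the stated rate with probability at least $1-S_n$. The delicate points are matching the Bernstein exponent against the grid cardinality, and checking that the $\log N$ in the rate (rather than $\sqrt{\log N}$) leaves enough slack to absorb the mixing blocking loss. That Bernstein step is where I expect the main difficulty.
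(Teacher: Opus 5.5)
Your proposal follows essentially the same route as the paper's proof. The paper bounds $1/\overline{W}_h$ by a constant and splits the error into a numerator stochastic term $Q_1$, a denominator stochastic term $Q_3$, and a Taylor-expansion bias term $Q_2=O(h^2)$; it handles $Q_1$ and $Q_3$ with a finite grid, Lipschitz control off the grid, Bosq's blocking exponential inequality for mixing sequences at each grid point, and a union bound with $\eta=\log N/\sqrt{Nh^p}$. The differences are cosmetic: your choice of mixing Bernstein inequality and block length, your explicit justification that $\widehat{f}$ is bounded below (which the paper simply asserts), and your caveat about second-order smoothness in $y$, which the paper also uses implicitly by invoking $\sup_{\bm{x},y}|f'(y|\bm{x})|<\infty$ in its bias computation.
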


\begin{remark}[Discussion of sequence $S_n$]\label{Remark:Sn}
Denote $h^p = N^{-\tau}$ for $0<\tau<1$. From the proof of Lemma \ref{lem: mse of uhat}, $S_n$ is taken as $1 - 2J_nN^{-C_9} - 2J_nN^{-C_9^{'}}$, where $C_9>0$ and $\tau<$ are appropriate constants s.t., $\frac{p+1}{2}+\tau\frac{(p+2)(p+1)}{2p}-C_9<0$. Similarly, $C_9^{'}>0$ is taken as another appropriate constant s.t., $\frac{p}{2}+\frac{p\tau}{2}-C_9^{'}<0$. This is possible when $Nh^p$ is large enough. 
    
\end{remark}

We have the transformed variables $\hat{U}^{(Y_{n+1})}_t= \hat{F}^{(Y_{n+1})}(Y_t|\bm{X}_{t-1})$ for $t = p+1, \ldots, n+1,$ and $\hat{V}^{(Y_{n+1})}_t = |\hat{U}^{(Y_{n+1})}_t -1/2|.$ On the other hand, we define the oracle variables $U_t = F(Y_t|\bm{X}_{t-1})$ and $V_t = |U_t-1/2|$, which are Uniform$(0,1)$ and Uniform$(0,1/2)$ random variables, respectively. Now we provide the following theorem for the marginal coverage guarantee of DCP-type prediction intervals with Markov$(p)$ data, given the nominal level $1-\alpha.$ 

\begin{theorem}[Finite sample unconditional validity]
\label{thm: marginal}
    Under A1 to A5, given $\alpha \in (0, 1)$, taking $5C_\delta = O\left( \frac{\log N}{\sqrt{Nh^{p}}} +h^2\right)$, we have 
    $$
    |\mathbb{P}(Y_{n+1}\in \hat{C}^\text{MDCP}_{1-\alpha}(\bm{X}_n)) - (1-\alpha)| \leq 24C_\delta + 4\exp(-8NC_\delta^2) + 2S_N;
    $$
where $S_N$ is the appropriate sequence defined in Lemma \ref{lem: mse of uhat} and converges to 0 as $n\to \infty$. By our assumption, $C_\delta$ is $o(1)$.
\end{theorem}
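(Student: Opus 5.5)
Throughout, $\hat{U}_t$ and $\hat{V}_t$ denote the augmented quantities $\hat{U}^{(Y_{n+1})}_t$ and $\hat{V}^{(Y_{n+1})}_t$ evaluated at the true $Y_{n+1}$. The plan is to compare the event $\{Y_{n+1}\in\hat C^{\mathrm{MDCP}}_{1-\alpha}(\bm{X}_n)\}$ with the corresponding event built from the oracle scores $V_t=|U_t-1/2|$, where $U_t=F(Y_t|\bm{X}_{t-1})$ are i.i.d.\ Uniform$(0,1)$ by the Rosenblatt transformation, and the $V_t$ are i.i.d.\ Uniform$(0,1/2)$. With $N=n-p+1$ augmented points, membership in the interval is the event $\hat p(Y_{n+1})>\alpha$. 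This is equivalent to $\hat V_{n+1}$ not exceeding roughly the $\lceil(1-\alpha)N\rceil$-th order statistic of $\{\hat V_t\}_{t=p+1}^{n+1}$, up to ties.

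Let $\hat G$ and $G^\ast$ be the empirical CDFs of $\{\hat V_t\}$ and $\{V_t\}$ over $t=p+1,\dots,n+1$, and let $G(v)=2v$ on $[0,1/2]$. Then $\hat p(Y_{n+1})=1-\hat G(\hat V_{n+1}^-)$.

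\textbf{Step 1 (good event).} Apply Lemma \ref{lem: mse of uhat} to the augmented estimator $\hat F^{(Y_{n+1})}$. It differs from $\hat F$ only by one extra summand of order $1/(Nh^p)$ in numerator and denominator, and this is absorbed into the rate. So on an event $E_1$ of probability at least $1-S_N$,
$$\max_t|\hat U_t-U_t|\le C_\delta,$$
hence $|\hat V_t-V_t|\le C_\delta$ for all $t$ including $t=n+1$. I will use the Lemma's bound with a constant absorbed so that the deviation is at most $C_\delta$; the factor $5$ in $5C_\delta$ gives room for this.

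\textbf{Step 2 (sandwich).} On $E_1$, monotonicity gives
$$G^\ast(v-2C_\delta)\le \hat G(v\pm C_\delta)\le G^\ast(v+2C_\delta).$$
Using the oracle value $V_{n+1}$, this yields
$$\{1-G^\ast((V_{n+1}+2C_\delta)^-)>\alpha\}\subseteq\{\hat p>\alpha\}\subseteq\{1-G^\ast((V_{n+1}-2C_\delta)^-)>\alpha\},$$
up to $E_1^c$.

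\textbf{Step 3 (DKW).} Replace $G^\ast$ by $G$ using the Dvoretzky--Kiefer--Wolfowitz inequality for i.i.d.\ uniforms. Taking the deviation threshold proportional to $C_\delta$ gives a term of the form $2\exp(-8NC_\delta^2)$ on each side, which accounts for $4\exp(-8NC_\delta^2)$. On the DKW event, $\sup|G^\ast-G|\le 2C_\delta$. The two outer events then become $\{2V_{n+1}<1-\alpha\pm c\,C_\delta\}$ for an explicit constant $c$.

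\textbf{Step 4 (evaluate).} Since $2V_{n+1}\sim$ Uniform$(0,1)$, these two events have probabilities $1-\alpha\pm cC_\delta$. Collecting the constants, counting $S_N$ once for each inclusion and one DKW failure per side, gives
$$\bigl|\mathbb P(\cdot)-(1-\alpha)\bigr|\le 24C_\delta+4e^{-8NC_\delta^2}+2S_N.$$

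\textbf{Main obstacle.} Step 1 is the crux. The conformity scores are evaluated at data points used in the estimator, and the augmented point $(\bm{X}_n,Y_{n+1})$ depends on the whole sample. Lemma \ref{lem: mse of uhat} is uniform in $(\bm{x},y)$, so evaluation at random points is harmless. The remaining issue is perturbation control: the extra augmented summand changes $\hat F$ by $O(1/(Nh^p))$ uniformly, because $\overline W_h$ is bounded below on the good event. That requires a lower bound on $f_x$ (Assumption A2 together with the compact domain of A4). Handling the discreteness and ties of the $p$-value cutoff adds at most $O(1/N)$, which is dominated by $C_\delta$.
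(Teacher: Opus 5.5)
Your proof is correct, and it gives a sharper bound than the one stated. It uses the same three ingredients as the paper: Lemma~\ref{lem: mse of uhat} for an event of probability at least $1-S_N$ on which all scores are within $C_\delta$ of the oracle scores $V_t$, the DKW inequality for the i.i.d.\ oracle scores, and uniformity of $2V_{n+1}$. It assembles them differently.

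The paper applies $|\mathbb{P}(X\le z)-\mathbb{P}(Y\le z)|\le \mathbb{P}(|X-Y|>\delta)+\mathbb{P}(|X-z|\le\delta)$ with $X=\hat G(\hat V_{n+1})$, $Y=G(V_{n+1})$ and $\delta=12C_\delta$; this is where $24C_\delta$ comes from. It then splits $|\hat G(\hat V_{n+1})-G(V_{n+1})|$ into the pieces $\hat G-\tilde G$, $\tilde G-G$ and $G(\hat V_{n+1})-G(V_{n+1})$, where $\tilde G$ is your $G^\ast$. On the good event it proves $\sup_v|\hat G-\tilde G|\le 2\sup_v|\tilde G-G|+4C_\delta$. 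In that bookkeeping both the good event and the DKW event are charged twice, producing $4e^{-8NC_\delta^2}+2S_N$.

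Your monotone sandwich works as follows: on the good event, $G^\ast((v-C_\delta)^-)\le\hat G(v^-)\le G^\ast((v+C_\delta)^-)$, and you then set $v=\hat V_{n+1}$. This traps the acceptance event exactly between two oracle events, so each failure event is charged once. Steps 3--4 with $G(v)=2v$ then give
\[
\bigl|\mathbb{P}(Y_{n+1}\in\hat C^{\mathrm{MDCP}}_{1-\alpha}(\bm{X}_n))-(1-\alpha)\bigr|\le 6C_\delta+2e^{-8NC_\delta^2}+S_N ,
\]
which implies the theorem a fortiori. Your final ``collecting constants'' step is therefore a deliberate loosening, not a count you need to justify; a single DKW event serves both sides.

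What each route buys: the paper's version isolates a modular bound on the distance between the estimated and oracle score CDFs, which it reuses when discussing conditional coverage. Yours is shorter and gives better constants.

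Two of your precautions are unnecessary:
\begin{itemize}
\item At the true $Y_{n+1}$ the augmented sample is just $(\bm{X}_{t-1},Y_t)_{t=p+1}^{n+1}$. The paper's proof of Lemma~\ref{lem: mse of uhat} is already written for these $N=n-p+1$ pairs, so the $O(1/(Nh^p))$ perturbation argument is not needed.
\item Your inclusions are exact with left limits, so ties cost nothing and the $O(1/N)$ remark can be dropped.
\end{itemize}
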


\begin{remark}
   The bounds for the difference in Theorem \ref{thm: marginal} consist of three terms: the approximation error $C_\delta$ in the non-conformity score distribution, and the approximation and stochastic errors between the ground truth and the estimated distribution function. The $C_\delta$ is from the proof of Lemma \ref{lem: mse of uhat}, which tends to zero as $N$ tends to infinity. The second term is the exponential decay term, which becomes a polynomial decay with log-factors. The third term is related to the high probability condition in Lemma \ref{lem: mse of uhat} defined in Remark \ref{Remark:Sn}.
\end{remark}

We should mention that the unconditional and conditional coverage rates are asymptotically equivalent for the DCP method in the regression case when the data are i.i.d after manipulating the prediction algorithm slightly. However, at first glance, it is not straightforward to extend the unconditional coverage result to the conditional case for the dependent data; see Remark \ref{Remark:DCPdiscuss} for a discussion.

\begin{remark}\label{Remark:DCPdiscuss}
    We first notice that $U_t=F\left(Y_t | \bm{X}_{t-1}\right)$ for $t = p+1, \ldots, n+1$ is independent of $\bm{X}_{t-1}$. Since $V_t = |U_t -1/2|$, $V_{n+1}$ is also independent of $\bm{X}_n$ and
$$
\mathbb{P}\left(G\left(V_{n+1}\right) \leq \alpha \mid \bm{X}_{n}\right)=\mathbb{P}\left(G\left(V_{n+1}\right) \leq \alpha\right).
$$
Since $G(\cdot)$ is the distribution function of $V_{n+1}$ and is a continuous function, we have that $\mathbb{P}\left(G\left(V_{n+1}\right) \leq \alpha\right)=\alpha$. 
Similar to the decomposition \eqref{eq: decom} in the proof of Lemma \ref{lem: mse of uhat}, we can define a piecewise function $\tilde{G}$ generated from ground-truth transformed random variable $V_t$,
$\tilde{G}(v) = \frac{1}{N}\sum_{t=p+1}^{n+1}\mathbbm{1}\{V_t <v\},$ we have $  | \mathbb{P}\left(Y_{n+1} \in \widehat{C}_{1-\alpha}^{\text {MDCP }}\left(\bm{X}_{n}\right) \mid \bm{X}_{n}\right) - 1-\alpha|$ as the error of the empirical conditional coverage rate of our MDCP prediction interval compared to the nominal level. If the independent data is considered here, we can take the leave-one-out idea to make transformation without $\bm{X}_n$, so $\mathbb{P}\left(Y_{n+1} \in \widehat{C}_{1-\alpha}^{\text {MDCP }}\left(\bm{X}_{n}\right) \mid \bm{X}_{n}\right) = \mathbb{P}\left(Y_{n+1} \in \widehat{C}_{1-\alpha}^{\text {MDCP }}\left(\bm{X}_{n}\right)\right)$. However, for the dependent data, the dependence can not be dampened by removing only one observation to do the transformation. 
\end{remark}

In the recent literature about conformal prediction, the theoretical guarantee of the conditional coverage rate for dependent data is developed in alignment with the split conformal prediction idea. \cite{chernozhukov2021distributional} first introduced and studied split distributional conformal prediction for exchangeable data. \cite{zheng2024conformal} considered the Markovian Data and created separation between the training and calibration sets so that the temporal dependence in the data can be reduced; this idea is also similar to the ``Big Block – Small Block'' technique described in Paradigm 9.3.1. of \cite{politis2019time}. The block technique is also taken by \cite{oliveira2024split} to build prediction intervals for $\beta$-mixing dataset. \cite{xu2021conformal} studied conformal inference for stationary and ergodic processes by separating regression data into two parts for nonparametric estimation and obtaining prediction residuals for the prediction interval.

Rather than relying on the blocking technique, we observe that the conditional prediction interval from MDCP is asymptotically valid, leveraging the definition of $\beta$-mixing. We summarize this interesting finding in Proposition \ref{Proposition:asymptoticMDCP}.

\begin{proposition}[The asymptotic validity of prediction interval based on MDCP ]\label{Proposition:asymptoticMDCP}
Under the assumptions required by Theorem \ref{thm: marginal}, the MDCP prediction interval is asymptotically valid, i.e., 
    $$
    |\mathbb{P}(Y_{n+1}\in \hat{C}^\text{MDCP}_{1-\alpha}(\bm{X}_n)|\bm{X_n}) - (1-\alpha)| \overset{p}{\to} 0 ~\text{for any}~\bm{X_n}.
    $$
\end{proposition}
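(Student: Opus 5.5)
The proof runs in three steps. First, coverage is reduced to an oracle event driven by the true transform. Second, the conditioning on $\bm{X}_n$ is shown to matter only through a vanishing total-variation term, using $\beta$-mixing. Third, everything is assembled on the high-probability event of Lemma \ref{lem: mse of uhat}.

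\textbf{Step 1: reduction to the oracle event.} By construction of Algorithm \ref{algo: FDCPMarkov}, $Y_{n+1}\in\hat C^{\mathrm{MDCP}}_{1-\alpha}(\bm{X}_n)$ if and only if $\hat p(Y_{n+1})>\alpha$, which is the statement that $\hat V^{(Y_{n+1})}_{n+1}$ is not among the top $\alpha$ fraction of the augmented scores. On the event $\Omega_n$ of Lemma \ref{lem: mse of uhat}, applied to the augmented-data estimator (adding one point changes $\hat F$ by $O(1/(Nh^p))$, which is negligible), we have
\[
\max_t|\hat V_t^{(Y_{n+1})}-V_t|\le \sup_{\bm{x},y}|\hat F-F|\le C_\delta .
\]
Here $C_\delta$ is the quantity of Theorem \ref{thm: marginal}, and $\mathbb{P}(\Omega_n^c)\le S_N\to0$. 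Exactly as in the proof of Theorem \ref{thm: marginal}, with $\tilde G(v)=N^{-1}\sum_{t}\mathbbm{1}\{V_t<v\}$, the coverage event is therefore sandwiched between the oracle events
\[
\{\tilde G(V_{n+1}-2C_\delta)\le 1-\alpha\}\quad\text{and}\quad\{\tilde G(V_{n+1}+2C_\delta)\le 1-\alpha\},
\]
up to the set $\Omega_n^c$.

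\textbf{Step 2: replacing $\tilde G$ by the uniform CDF $G$ conditionally on $\bm{X}_n$.} I will show that
\[
\sup_v|\tilde G(v)-G(v)|\to0
\]
in probability conditionally on $\bm{X}_n$. This is the main obstacle, because $V_{p+1},\dots,V_n$ are i.i.d.\ only unconditionally, while $\bm{X}_n$ is a function of the later $Y$'s. The plan is to use $\beta$-mixing rather than blocking.

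Split off the last $m_n$ indices, with $m_n\to\infty$ and $m_n/N\to0$. The remaining $U_{p+1},\dots,U_{n-m_n}$ are i.i.d.\ uniform. By the definition of $\beta$-mixing, their joint law conditional on $\bm{X}_n$ is within total variation $\beta(m_n-p)$ of the unconditional law, in $L^1(\bm{X}_n)$. Under A1 this decays geometrically.

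Hence, for any event $A$ depending on the early $V$'s,
\[
\mathbb{E}\,\bigl|\mathbb{P}(A\mid\bm{X}_n)-\mathbb{P}(A)\bigr|\le 2\beta(m_n-p).
\]
Applying this with the DKW event $\{\sup_v|\tilde G_{\mathrm{early}}-G|>\epsilon\}$ gives the conditional Glivenko–Cantelli result. The last $m_n$ terms contribute only $m_n/N$. Markov's inequality then converts the $L^1$ bound into convergence in probability of the conditional quantities.

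\textbf{Step 3: assembling the bound.} On the good event, the coverage condition becomes
\[
G(V_{n+1}\pm 2C_\delta)\le 1-\alpha+o_{\mathbb{P}}(1).
\]
Since $U_{n+1}=F(Y_{n+1}\mid\bm{X}_n)$ is exactly uniform and independent of $\bm{X}_n$ (Remark \ref{Remark:DCPdiscuss}), and $G$ is Lipschitz with constant 2,
\[
\mathbb{P}\bigl(G(V_{n+1})\le 1-\alpha\pm\eta\mid\bm{X}_n\bigr)=1-\alpha\pm\eta .
\]
A subtlety is that $V_{n+1}$ and $\tilde G$ are not conditionally independent given $\bm{X}_n$. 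This is not a problem because I will bound $\sup_v|\tilde G-G|$ uniformly, so the dependence does not enter.

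Collecting the pieces,
\[
|\mathbb{P}(Y_{n+1}\in\hat C\mid\bm{X}_n)-(1-\alpha)|\le 4C_\delta+\epsilon+\frac{m_n}{N}+\mathbb{P}(\Omega_n^c\mid\bm{X}_n)+\mathbb{P}\Bigl(\sup_v|\tilde G-G|>\epsilon\Bigm|\bm{X}_n\Bigr).
\]
The two conditional probabilities on the right tend to 0 in $L^1$ by Lemma \ref{lem: mse of uhat} and Step 2, hence in probability. Letting $\epsilon\downarrow0$ then gives the claim.
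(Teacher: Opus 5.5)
Your argument is correct, and it takes a genuinely different route from the paper's.

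\textbf{The paper's route.} The paper works with the interval endpoints. It refits the kernel estimator on the truncated sample $Y_1,\ldots,Y_{n-p-S}$ with $S=o(n)$. It fixes the candidate pair $(\bm{x}_n,y_f)$, so that $\widehat{L}^{\text{MDCP}}_{1-\alpha,S}$ is measurable with respect to an early $\sigma$-field. Using the symmetry of the $\beta$-mixing coefficient and Markov's inequality, it shows $\mathbb{P}(\widehat{L}^{\text{MDCP}}_{1-\alpha,S}\le c\mid\bm{X}_n)\overset{p}{\to}\mathbb{P}(\widehat{L}^{\text{MDCP}}_{1-\alpha,S}\le c)$. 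It makes this uniform in $c$ by a finite cover of $[-C_M,C_M]$. It then asserts that the truncated and full endpoints agree asymptotically, and identifies conditional with unconditional coverage so that Theorem~\ref{thm: marginal} finishes.

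\textbf{Your route.} You never look at the endpoints. You reuse the sandwich from the proof of Theorem~\ref{thm: marginal} on the PIT scale, so the only quantity whose conditional law given $\bm{X}_n$ matters is $V_{n+1}$, which is exactly independent of $\bm{X}_n$. All remaining dependence is absorbed by sup-norm bounds: between $\tilde G$, $V_{n+1}$ and $\bm{X}_n$, and through the augmented estimator.

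\textbf{What each buys.} Your route gives an explicit bound of the form $4C_\delta+\epsilon+o_{\mathbb{P}}(1)$. It also avoids two steps the paper leaves unproved:
\begin{itemize}
\item the asymptotic equivalence of truncated and full endpoints;
\item the passage from conditional laws of $\widehat{L},\widehat{U}$ to conditional coverage. That passage also involves the conditional law $F(\cdot\mid\bm{X}_n)$ of $Y_{n+1}$, not just the marginal coverage of Theorem~\ref{thm: marginal}.
\end{itemize}
The paper's truncation-plus-decoupling template, in turn, is what it reuses for Theorem~\ref{thm: conditional}, with $m$-dependence in place of mixing.

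\textbf{Your Step 2 can be deleted.} It uses the same decoupling idea as the paper (drop the last few indices, invoke $\beta$-mixing), but applied to the oracle empirical CDF, and it is unnecessary. Since $V_{p+1},\ldots,V_{n+1}$, including $V_{n+1}$, are i.i.d., DKW gives $\mathbb{P}(A_n)\to0$ for $A_n=\{\sup_v|\tilde G(v)-G(v)|>\epsilon\}$. Then $\mathbb{E}[\mathbb{P}(A_n\mid\bm{X}_n)]=\mathbb{P}(A_n)$ yields $\mathbb{P}(A_n\mid\bm{X}_n)\overset{p}{\to}0$ by Markov's inequality. This is the same tower-property argument you already use for $\Omega_n^c$. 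So what you flagged as the main obstacle is not one, and your proof shows that mixing enters only through Lemma~\ref{lem: mse of uhat}.
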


Next, based on another dependence measure rather than the mixing property, we apply a new technique to prove that the conditional MDCP coverage rate is also asymptotically valid. To achieve this, we add one more assumption A$1^{\prime}$ as follows.
\begin{itemize}
    \item [A$1^{\prime}$] $\{Y_t\}$ forms an $L^2$-$m$-approximable series, which is defined in \cite{hormann2010weakly}; also the Markov order $p$ is less than $m$.
\end{itemize}
where the $L^q$-$m$-approximable series implies the representation,

$$
Y_t=f\left(\varepsilon_t, \varepsilon_{t-1}, \ldots\right),~\text{for each}~t;
$$
where the $\varepsilon_i$ are i.i.d. elements taking values in a real measurable space $S$ in the context of this paper, and $f$ is a measurable function $f: S^{\infty} \rightarrow S$. Moreover, we take an independent copy $\left\{\varepsilon_k^{(t)}\right\}$ of $\left\{\varepsilon_k\right\}$ for each $t$. Then, we can consider the so-called $L^q$-$m$-approximator: $
Y_t^{(m)}=f\left(\varepsilon_t, \varepsilon_{t-1}, \ldots, \varepsilon_{t-m+1}, \varepsilon_{t-m}^{(t)}, \varepsilon_{t-m-1}^{(t)}, \ldots\right)$, which satisfies several following properties:
\begin{itemize}
    \item The series $\{Y_t^{(m)}\}$ are strictly stationary and $m$-dependent;
    \item $Y_t^{(m)}$ is equal in distribution to $Y_t$ for each $t$;
    \item $\sum_{m=1}^{\infty}\left( \mathbb{E}|Y_m-Y_m^{(m)}|^q\right)^{1/q}<\infty$.
\end{itemize}

Before going to the statement of our theorem, Remark \ref{Remark:differentmeasure} gives a short introduction about different types of conditions to measure the dependence. 

\begin{remark}[Different dependence measures]\label{Remark:differentmeasure}
The first and maybe most common approach to measure the dependency is mixing property, e.g., $\alpha$- and $\beta$-mixing; see Section \ref{Appendix:Lemma} for the definition of these two mixing properties; see the relationship between different mixing conditions from \cite{bradley2005basic}. Besides mixing conditions, the namely $\nu$-stable process has been defined by \cite{bierens1983uniform}, sharing a similar idea with Section 21 of \cite{billingsley1968convergence}. Subsequently, \cite{potscher1997dynamic} proposed a more general $L_q$-Approximable idea, which is more general than $\nu$-stability. Recently, \cite{hormann2010weakly} further generalized the previous results to obtain the $L^q$-$m$-approximable process. However, the  $L^q$-$m$-approximable property is not comparable to the mixing conditions. For example, the AR(1) process $Y_t= \rho Y_{t-1}+\varepsilon_t; 0 \leq\rho \leq 1/2,$ with independent Bernoulli innovations, does not have the $\alpha$-mixing property; see \cite{andrews1984non} for the proof. In addition, Theorem 17.3.3. of \cite{ibragimov1975independent} gave a condition s.t. the Gaussian process is strongly mixing. In other words, it is likely that an AR model with normal innovation is still not strongly mixing. However, the AR(1) process with independent Bernoulli innovations is $L^q$-$m$-approximable obviously. Thus, $L^q$-$m$-approximable property does not imply the mixing conditions. On the other hand, the $L^q$-$m$-approximable property is acquired by restricting the dependent data that possesses a specific data-generating structure. Thus, the $L^q$-$m$-approximable and mixing conditions are not comparable directly.
\end{remark}

To show the asymptotically conditional validity of MDCP with $L^q$-$m$-approximable property, we still need to quantify the ``dependence'' of the series with the following assumption:

\begin{itemize}
    \item [A6] For the series $\{Y_t\}$, we denote $\sum_{m=1}^{\infty}\left( \mathbb{E}|Y_m-Y_m^{(m)}|^q\right)^{1/q} = \sum_{m=1}^{\infty}\delta(m)<\infty$. We require that $\delta(m)$ converges to 0 fast as $m\to\infty$ s.t., $\delta(m) = o(h^{p+1})$ and $m = o(n)$.
\end{itemize}

In this paper, we consider $L^2$-$m$-approximable series, and the following theorem verifies the asymptotic validity of the conditional prediction interval from MDCP. 
\begin{theorem}[Asymptotically conditional validity]
\label{thm: conditional}
    Under A$1^{\prime}$ and A2 to A6, given $\alpha \in (0, 1)$, taking $\{Y_t^{(m)}\}$ as the $L^2$-$m$-approximator of the original series with the latest $p$ values replaced with the values from the original series, we have 
    $$
    |\mathbb{P}(Y_{n+1}\in \hat{C}^\text{MDCP}_{1-\alpha}(\bm{X}_n)|\bm{X_n}) - (1-\alpha)| \overset{p}{\to} 0.
    $$
\end{theorem}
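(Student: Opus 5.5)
The plan is to reduce the conditional statement to the unconditional machinery of Theorem \ref{thm: marginal}. The reduction rests on the $m$-dependent approximator $\{Y_t^{(m)}\}$, which replaces the mixing argument of Proposition \ref{Proposition:asymptoticMDCP}. We keep the last $p$ values $Y_{n-p+1},\dots,Y_n$ from the original series, so the conditioning vector $\bm{X}_n$ is unchanged. The MDCP interval is then built from the surrogate sample $\{(\bm{X}^{(m)}_{t-1},Y^{(m)}_t)\}$. By $m$-dependence, the surrogate observations with index $t\le n-m-p$ are independent of $\bm{X}_n$. Only $O(m)=o(n)$ indices (A6) can carry dependence on $\bm{X}_n$, and each contributes at most $O(1/N)$ to the $p$-value (\ref{Eq:p-v}) and to the kernel estimator.

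\textbf{Step 1: conditional validity for the surrogate.} Let $\hat C^{(m)}$ denote the interval computed from the surrogate series. Write its miscoverage event through the oracle scores $V_t=|U_t-1/2|$ and the empirical function $\tilde G$ of Remark \ref{Remark:DCPdiscuss}. On the event of Lemma \ref{lem: mse of uhat}, which holds for the surrogate because it is stationary with the same marginals and transition (up to the coupling error handled in Step 2), we have $\sup_t|\hat V_t-V_t|\le C_\delta$. The coverage event is therefore sandwiched between $\{\tilde G(V_{n+1}\pm 2C_\delta)\ \lessgtr\ \alpha\}$. Because $V_{n+1}$ is independent of $\bm{X}_n$, and the bulk of $\tilde G$ is independent of $\bm{X}_n$ up to $O(m/N)$ terms, the conditional probability given $\bm{X}_n$ equals the unconditional one up to $O(C_\delta+m/N)$. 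Theorem \ref{thm: marginal}'s bound (DKW-type term $4\exp(-8NC_\delta^2)$ plus $S_N$) then gives $|\mathbb{P}(Y_{n+1}\in\hat C^{(m)}\mid\bm{X}_n)-(1-\alpha)|\to0$ in probability.

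\textbf{Step 2: coupling the surrogate to the original series.} This is the main obstacle. We must bound $\sup_{\bm{x},y}|\hat F(y|\bm{x})-\hat F^{(m)}(y|\bm{x})|$. By the Lipschitz conditions in A5, each summand of (\ref{Eq:kernelEst}) moves by at most $C\,h^{-(p+1)}(\|\bm{X}_{i-1}-\bm{X}^{(m)}_{i-1}\|+|Y_i-Y^{(m)}_i|)$. The factor $h^{-p}$ comes from $W_h$, and the factor $h_0^{-1}\asymp h^{-1}$ comes from $K$. The denominator is handled the same way after bounding $\overline W_h$ away from zero, which follows from A2 and Lemma \ref{lem: mse of uhat}. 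Averaging and taking $L^2$ norms, the sup difference is $O_{\mathbb{P}}(p\,\delta(m)/h^{p+1})$. This is $o_{\mathbb{P}}(1)$ exactly because A6 assumes $\delta(m)=o(h^{p+1})$. Hence the scores $\hat V_t$ and $\hat V_t^{(m)}$ differ uniformly by $o_{\mathbb{P}}(1)$.

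I then convert this score perturbation into a coverage perturbation via a margin argument. The two intervals can disagree on $Y_{n+1}$ only if $V_{n+1}$ lies within $o_{\mathbb{P}}(1)+2C_\delta$ of the empirical $(1-\alpha)$-quantile of the scores. That event has vanishing conditional probability, since $V_{n+1}$ is Uniform$(0,1/2)$ and independent of $\bm{X}_n$, and its density is bounded. To pass from an $o_{\mathbb{P}}(1)$ bound to a conditional-probability bound I would use Markov's inequality on $\mathbb{E}[\,\cdot\mid\bm{X}_n]$, which converges to zero in probability. Combining Steps 1 and 2 with the triangle inequality gives the claim. The delicate part is that the coupling error and the Lemma \ref{lem: mse of uhat} event must hold uniformly over the trial values $Y_{n+1}\in\mathcal{Y}_{\rm trial}$. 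The augmented point adds only an $O(1/(Nh^p))$ perturbation, which I absorb into $C_\delta$.
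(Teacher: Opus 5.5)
Your architecture is the paper's own. You couple the estimators through Lipschitz kernels, giving $\sup_{\bm{x},y}|\hat F(y|\bm{x})-\hat F^{(m)}(y|\bm{x})|=O_{\mathbb{P}}(\delta(m)/h^{p+1})=o_{\mathbb{P}}(1)$ by A6. You then delete the last $m+p+1$ surrogate points, use $m$-dependence to decouple the remaining block from $\bm{X}_n$, and reduce to the unconditional result. Your margin argument is a cleaner substitute for the paper's indicator sandwich.

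\textbf{The decoupling step fails.} The paper's proof makes the same assertion, so this gap is shared. Once you splice the original $Y_{n-p+1},\dots,Y_n$ into the surrogate, $\bm{X}_n$ is a function of the whole innovation past $(\varepsilon_s)_{s\le n}$. That past includes the $\varepsilon_s$ with $s\le n-m-p$ that drive $Y^{(m)}_t$ for $t\le n-m-p$. So ``the surrogate observations with index $t\le n-m-p$ are independent of $\bm{X}_n$'' is false, and the spliced series is not $m$-dependent. Already for a causal AR(1) $Y_t=\phi Y_{t-1}+\varepsilon_t$ with $\phi\neq 0$ and innovation variance $\sigma^2$, one gets $\mathrm{Cov}(Y_n,Y^{(m)}_t)=\sigma^2\phi^{n-t}\sum_{j=0}^{m-1}\phi^{2j}\neq 0$.

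Independence does hold for the unspliced vector $\bm{X}^{(m)}_n$. However, you cannot pass from conditioning on $\bm{X}^{(m)}_n$ to conditioning on $\bm{X}_n$ using only $\|\bm{X}_n-\bm{X}^{(m)}_n\|_2\to 0$. What matters is how much $\sigma(\bm{X}_n)$ reveals about the remote block, and $L^2$-$m$-approximability says nothing about that. In the Bernoulli AR(1) of Remark \ref{Remark:differentmeasure} with $\rho>0$, $Y_n$ determines the entire past. That example violates A2, but it shows that A$1'$ and A6 carry no such information. The missing ingredient is a bound on a $\beta$-type coefficient $\mathbb{E}\sup_B|\mathbb{P}(B\mid\bm{X}_n)-\mathbb{P}(B)|$ over events $B$ of the remote block. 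That is essentially the mixing input of Proposition \ref{Proposition:asymptoticMDCP}, which A$1'$ was meant to replace.

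\textbf{Step 1 also runs two different score sequences together.} The oracle scores $V_t=|F(Y_t\mid\bm{X}_{t-1})-1/2|$ of the original chain are i.i.d.\ Uniform$(0,1/2)$, with $V_{n+1}$ independent of $\bm{X}_n$. That is what the DKW bound, Theorem \ref{thm: marginal}, and your margin argument use. But these scores are not $m$-dependent, so ``the bulk of $\tilde G$ is independent of $\bm{X}_n$ up to $O(m/N)$ terms'' fails for them.

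Conversely, the surrogate is $m$-dependent but does not have transition $F$, since the law of $(\bm{X}^{(m)}_{t-1},Y^{(m)}_t)$ differs from that of $(\bm{X}_{t-1},Y_t)$. It is also not Markov, so its PIT scores are not i.i.d. Moreover, you invoke Lemma \ref{lem: mse of uhat} for the surrogate and for bounding $\overline{W}_h$ below. That lemma is proved under the geometric $\beta$-mixing of A1, which is not assumed here.

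A smaller point: one deleted observation can move $\hat F(y\mid\bm{x})$ by $O(1/(Nh^p))$, not $O(1/N)$. Discarding $m$ of them therefore needs $m=o(Nh^p)$ or a localization argument, and A6's $m=o(n)$ supplies neither.
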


\begin{remark}[Discussion of $\{Y_t^{(m)}\}$ series]
    To simplify the proof, the latest $p$ values of $\{Y_t^{(m)}\}$ are replaced by the values from the original series, so that the conditional prediction on two series is given the same latest information, which provides the convenience to the proof. This replacement does change the property of $\{Y_t^{(m)}\}$. It will still be $m$-dependent and strictly stationary.
\end{remark}

Finally, we wrap up this section with a proposition.

\begin{proposition}[The asymptotic equivalence between MDCP and PMDCP]

We claim that all the above theoretical results will be held with the PMDCP technique since the leave-one-out CDF estimator converges to the standard one in a deterministic way, assuming the domain of the series is compact; see more discussion of the leave-one-out non-parametric estimator from \cite{pan2016bootstrap,politis2023multi}.

\end{proposition}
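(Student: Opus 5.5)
The statement to justify is that every result above (Lemma \ref{lem: mse of uhat}, Theorem \ref{thm: marginal}, Proposition \ref{Proposition:asymptoticMDCP} and Theorem \ref{thm: conditional}) carries over to PMDCP. I would reduce all of them to one deterministic comparison: on the compact domain $\mathcal{X}$ of A4, the leave-one-out estimator $\hat{F}_t^{(Y_{n+1})}$ and the full augmented estimator $\hat{F}^{(Y_{n+1})}$ differ, uniformly in $t$ and in the candidate $Y_{n+1}$, by a quantity of order $(Nh^p)^{-1}$. That quantity is dominated by the rate $\frac{\log N}{\sqrt{Nh^p}}+h^2$ of Lemma \ref{lem: mse of uhat}, so every subsequent argument goes through with only constants changed.

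\textbf{Step 1 (deterministic comparison).} Write $A_t=\sum_{i}W_h(\bm{X}_{i-1},\bm{X}_{t-1})K\left(\frac{Y_t-Y_i}{h_0}\right)$ and $B_t=\sum_i W_h(\bm{X}_{i-1},\bm{X}_{t-1})$, summing over the augmented index set. The leave-one-out version deletes the diagonal term $i=t$, which equals $h^{-p}w(0)^pK(0)$ in the numerator and $h^{-p}w(0)^p$ in the denominator. The elementary identity
$$\frac{A_t-a}{B_t-b}-\frac{A_t}{B_t}=\frac{A_t b- aB_t}{B_t(B_t-b)}$$
together with $0\le K\le 1$, $0\le A_t\le B_t$ and $a\le b$ gives
$$\left|\hat{F}_t^{(Y_{n+1})}(Y_t|\bm{X}_{t-1})-\hat{F}^{(Y_{n+1})}(Y_t|\bm{X}_{t-1})\right|\le \frac{b}{B_t-b}=\frac{w(0)^p}{h^pB_t-w(0)^p}.$$
This bound is deterministic given a lower bound on $B_t$.

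\textbf{Step 2 (lower bound on the kernel mass).} By A2 and A4, $f_x$ is bounded below by some $c_f>0$ on $\mathcal{X}$. The denominator part of Lemma \ref{lem: mse of uhat}, i.e.\ the uniform consistency of $\overline{W}_h$ to $f_x$, already holds on the event of probability at least $1-S_n$. On that same event, $B_t/N\ge c_f/2$ uniformly in $t$. Substituting into Step 1 yields a uniform difference of $O((Nh^p)^{-1})=o\left(\frac{\log N}{\sqrt{Nh^p}}\right)$. No new probabilistic input is needed, so $S_n$ is unchanged; this is the precise sense of ``deterministic'' in the statement.

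\textbf{Step 3 (propagation).} By the triangle inequality, Lemma \ref{lem: mse of uhat} holds for $\hat{F}_t$ with the same rate. In the proof of Theorem \ref{thm: marginal}, the scores $\tilde V_t$ are within $O(C_\delta)$ of the oracle $V_t$ uniformly, exactly as the $\hat{V}_t$ are. Hence the same argument applies with $C_\delta$ enlarged by a constant factor: compare the rank $p$-value with $G(V_{n+1})$ and use the DKW-type concentration of the empirical CDF of the $V_t$. The conditional results, Proposition \ref{Proposition:asymptoticMDCP} and Theorem \ref{thm: conditional}, use the CDF estimator only through this uniform approximation, so they transfer verbatim.

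\textbf{Main obstacle.} The delicate step is Step 2. The bound needs $B_t$ bounded below \emph{uniformly} over all $t$, including $t=n+1$ with an arbitrary candidate $Y_{n+1}$, and near the boundary of $\mathcal{X}$, where kernel mass can be lost. I would handle this by invoking the strict positivity of $f_x$ in A2 on the compact set, and by noting that adding or deleting a single point changes $B_t/N$ by only $O((Nh^p)^{-1})$.
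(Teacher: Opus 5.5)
Your proposal is correct and is essentially the paper's own argument made explicit. The paper only asserts that, on the compact domain, the leave-one-out estimator converges deterministically to the full one; your Steps 1--2 supply exactly this through the removed diagonal term and the bound $w(0)^p/(h^pB_t-w(0)^p)=O\bigl((Nh^p)^{-1}\bigr)$ on the event where $\overline{W}_h$ is bounded below. Two minor refinements. First, Step 2 needs only a one-sided bound $\overline{W}_h\ge c>0$, not consistency to $f_x$, which can fail at the edge of $\mathcal{X}$; this one-sided bound does hold there, because a symmetric $w$ keeps at least (asymptotically) a $2^{-p}$ fraction of its mass inside the cube. Second, Theorem \ref{thm: conditional} does not assume A1, so Lemma \ref{lem: mse of uhat} is unavailable there; you should instead obtain the lower bound from the $m$-dependent approximator, combined with the uniform-in-$\bm{x}$ Lipschitz comparison $\overline{W}_h-\overline{W}_h^{(m)}=o_{\mathbb{P}}(1)$ used in that theorem's proof.
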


\begin{remark}
    In general, it is impossible to obtain the conditional coverage except for some specific structure assumptions \citep{vovk2012conditional,foygel2021limits}. \cite{chernozhukov2021distributional} bridged the gap between the unknown conditional coverage guarantee and non-exchangeable data, even when the CDF estimation is misspecified.
    \cite{gibbs2025conformal} reformulated the conditional coverage as coverage over a class of covariate shifts and obtained an explicit form of finite-sample conditional coverage. 
    \cite{xu2021conformal} established the conditional coverage for time series with error following different processes.
    In our work, we take the definition of $L^q$-$m$-approximable series to prove that the conditional and unconditional coverage probabilities are asymptotically equivalent. And the $L^q$-$m$-approximator can adapt the result for some non-mixing processes. 
For example, for AR(1) model with i.i.d. Bernoulli innovation, it could be a non-mixing process, but it is an $m$-approximable series. 
In short, our proof covers mixing Markov processes, and it also covers the non-mixing but $m$-approximable process. 
\end{remark}

\section{Numerical Experiments}
\label{sec: simu}
\subsection{Simulation studies}
\label{sec: simu_1}
In this section, we use Monte Carlo simulations to evaluate the performance of the conformal prediction methods proposed in this paper through average coverage level (CVR) and length (LEN). For each type of prediction method, we generate $R = 1000$ datasets, each with size $n$. To evaluate the performance of the different prediction methods, the following two models were chosen in order to generate Markov process data (of order $p = 1$).
\begin{itemize}
    \item[] Model 1:$Y_{t+1}=\sin(Y_t)+e_{t+1}$~;~Model 2:$Y_{t+1}=0.8\log (3Y_t^2+1)+e_{t+1}$.
\end{itemize}
In the above, the errors $\{e_t\}$ are chosen either from $i.i.d.$ $N(0, 1)$ distribution or $i.i.d.$ Laplace distribution rescaled to variance 1. We chose sample sizes $n = 50, 100, 250, 500, 1000$, and constructed 95\% and 90\% prediction intervals. We choose $B=250$ to compute the prediction interval of MF-based methods. We used the standard normal cumulative distribution function restricted on $[-2,2]$ as $K$ and chose $w(x) = \frac{1}{\sqrt{2\pi}}e^{-\frac{x^2}{2}}$ the standard normal density. We chose bandwidths $h$ and $h_0$ for kernel estimators by {\tt npcdistbw} function from {\tt R} package {\tt np}. This {\tt npcdistbw} function can accommodate multivariate data pairs $(Y_{t}, X_{t-1}, \ldots, X_{t-p})$ under $p >1.$ 

For each generated dataset, we use the oracle model to generate $S$ number of pseudo values of $(Y_{n+1,j}, j = 1, 2, \ldots, S)$. We take $S = 5000$ throughout our simulation studies. The number $S$ stands for how many future pseudo values of $Y_{n+1,j}$ we use to evaluate the empirical coverage rate of different Prediction Intervals (PIs) $(L_i, U_i), i = 1, 2, \ldots, R $. Then, the average coverage rate and prediction interval length over $R$ replications can be computed as follows:
\begin{equation}\label{Eq:CVRcriterion}
    CVR = \frac{1}{R}\sum_{i=1}^{R}CVR_i \text{  and  }LEN=\frac{1}{R}\sum_{i=1}^{R}LEN_i;
\end{equation}
where 
\begin{equation}\label{Eq:CVRi}
CVR_i =\frac{1}{S}\sum_{j=1}^S \mathbbm{1}\{Y_{n+1,j}\in[L_i,U_i]\} \text{  and  }LEN_i = U_i-L_i.
\end{equation}
For the $i$-th dataset of size $n$ (where $i = 1, 2, \ldots , R$), we have the prediction interval $(L_i, U_i)$ of $Y_{n+1}$ given $X_n = x_{n,i}$, where $x_{n,i}$ is the last observation from the $i$-th dataset. The values of these $\{x_{n,i}\}_{i=1}^R$ are different for each $i$; therefore, the above $CVR$ in (\ref{Eq:CVRcriterion}) is an estimate of the unconditional coverage level. Besides, we also compute the sample standard deviation of 1000 $CVR_i$ and $LEN_i$ to measure the variability of different PIs according to the coverage level and length. The $CVR_i$ from an ideal PI should be concentrated around the nominal level, and the corresponding interval length should be as small as possible. All these results across different simulation models and errors are presented in Tables \ref{Table:M1normal}-\ref{Table:M2lap}.

\begin{table}[htbp]
\center
\caption{Model 1: $Y_{t+1} =\sin(Y_t)+e_{t+1}$ with normal innovations.}
\begin{tabular}{l|cccc|cccc}
  \hline
 &\multicolumn{4}{c|}{Nominal coverage 90\%} &\multicolumn{4}{c}{Nominal coverage 95\%} \\
\hline
$n=50$ & CVR & LEN & CVR Sd & LEN Sd & CVR & LEN & CVR Sd & LEN Sd \\
MDCP &  0.892 & 3.619 & 0.066 & 0.899 & 0.951 & 4.657 & 0.042 & 1.017\\
PMDCP & 0.896 & 3.620 & 0.073 & 0.668 & 0.955 & 4.715 & 0.048 & 0.978\\
MF & 0.860 & 3.220 & 0.078 & 0.574 & 0.909 & 3.742 & 0.065 & 0.652 \\
PMF & 0.911 & 4.087 & 0.067 & 0.587 & 0.955 & 4.991 & 0.048 & 0.879\\
\hline
$n=100$  &\multicolumn{4}{c|}{} &\multicolumn{4}{c}{}\\
MDCP &  0.896 & 3.560 & 0.054 & 0.870 & 0.948 & 4.336 & 0.035 & 0.951\\
PMDCP &  0.894 & 3.479 & 0.074 & 0.545 & 0.945 & 4.244 & 0.059 & 0.677\\
MF & 0.866 & 3.210 & 0.069 & 0.469 & 0.914 & 3.725 & 0.058 & 0.537\\
PMF & 0.900 & 4.020 & 0.080 & 0.408 & 0.950 & 4.860 & 0.052 & 0.552 \\
\hline
$n=250$ &\multicolumn{4}{c|}{} &\multicolumn{4}{c}{} \\
MDCP &  0.894 & 3.410 & 0.040 & 0.746 & 0.948 & 4.156 & 0.026 & 0.822\\
PMDCP &0.895 & 3.349 & 0.042 & 0.351 & 0.950 & 4.104 & 0.032 & 0.451\\
MF & 0.876 & 3.202 & 0.046 & 0.386 & 0.926 & 3.748 & 0.037 & 0.448 \\
PMF & 0.900 & 3.930 & 0.074 & 0.312 & 0.949 & 4.717 & 0.054 & 0.404\\
\hline
$n=500$  &\multicolumn{4}{c|}{} &\multicolumn{4}{c}{}\\
MDCP &  0.894 & 3.367 & 0.032 & 0.711 & 0.946 & 4.080 & 0.023 & 0.834\\
PMDCP &0.892 & 3.287 & 0.044 & 0.310 & 0.945 & 3.971 & 0.035 & 0.384 \\
MF &0.879 & 3.196 & 0.040 & 0.308 & 0.929 & 3.749 & 0.033 & 0.363  \\
PMF & 0.899 & 3.935 & 0.079 & 0.260 & 0.949 & 4.699 & 0.055 & 0.340  \\
\hline
$n=1000$ &\multicolumn{4}{c|}{} &\multicolumn{4}{c}{} \\
MDCP &  0.892 & 3.283 & 0.026 & 0.472 & 0.945 & 3.989 & 0.018 & 0.681\\
PMDCP &0.891 & 3.254 & 0.035 & 0.247 & 0.945 & 3.924 & 0.029 & 0.309\\
MF &0.885 & 3.225 & 0.036 & 0.287 & 0.934 & 3.794 & 0.030 & 0.350 \\
PMF & 0.897 & 3.931 & 0.076 & 0.235 & 0.947 & 4.695 & 0.052 & 0.308   \\
\hline
\end{tabular}
\label{Table:M1normal}
\end{table}

\begin{table}[htbp]
\center
\caption{Model 1: $Y_{t+1} =\sin(Y_t)+e_{t+1}$ with Laplace innovations.}
\begin{tabular}{l|cccc|cccc}
  \hline
 &\multicolumn{4}{c|}{Nominal coverage 90\%} &\multicolumn{4}{c}{Nominal coverage 95\%} \\
\hline
$n=50$ & CVR & LEN & CVR Sd & LEN Sd & CVR & LEN & CVR Sd & LEN Sd \\
MDCP &  0.895 & 3.926 & 0.068 & 1.588 & 0.954 & 5.500 & 0.042 & 1.838\\
PMDCP & 0.891 & 3.953 & 0.109 & 1.305 & 0.953 & 5.694 & 0.074 & 1.805\\
MF & 0.831 & 3.015 & 0.116 & 0.833 & 0.880 & 3.667 & 0.106 & 1.028 \\
PMF &0.923 & 4.477 & 0.058 & 1.117 & 0.962 & 5.907 & 0.035 & 1.646  \\
\hline
$n=100$  &\multicolumn{4}{c|}{} &\multicolumn{4}{c}{}\\
MDCP &  0.894 & 3.815 & 0.057 & 1.681 & 0.948 & 5.121 & 0.038 & 1.884\\
PMDCP &  0.897 & 3.757 & 0.089 & 1.018 & 0.948 & 5.157 & 0.076 & 1.415\\
MF & 0.846 & 3.003 & 0.084 & 0.709 & 0.895 & 3.706 & 0.075 & 0.913\\
PMF &  0.919 & 4.315 & 0.066 & 0.758 & 0.962 & 5.745 & 0.042 & 1.219 \\
\hline
$n=250$ &\multicolumn{4}{c|}{} &\multicolumn{4}{c}{} \\
MDCP &  0.890 & 3.486 & 0.045 & 1.314 & 0.947 & 4.795 & 0.031 & 1.635\\
PMDCP &0.899 & 3.502 & 0.053 & 0.677 & 0.954 & 4.974 & 0.040 & 1.128\\
MF & 0.865 & 3.047 & 0.052 & 0.562 & 0.915 & 3.811 & 0.042 & 0.728\\
PMF &  0.912 & 4.085 & 0.066 & 0.472 & 0.959 & 5.382 & 0.038 & 0.788 \\
\hline
$n=500$  &\multicolumn{4}{c|}{} &\multicolumn{4}{c}{}\\
MDCP & 0.891 & 3.472 & 0.038 & 1.368 & 0.945 & 4.647 & 0.027 & 1.652\\
PMDCP &0.895 & 3.388 & 0.053 & 0.535 & 0.949 & 4.664 & 0.047 & 0.886 \\
MF &0.871 & 3.087 & 0.056 & 0.487 & 0.920 & 3.880 & 0.048 & 0.630\\
PMF & 0.912 & 3.997 & 0.056 & 0.379 & 0.959 & 5.181 & 0.030 & 0.603 \\
\hline
$n=1000$ &\multicolumn{4}{c|}{} &\multicolumn{4}{c}{} \\
MDCP & 0.893 & 3.470 & 0.033 & 1.462 & 0.947 & 4.643 & 0.022 & 1.786\\
PMDCP &0.894 & 3.332 & 0.050 & 0.471 & 0.947 & 4.480 & 0.045 & 0.703\\
MF &0.879 & 3.145 & 0.041 & 0.476 & 0.928 & 3.976 & 0.034 & 0.620  \\
PMF & 0.906 & 3.940 & 0.069 & 0.337 & 0.954 & 5.075 & 0.040 & 0.552 \\
\hline
\end{tabular}
\label{Table:M1lap}
\end{table}

\begin{table}[htbp]
\center
\caption{Model 2: $Y_{t+1}=0.8\log (3Y_t^2+1)+e_{t+1}$ with normal innovations.}
\begin{tabular}{l|cccc|cccc}
  \hline
 &\multicolumn{4}{c|}{Nominal coverage 90\%} &\multicolumn{4}{c}{Nominal coverage 95\%} \\
\hline
$n=50$ & CVR & LEN & CVR Sd & LEN Sd & CVR & LEN & CVR Sd & LEN Sd \\
MDCP &  0.881 & 3.822 & 0.074 & 1.624 & 0.945 & 5.358 & 0.053 & 2.200\\
PMDCP &  0.886 & 3.607 & 0.087 & 0.705 & 0.947 & 4.822 & 0.061 & 1.141\\
MF & 0.852 & 3.234 & 0.091 & 0.639 & 0.903 & 3.765 & 0.076 & 0.733\\
PMF & 0.922 & 4.023 & 0.065 & 0.797 & 0.962 & 5.066 & 0.046 & 1.396 \\
\hline
$n=100$  &\multicolumn{4}{c|}{} &\multicolumn{4}{c}{}\\
MDCP &  0.889 & 3.720 & 0.059 & 1.562 & 0.942 & 4.808 & 0.042 & 2.076\\
PMDCP & 0.889 & 3.473 & 0.072 & 0.563 & 0.943 & 4.286 & 0.055 & 0.741 \\
MF & 0.864 & 3.218 & 0.070 & 0.505 & 0.914 & 3.748 & 0.057 & 0.578\\
PMF & 0.914 & 3.748 & 0.057 & 0.585 & 0.957 & 4.580 & 0.041 & 0.818  \\
\hline
$n=250$ &\multicolumn{4}{c|}{} &\multicolumn{4}{c}{} \\
MDCP &  0.887 & 3.466 & 0.042 & 1.240 & 0.945 & 4.360 & 0.029 & 1.593\\
PMDCP &0.887 & 3.319 & 0.056 & 0.396 & 0.945 & 4.100 & 0.048 & 0.520\\
MF & 0.873 & 3.197 & 0.055 & 0.397 & 0.923 & 3.742 & 0.046 & 0.455\\
PMF & 0.900 & 3.466 & 0.051 & 0.435 & 0.948 & 4.192 & 0.042 & 0.579\\
\hline
$n=500$  &\multicolumn{4}{c|}{} &\multicolumn{4}{c}{}\\
MDCP &  0.888 & 3.385 & 0.036 & 1.102 & 0.943 & 4.213 & 0.025 & 1.553\\
PMDCP &0.887 & 3.266 & 0.051 & 0.344 & 0.941 & 3.955 & 0.045 & 0.433  \\
MF &0.877 & 3.193 & 0.047 & 0.337 & 0.927 & 3.753 & 0.039 & 0.399 \\
PMF & 0.895 & 3.366 & 0.047 & 0.370 & 0.944 & 4.043 & 0.038 & 0.471 \\
\hline
$n=1000$ &\multicolumn{4}{c|}{} &\multicolumn{4}{c}{} \\
MDCP &  0.886 & 3.290 & 0.028 & 0.892 & 0.942 & 4.041 & 0.019 & 1.248\\
PMDCP &0.885 & 3.209 & 0.045 & 0.276 & 0.941 & 3.878 & 0.040 & 0.339\\
MF &0.883 & 3.208 & 0.042 & 0.302 & 0.932 & 3.771 & 0.036 & 0.351\\
PMF &  0.894 & 3.323 & 0.043 & 0.323 & 0.943 & 3.970 & 0.036 & 0.405\\
\hline
\end{tabular}
\label{Table:M2normal}
\end{table}

\begin{table}[htbp]
\center
\caption{Model 2: $Y_{t+1}=0.8\log (3Y_t^2+1)+e_{t+1}$ with Laplace innovations.}
\begin{tabular}{l|cccc|cccc}
  \hline
  &\multicolumn{4}{c|}{Nominal coverage 90\%} &\multicolumn{4}{c}{Nominal coverage 95\%} \\
\hline
$n=50$ & CVR & LEN & CVR Sd & LEN Sd & CVR & LEN & CVR Sd & LEN Sd \\
MDCP &  0.889 & 4.329 & 0.073 & 2.447 & 0.949 & 6.402 & 0.049 & 2.850\\
PMDCP &0.890 & 3.981 & 0.108 & 1.304 & 0.949 & 5.753 & 0.070 & 1.829\\
MF & 0.832 & 3.055 & 0.106 & 0.877 & 0.881 & 3.678 & 0.092 & 1.055 \\
PMF & 0.923 & 4.466 & 0.071 & 1.261 & 0.960 & 5.991 & 0.044 & 2.022 \\
\hline
$n=100$  &\multicolumn{4}{c|}{} &\multicolumn{4}{c}{}\\
MDCP &   0.889 & 4.142 & 0.063 & 2.525 & 0.944 & 5.745 & 0.042 & 2.923\\
PMDCP & 0.887 & 3.757 & 0.110 & 1.114 & 0.940 & 5.125 & 0.094 & 1.531 \\
MF &  0.834 & 2.995 & 0.106 & 0.762 & 0.883 & 3.654 & 0.097 & 0.942 \\
PMF & 0.908 & 4.059 & 0.083 & 1.059 & 0.950 & 5.358 & 0.065 & 1.514\\
\hline
$n=250$ &\multicolumn{4}{c|}{} &\multicolumn{4}{c}{} \\
MDCP &   0.888 & 3.637 & 0.047 & 1.941 & 0.946 & 5.214 & 0.032 & 2.602\\
PMDCP &0.895 & 3.525 & 0.068 & 0.785 & 0.951 & 5.004 & 0.053 & 1.221\\
MF & 0.860 & 3.038 & 0.064 & 0.638 & 0.910 & 3.812 & 0.057 & 0.849  \\
PMF & 0.902 & 3.681 & 0.059 & 0.824 & 0.950 & 4.942 & 0.042 & 1.167 \\
\hline
$n=500$  &\multicolumn{4}{c|}{} &\multicolumn{4}{c}{}\\
MDCP & 0.891 & 3.621 & 0.041 & 1.969 & 0.946 & 5.038 & 0.027 & 2.630\\
PMDCP &0.893 & 3.417 & 0.063 & 0.625 & 0.947 & 4.738 & 0.055 & 1.006 \\
MF & 0.870 & 3.108 & 0.055 & 0.554 & 0.919 & 3.912 & 0.047 & 0.733\\
PMF & 0.900 & 3.567 & 0.051 & 0.663 & 0.948 & 4.798 & 0.040 & 1.055 \\
\hline
$n=1000$ &\multicolumn{4}{c|}{} &\multicolumn{4}{c}{} \\
MDCP & 0.891 & 3.526 & 0.035 & 1.959 & 0.945 & 4.782 & 0.024 & 2.394\\
PMDCP & 0.893 & 3.327 & 0.052 & 0.534 & 0.946 & 4.501 & 0.045 & 0.771\\
MF & 0.877 & 3.134 & 0.043 & 0.517 & 0.926 & 3.955 & 0.035 & 0.648\\
PMF &  0.899 & 3.455 & 0.044 & 0.593 & 0.947 & 4.580 & 0.033 & 0.855 \\
\hline
\end{tabular}
\label{Table:M2lap}
\end{table}

To check the conditional coverage performance of all PI candidates, we consider the two situations: $n = 50$ and $n = 250$. We present the histogram of $CVR_i$ for each $95\%$-PI with Model-1 and Model-2 data generated from normal in Figure \ref{fig:M1normal} and Figure \ref{fig:M2normal}, respectively. Since the pattern is similar to that with Laplace errors, we show the graphs about the performance of different PIs according to the coverage rate and interval length with data generated by Model-1 and Model-2 with Laplace errors in Appendix \ref{Appendix:additionalgraphs}.


As we see from the Figures \ref{fig:M1normal}-\ref{fig:M2normal}, MDCP and PMDCP methods will ``overcorrect'' some prediction intervals to reach a conditional coverage rate beyond the nominal levels, especially for the small sample size. This phenomenon also happens for MF and PMF PIs. When the sample size is large, the histograms of $CVR_i$ of MDCP and PMDCP are more concentrated around the assigned nominal level. If we take a closer look, the length of PI from MDCP method could be pretty large even when the sample size is large, which is not desirable in practice. This phenomenon also happens in the simulation results with Laplace errors presented in Appendix \ref{Appendix:additionalgraphs}. In addition, the histogram of $CVR_i$ of PMF reveals more high conditional coverage cases compared to $CVR_i$ of PMDCP, but results in larger PI lengths as a sacrifice. In practice, the overcovering PI may be better than the undercovering PI even with a slightly larger length.

On the other hand, from the results shown in Tables \ref{Table:M1normal}-\ref{Table:M2lap}, the estimated unconditional coverage rates of MDCP, PMDCP and PMF PIs are all close to the nominal level even for a small sample size. Only the PI of MF shows the undercoverage issue. Among MDCP, PMDCP and PMF methods, the MDCP has a larger prediction interval length and a corresponding larger sample standard deviation, which implies it is more likely to have a super wide PI, consistent with the finding from \ref{fig:M1normal}-\ref{fig:M2normal}. Again, PMF PI tends to have a slightly higher coverage rate than PMDCP, but possesses a larger PI length as a sacrifice. All PIs have better performance as the sample size increases. It also validates the asymptotic theory of PIs in Section \ref{sec: asym}.
\begin{figure}[!ht]
    \centering
    \includegraphics[scale=0.9]{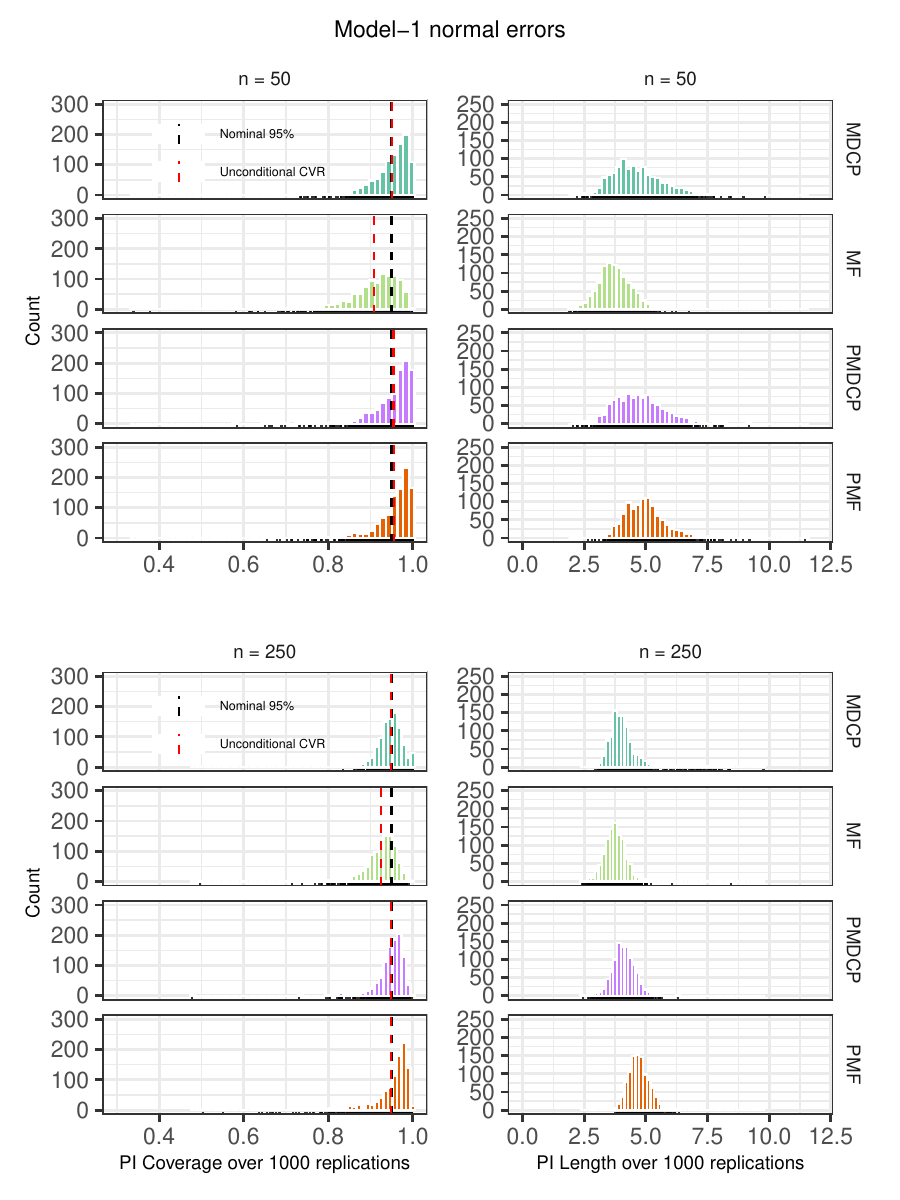}
    \caption{The conditional coverage rate and interval length for different 95$\%$-PIs with simulated data from Model 1 with normal error.}
    \label{fig:M1normal}
\end{figure}
\begin{figure}[!ht]
    \centering
    \includegraphics[scale=0.9]{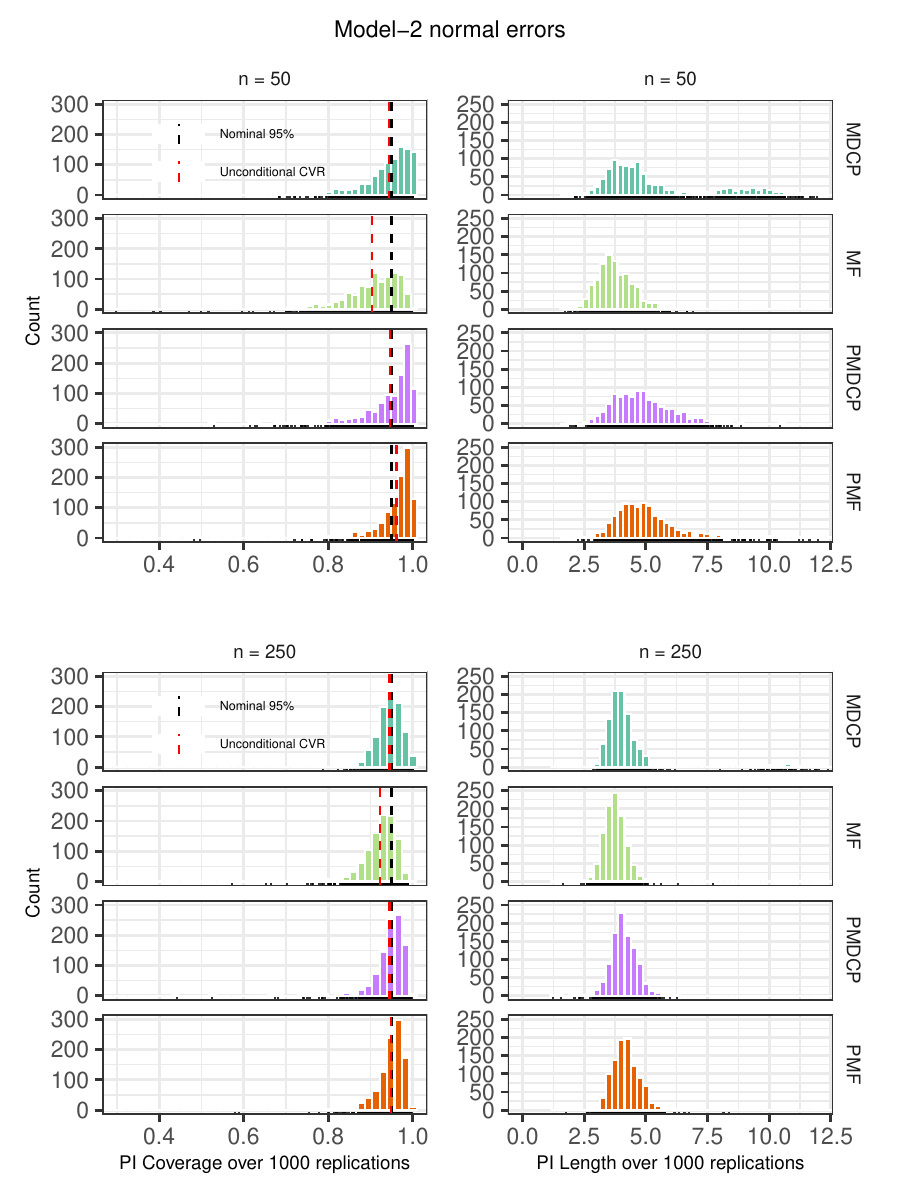}
    \caption{The conditional coverage rate and interval length for different 95$\%$-PIs with simulated data from Model 2 with normal error.}
    \label{fig:M2normal}
\end{figure}
From the results implied by both tables and figures, we make our suggestions for users to choose a method in practice:
\begin{itemize}
    \item If the computational resource is limited, we recommend the MF method since it can achieve a great unconditional coverage rate when the sample size is large enough. The MDCP method could also be a candidate since it gives a satisfactory unconditional CVR, but it brings the risk of obtaining a super wide PI; see the top right subgraph of Figure \ref{fig:M2normal}, where the histogram of MDCP PI lengths has a fat tail, which indicates the super wide PIs. A similar phenomenon is also observed from Appendix \ref{Appendix:additionalgraphs} and real data studies in the following section.
    \item If the computational resource is not an issue and users prefer the overcoverage rather than the undercoverage, we recommend the PMF method since it possesses a good unconditional coverage rate and tends to give more conditional overcoverage cases. 
    \item If the computational resource is not an issue and users prefer a balanced undercoverage and overcoverage for conditional forecasting, we recommend the PMDCP method since it possesses a good unconditional coverage rate and the histograms of its $CVR_i$ are more concentrated around the nominal level. 
\end{itemize}

\subsection{Real data studies: SP 500 index }
Beyond the simulations, we take the log returns of the weekly S$\&$P 500 price index from 1 January 1988 to 31 December 1997, i.e., in total 521 observations, to compare different methods on building prediction intervals for real data\footnote{The data can be downloaded from investing.com.}. Revealed by \cite{chen2012testing}, this returns series can be seen as a Markov(1) data. To evaluate the performance of different PIs, we apply the rolling window prediction idea with a window size $w$ of different quarters. For example, given the window size $w=90$ and the order $p$ of Markov process, we use the 90 previous observation pairs $\mathcal{D}_{train} =\left\{\left(\bm{X}_{t-90}, Y_{t-89}\right), \ldots,\left(\bm{X}_{t-1}, Y_{t}\right)\right\}$ to provide a prediction interval of $Y_{t+1}$ given $\bm{X}_t = (Y_{t-p+1}, \ldots, Y_t)$. Repeating this prediction procedure throughout the whole series, we can define the empirical rolling average coverage by $
CVR = \frac{1}{n-w}\sum_{t = w+1}^n \mathbbm{1}\{Y_{t} \in [L_t, U_t]\};$ where $L_t$ and $U_t$ are different lower and upper bounds in the prediction interval obtained from these models, respectively, for each rolling prediction step. In addition, the average length of the prediction interval can be defined as $
LEN = \frac{1}{n-w}\sum_{t = w+1}^n(U_t - V_t)$. Again, we compute ``LEN Sd'', which represents the sample deviation of all rolling PI lengths, to measure the spread of the length of each rolling PI. All results are presented in Table \ref{tab: PIsp500}. From there, we observe a similar pattern to the simulations. PMF and PMDCP are two superior methods. Although MDCP can return a decent coverage rate, it possesses the risk of getting wide PIs, especially for $w = 250$, which is implied by the largest ``LEN Sd'' among all methods. This phenomenon is revealed in detail by Figure \ref{fig: histSP_250}. Therefore, we still recommend the PMF or PMDCP methods.  

\begin{table}[!ht]
    \centering
    \begin{tabular}{l|ccc|ccc}
    \hline
   Method  &  \multicolumn{3}{c|}{Nominal coverage 90\% }& \multicolumn{3}{c}{Nominal coverage 95\% } \\
      \hline
$w = 100$    & CVR    & LEN    & LEN Sd & CVR    & LEN    & LEN Sd \\
MF    & 0.8575 & 0.0486 & 0.0115 & 0.8884 & 0.0583 & 0.0140 \\
PMF   & 0.8789 & 0.0542 & 0.0131 & 0.9382 & 0.0678 & 0.0174 \\
MDCP  & 0.8717 & 0.0520 & 0.0126 & 0.9335 & 0.0651 & 0.0168 \\
PMDCP & 0.8717 & 0.0526 & 0.0121 & 0.9335 & 0.0653 & 0.0163 \\
   \hline
 $w = 250$    &     &    &   &    &   &   \\
MF    & 0.8561 & 0.0458 & 0.0067 & 0.9188 & 0.0555 & 0.0092 \\
PMF   & 0.8708 & 0.0491 & 0.0079 & 0.9299 & 0.0621 & 0.0121 \\
MDCP  & 0.8708 & 0.0488 & 0.0136 & 0.9299 & 0.0619 & 0.0158 \\
PMDCP & 0.8635 & 0.0472 & 0.0070 & 0.9373 & 0.0610 & 0.0103 \\
\hline
    \end{tabular}
    \caption{The out-of-sample prediction performance of different PIs on log-returns of S$\&$P500 price index with 100 or 250 rolling window size.}
    \label{tab: PIsp500}
\end{table}

\begin{figure}[!ht]
    \centering
    \includegraphics[scale=0.5]{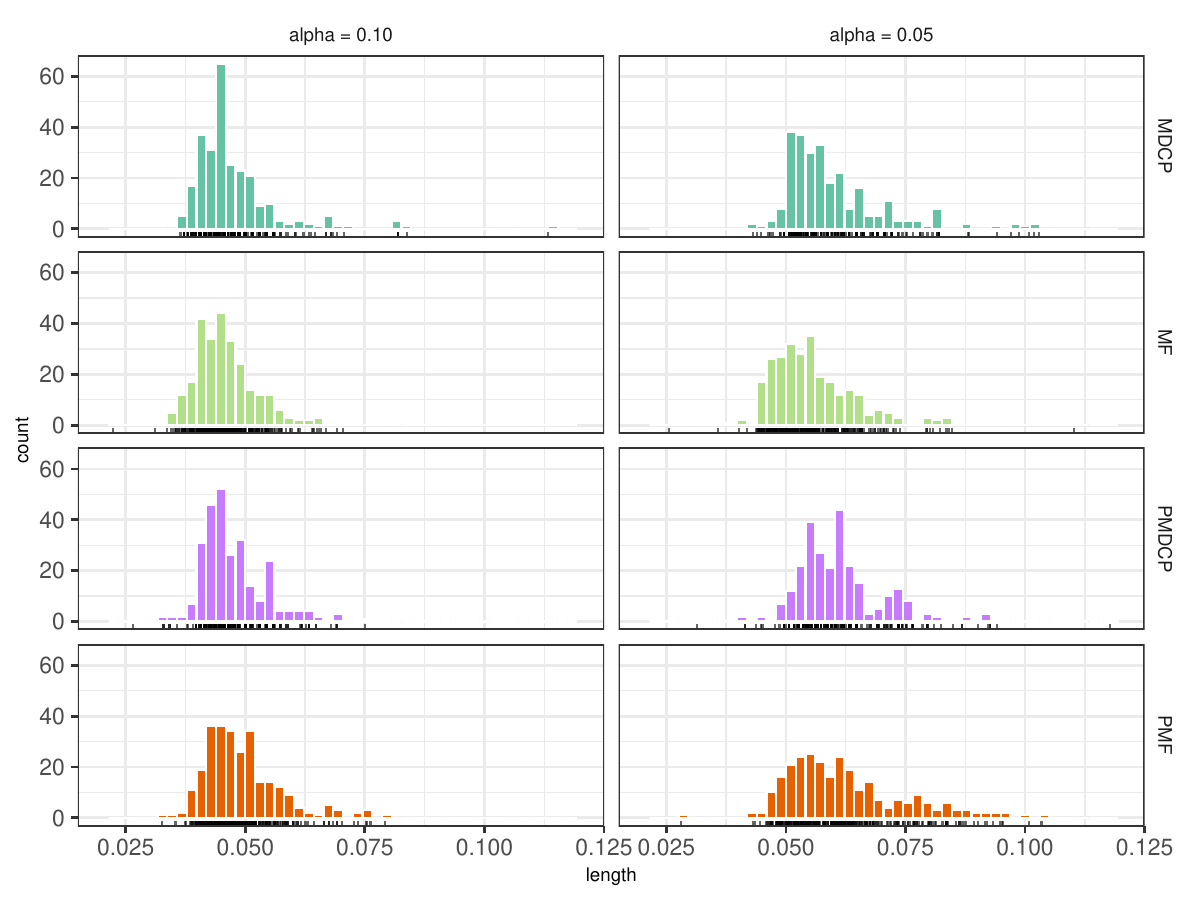}
    \caption{The histogram of length for rolling 90\%- and 95\%-PIs with log-returns of S$\&$P500 price index when $w = 250$.}
    \label{fig: histSP_250}
\end{figure}

\FloatBarrier
\section*{Disclosure Statement}
The authors have no conflicts of interest to declare.
\section*{Data Availability Statement}
Data sharing is not applicable to this article as no new data were created or analyzed in this study.

%% file: supp.tex
\section{Proofs in Section 3}
\subsection{Proof of Theorem \ref{thm: marginal}}
Define $\hat{G}(v) = 1/N\sum_{t=p+1}^{n+1}\mathbbm{1}\{\hat{V}^{(Y_{n+1})}_t<v\}$.
Notice that 
$$
\mathbb{P}(Y_{n+1}\in \hat{C}^\text{MDCP}_{1-\alpha}(\bm{X}_{n}))=\mathbb{P}\left(\frac{1}{N}\sum_{t=p+1}^{n+1}\mathbbm{1}\left\{\hat{V}_t^{(Y_{n+1})}\geq \hat{V}_{n+1}^{(Y_{n+1})}\right\}>\alpha\right) =\mathbb{P}\left(1-\hat{G}(\hat{V}_{n+1}^{(Y_{n+1})})>\alpha\right).
$$
Denote $G(\cdot)$ as the true distribution function of $V_{n+1}$. Since $G(\cdot)$ is a continuous function, $G(V_{n+1})$ follows from Uniform(0,1) by PIT and has the probability
 $
\mathbb{P}(G(V_{n+1})<1-\alpha) = 1-\alpha.
 $
 We have the decomposition
 \begin{equation}
 \label{eq: decom}
 \begin{aligned}
     |\mathbb{P}(Y_{n+1}\in \hat{C}^\text{MDCP}_{1-\alpha}(\bm{X}_n)) - (1-\alpha)| &= \left|\mathbb{P}\left( \frac{1}{N}\sum_{t=p+1}^{n+1}\mathbbm{1}\{\hat{V}_t^{(Y_{n+1})} \geq \hat{V}_{n+1}^{(Y_{n+1})} \}>\alpha\right) -(1-\alpha)\right|\\
     &= \left|\mathbb{P}\left( 1- \hat{G}(\hat{V}_{n+1}^{(Y_{n+1})})>\alpha\right) -\mathbb{P}\left( 1- G({V}_{n+1})>\alpha\right)\right|\\
     &\leq \mathbb{P}(|\hat{G}(\hat{V}^{(Y_n+1)}_{n+1})-G(V_{n+1})|>12C_\delta)\\
     &+\mathbb{P}(|G(V_{n+1})-(1-\alpha)|\leq 12C_\delta)\\
&\leq \mathbb{P}(|\hat{G}(\hat{V}^{(Y_n+1)}_{n+1})-G(V_{n+1})|>12C_\delta)+24C_\delta;
 \end{aligned}
 \end{equation}
where the inequality in \eqref{eq: decom} is due to the inequality $|\mathbb{P}(X \leq z)-\mathbb{P}(Y \leq z)| \leq \mathbb{P}(|X-Y|>\delta)+\mathbb{P}(|X-z| \leq \delta)$ for any $\delta >0$; $C_\delta \asymp \frac{\log N}{\sqrt{Nh^{p}}} +h^2$; the last equation is due to the fact that $G(V_{n+1})$ follows from Uniform(0,1).
 
Define a piecewise function $\tilde{G}$ generated from ground-truth transformed random variable $V_t$,
$$
\tilde{G}(v) = \frac{1}{N}\sum_{t=p+1}^{n+1}\mathbbm{1}\{V_t <v\}.
$$ 
Thus, we can have the decomposition for the first term on the r.h.s. of \eqref{eq: decom}
\begin{align*}
    \mathbb{P}(|\hat{G}(\hat{V}^{(Y_n+1)}_{n+1})-G(V_{n+1})|>12C_\delta) &\leq \mathbb{P}(|\hat{G}(\hat{V}^{(Y_n+1)}_{n+1}) - \tilde{G}(\hat{V}^{(Y_n+1)}_{n+1})|>8C_\delta) \\
    &\qquad+ \mathbb{P}(|\tilde{G}(\hat{V}^{(Y_n+1)}_{n+1})- G(\hat{V}^{(Y_n+1)}_{n+1})|>2C_\delta)\\
    &\qquad+\mathbb{P}(| G(\hat{V}^{(Y_n+1)}_{n+1}) - G(V_{n+1})|>2C_\delta).
\end{align*}
Define the events $E_1 :=\{\sup_v|\hat{G}(v)-\tilde{G}(v)|> 8C_\delta\}$, $E_2:=\{\sup_v|\tilde{G}(v)-G(v)|>2C_\delta\}$ and $E_3 = \{| G(\hat{V}^{(Y_n+1)}_{n+1}) - G(V_{n+1})|>2C_\delta$\}.\\
\noindent{\sc Step 1. Bound $E_3$.} Since $G$ is the ground true distribution of $V$ which is uniform[0,1/2], the probability of event $E_3$ happening is equivalent to $\{| \hat{V}^{(Y_n+1)}_{n+1} - V_{n+1}|>C_\delta\}$ happening. Moreover, $V$ has similar transformations to $\hat{V}$, i.e., $V_t = |U_t - 1/2|$, so 
\begin{align*}
    | \hat{V}^{(Y_n+1)}_{n+1} - V_{n+1}|  = \bigg||\hat{U}_{n+1}^{(Y_{n+1})}-1/2|-|{U}_{n+1}-1/2|\Bigg| \leq |\hat{U}_{n+1}^{(Y_{n+1})}-{U}_{n+1}|.
\end{align*}
Lemma \ref{lem: mse of uhat} provides $|\hat{U}_{n+1}^{(Y_{n+1})}-{U}_{n+1}| \leq C_\delta)$ with $C_\delta = O( \frac{\log N}{\sqrt{Nh^{p}}} +h^2)$ which is defined in Lemma \ref{lem: mse of uhat} with probability at least $1 - S_N$. So $\mathbb{P}(E_3) < S_N$.


\noindent{\sc Step 2. Bound $E_1$ and $E_2$. }
Given the event $E_4=\{\sup_t|\hat{V}^{(Y_{n+1})}_t - V_t|<C_\delta\}$, we have
\begin{align*}
    |\hat{G}(v) - \tilde{G}(v)| &\leq \frac{1}{N}\sum_{t=p+1}^{n+1}|\mathbbm{1}\{\hat{V}_t^{(Y_{n+1})} < v\} - \mathbbm{1}\{V_t < v\}|  \\
    & \leq \frac{1}{N} \sum_{t=p+1}^{n+1}\mathbbm{1}\{|V_t - v|\leq C_\delta\} + \frac{1}{N} \sum_t \mathbbm{1}\{|\hat{V}_t^{(Y_{n+1})} -V_t| \geq  C_\delta\}\\
    & \leq \sup_v\left|\frac{1}{N} \sum_{t=p+1}^{n+1}\mathbbm{1}\{|V_t - v|\leq C_\delta\} - \mathbb{P}(|V_{n+1} - v|\leq C_\delta)\right| + \mathbb{P}(|V_{n+1} - v|\leq C_\delta) \\
    & \leq  \sup_v\left|\frac{1}{N} \sum_{t=p+1}^{n+1} \left[ \mathbbm{1}\{V_t \leq v + C_\delta\} - \mathbbm{1}\{V_t \leq v - C_\delta\}\right] \right.\\
    &  \qquad - \mathbb{P}(V_{n+1}\leq v +  C_\delta) + \mathbb{P}(V_{n+1}\leq v -  C_\delta)\Bigg| + \mathbb{P}(|V_{n+1} - v|\leq C_\delta)  \\
    &  \leq  \sup_v\left|\tilde{G}(v+ C_\delta) -G(v+ C_\delta) - (\tilde{G}(v- C_\delta) -G(v- C_\delta))\right| \\
    & \qquad + \mathbb{P}(|V_{n+1} - v|\leq C_\delta)  \\
    & \leq 2\sup_{v}|\tilde{G}(v) -G(v)|+4C_\delta.
\end{align*}
Since $V_t$ is i.i.d, the Dvoretzky-Kiefer-Wolfowitz inequality implies that $$\mathbb{P}(E_2) \leq 2\exp(-8NC_\delta^2),$$ then we have 
\begin{align*}
\mathbb{P}(E_1) &= \mathbb{P}(E_1 \cap E_4) + \mathbb{P}(E_1 \cap E_4^c)\\
&\leq \mathbb{P}\left(\sup_v|\tilde{G}(v)-G(v)|\geq 2C_\delta\right) + \mathbb{P}(\sup_t|\hat{V}^{(Y_{n+1})}_t -V_t|\geq C_\delta)\\
&\leq 2 \exp(-8NC_\delta^2) + S_N.
\end{align*}
\noindent{\sc Step 3. Combine all bounds of $E_{k}$ for $k = 1, 2,3.$}
 We have
$$
\mathbb{P}(|\hat{G}(\hat{V}^{(Y_n+1)}_{n+1})-G(V_{n+1})|>12C_\delta) \leq 4\exp(-8NC_\delta^2) + 2S_N
$$
 with $C_\delta \asymp \frac{\log N}{\sqrt{Nh^{p}}}+h^2$. Plugging the upper bound back \eqref{eq: decom}, we have
 \begin{align*}
&|\mathbb{P}(Y_{n+1}\in \hat{C}^\text{MDCP}_{1-\alpha}(\bm{X}_n)) - (1-\alpha)| \leq 24C_\delta + 4\exp(-8NC_\delta^2) + 2S_N. 
\end{align*}

\subsection{Proof of Proposition \ref{Proposition:asymptoticMDCP}}
To show Proposition \ref{Proposition:asymptoticMDCP}, we consider the $\beta$-mixing coefficients defined in A.3.1 of \cite{francq2019garch}:
$$
\beta(\mathcal{F}_{-\infty}^j, \mathcal{F}_{\ell}^\infty):= \mathbb{E}  \sup_{B\in \mathcal{F}_{\ell}^\infty}|\mathbb{P}(B) -\mathbb{P}(B|\mathcal{F}_{-\infty}^j)|,$$ where $\mathcal{F}_{-\infty}^j$ and $\mathcal{F}_\ell^\infty$ are two $\sigma$-algebras. If we consider the kernel CDF estimator $\widehat{F}_S(Y|\bm{X})$ with only data $Y_{n-p-S},\ldots, Y_{1}$ where $S = o(n)$, the resulting lower bound and upper bound of the prediction interval $\widehat{L}^{\text{MDCP}}_{1-\alpha,S}$ and $\widehat{U}^{\text{MDCP}}_{1-\alpha,S}$ are measurable functions of $Y_{n-p-S},\ldots, Y_{1}$ and the candidate pair $(\bm{X}_n, Y_f)$. To show the asymptotic validity of the conditional coverage, we consider the lower bound $\widehat{L}^{\text{MDCP}}_{1-\alpha,S}$ of PI, the following analysis also holds for the upper bound. By the definition of geometric $\beta$-mixing coefficients, we can eventually show 
\begin{equation}\label{Eq:targeteq}
    \mathbb{E} |\mathbb{P}(\widehat{L}^{\text{MDCP}}_{1-\alpha,S}\leq c) -\mathbb{P}(\widehat{L}^{\text{MDCP}}_{1-\alpha,S}\leq c|\bm{X}_n)| \to 0,~\text{uniformly in}~c
\end{equation}
geometrically fast as $S\to\infty$. To show \eqref{Eq:targeteq}, we could consider showing a stronger result, i.e., $\mathbb{E} |\mathbb{P}(\widehat{L}^{\text{MDCP}}_{1-\alpha,S}\leq c) -\mathbb{P}(\widehat{L}^{\text{MDCP}}_{1-\alpha,S}\leq c|\mathcal{F}^{\infty}_{n})| \to 0$ uniformly in $c$. The crucial problem comes from the dependence between $\widehat{L}^{\text{MDCP}}_{1-\alpha,S}$ and $\mathcal{F}^{\infty}_{n}$ since the conformal prediction interval involves the data pair $(\bm{X}_n, Y_f)$. Thus, we consider any data pair $(\bm{x}_n,y_f)$ where $y_f$ is taken from $\mathcal{Y}_{\text{trial}}$ and observe that $\widehat{L}^{\text{MDCP}}_{1-\alpha,S}$ and $\mathcal{F}^{\infty}_{n}$ are in two separate $\sigma$-algebras now. In addition, we use the fact that $\beta(\mathcal{F}_{-\infty}^j, \mathcal{F}_{\ell}^\infty) = \beta(\mathcal{F}_{\ell}^\infty, \mathcal{F}_{-\infty}^j)$ which can be easily seen by considering the equivalent format of the $\beta$-mixing coefficients between two $\sigma$-algebras $(\mathcal{A}, \mathcal{B})$ defined on an appropriate probability space:
$$
\beta(\mathcal{A}, \mathcal{B}):= \frac{1}{2}\sup \sum_{i=1}^I \sum_{j=1}^J\left|P\left(A_i \cap B_j\right)-P\left(A_i\right) P\left(B_j\right)\right|;
$$
where the supremum is taken over all pairs of finite partitions $(A_1, \ldots, A_I)$ and $(B_1,  \ldots, B_J)$ such that $A_i \in \mathcal{A}$ for each $i$ and $B_j \in \mathcal{B}$ for each $j$. This fact implies that 
\begin{equation}\label{Eq:conditioanlx}
    \mathbb{E} \left[ \left|\mathbb{P}(\widehat{L}^{\text{MDCP}}_{1-\alpha,S}\leq c) -\mathbb{P}(\widehat{L}^{\text{MDCP}}_{1-\alpha,S}\leq c|\bm{x}_n, \mathcal{F}^{\infty}_{n+1})\right|\Bigg| \bm{X}_{n} = \bm{x}_n \right]\to 0,
\end{equation}
for any data pair $(\bm{x}_n,y_f)$ and $c$. Therefore, it still holds after taking one more expectation based on the tower property, i.e., 
\begin{equation}\label{Eq:unconditioanlx}
  \mathbb{E} \left[ \mathbb{E} \left[ \left|\mathbb{P}(\widehat{L}^{\text{MDCP}}_{1-\alpha,S}\leq c) -\mathbb{P}(\widehat{L}^{\text{MDCP}}_{1-\alpha,S}\leq c|\bm{x}_n, \mathcal{F}^{\infty}_{n+1})\right|\Bigg| \bm{X}_{n} = \bm{x}_n \right]\right]\to 0,
\end{equation}
for any $y_f\in\mathcal{Y}_{\text{trial}}$ and $c$, which is equivalently to what we want $$\mathbb{E} |\mathbb{P}(\widehat{L}^{\text{MDCP}}_{1-\alpha,S}\leq c) -\mathbb{P}(\widehat{L}^{\text{MDCP}}_{1-\alpha,S}\leq c|\bm{X}_n,\mathcal{F}^{\infty}_{n+1})| \to 0.$$

Finally, by Markov's inequality, we obtain that $\mathbb{P}(\widehat{L}^{\text{MDCP}}_{1-\alpha,S}\leq c|\bm{X}_n) \overset{p}{\to} \mathbb{P}(\widehat{L}^{\text{MDCP}}_{1-\alpha,S}\leq c)$, for any $y_f\in\mathcal{Y}_{\text{trial}}$ and $c$. To get the uniform convergence result, we notice that the domain of the whole series is bounded by $C_M$, as assumed in assumption A4. Therefore, the lower and upper bounds of a PI are in the compact of $[-C_M,C_M]$. Subsequently, the uniform convergence result can be concluded by applying the finite covering theorem. Moreover, since $S = o(n)$, $\widehat{L}^{\text{MDCP}}_{1-\alpha,S}$ and $\widehat{L}^{\text{MDCP}}_{1-\alpha}$ will be asymptotically equal with $n\to\infty$; where $\widehat{L}^{\text{MDCP}}_{1-\alpha}$ is the lower bound of the PI with smoothed kernel estimator based on whole original series. This convergence is in the a.s. sense. Thus, we can get $\mathbb{P}(\widehat{L}^{\text{MDCP}}_{1-\alpha}\leq c|\bm{X}_n) \to \mathbb{P}(\widehat{L}^{\text{MDCP}}_{1-\alpha,S}\leq c|\bm{X}_n)$ and $\mathbb{P}(\widehat{L}^{\text{MDCP}}_{1-\alpha,S}\leq c) \to \mathbb{P}(\widehat{L}^{\text{MDCP}}_{1-\alpha}\leq c)$ for any $c$ uniformly. Finally, we can conclude that $\mathbb{P}(\widehat{L}^{\text{MDCP}}_{1-\alpha}\leq c|\bm{X}_n) \overset{p}{\to} \mathbb{P}(\widehat{L}^{\text{MDCP}}_{1-\alpha}\leq c)$ uniformly in $c$. This convergence result is also true for the upper bound. Thus, we conclude that the coverage rate of the conditional prediction interval from MDCP is the same as the unconditional version asymptotically.

\subsection{Proof of Theorem \ref{thm: conditional}}
The idea to handle the effects of dependency within the data on the conditional prediction interval hinges on the $L^p$-$m$-approximator for the original series. We denote the conditional distribution estimator with $L^2$-$m$-approximator $\{Y_t^{(m)}\}$ as follows:
    \begin{equation}
\label{Eq:mapproxkernelEst}
 \hat{F}^{(m)}(y|\bm{x}) =\frac{\frac{1}{N}\sum_{i=p+1}^{n+1} W_h\left(\bm{X}_{i-1}^{(m)}, \bm{x}\right) K\left(\frac{y-Y^{(m)}_i}{h_0}\right)}{\overline{W}^{(m)}_h(\bm{x})}.
\end{equation}

To simplify the proof, we still take $h = h_0$. The bandwidth will be selected by cross-validation for the simulation and real-data studies. We consider the consistency between $\hat{F}^{(m)}(y|\bm{x})$ and $\hat{F}(y|\bm{x})$ for any $(\bm{x},y)$ in the joint domain. Let's denote:

$$\hat{F}(y|\bm{x}) -\hat{F}^{(m)}(y|\bm{x})=\frac{A_n-A_n^{(m)}}{\overline{W}_h(\bm{x})}+A_n^{(m)}\left(\frac{1}{\overline{W}_h(\bm{x})}-\frac{1}{\overline{W}^{(m)}_h(\bm{x})}\right);$$
where $$A_n := \frac{1}{N}\sum_{i=p+1}^{n+1} W_h\left(\bm{X}_{i-1}, \bm{x}\right) K\left(\frac{Y_i-y}{h}\right)$$ and $$A_n^{(m)} := \frac{1}{N}\sum_{i=p+1}^{n+1} W_h\left(\bm{X}_{i-1}^{(m)}, \bm{x}\right)K\left(\frac{Y^{(m)}_i-y}{h}\right).$$
By our assumption A5, $\overline{W}_h(\bm{X}_{t-1}^{(m)})$ are non-zero. We first analyze 
\begin{equation*}
    \begin{split}
        A_n - A_n^{(m)} &:= \frac{1}{N}\sum_{i=p+1}^{n+1}\left( W_h\left(\bm{X}_{i-1}, \bm{x}\right) K\left(\frac{Y_i-y}{h}\right) -  W_h\left(\bm{X}_{i-1}^{(m)}, \bm{x}\right)K\left(\frac{Y^{(m)}_i-y}{h}\right)\right)\\
        & = \frac{1}{N}\sum_{i=p+1}^{n+1} \Delta_i.
    \end{split}
\end{equation*}
In a similar analysis of \eqref{eq: V1}, we have $|\Delta_i|\leq \frac{C}{h^p}\sqrt{\frac{\left\|\bm{X}_{i-1} - \bm{X}_{i-1}^{(m)}\right\|^2 + \|Y_{i} - Y_{i}^{(m)}\|^2}{h^2}}$ with the smoothness assumption on $W(\cdot)$ and $K(\cdot)$. Therefore, 
$$|A_n - A_n^{(m)}|\leq \frac{1}{N}\sum_{i=p+1}^{n+1}\frac{C}{h^p}\sqrt{\frac{\left\|\bm{X}_{i-1} - \bm{X}_{i-1}^{(m)}\right\|^2+ \|Y_{i} - Y_{i}^{(m)}\|^2}{h^2}}.$$

Then, 

\begin{equation*}
\begin{split}
    \mathbb{E}( |A_n - A_n^{(m)}| ) &\leq \mathbb{E}\left( \frac{C}{h^p}\sqrt{\frac{\left\|\bm{X}_{i-1} - \bm{X}_{i-1}^{(m)}\right\|^2+ \|Y_{i} - Y_{i}^{(m)}\|^2}{h^2}} \right) \\
    & \leq \frac{C}{h^{p+1}}\mathbb{E}( \| \bm{X}_{i-1} - \bm{X}_{i-1}^{(m)} \|  + \|Y_{i} - Y_{i}^{(m)}\|) \\
    & \leq \frac{C}{h^{p+1}} (p+1)\delta(m);
\end{split}
\end{equation*}
where $\delta(m)$ is a sequence that convergences to $0$ as $m\to\infty$. By the assumption A6, we have $\delta(m) = o(h^{p+1})$, so that $\mathbb{E}( |A_n - A_n^{(m)}| ) \to 0$; the first inequality is due to the strict stationarity and the last two inequalities are due to the fact $\sqrt{a^2+b^2} \leq|a|+|b|$. Similarly, we can show $\mathbb{E}\left(\frac{1}{\overline{W}_h(\bm{x})}-\frac{1}{\overline{W}^{(m)}_h(\bm{x})}\right)\to 0$ with $\delta(m) = o(h^{p+1})$. 

Finally, by the Markov's inequality, we can conclude that $|\hat{F}(y|\bm{x}) -\hat{F}^{(m)}(y|\bm{x}) |\overset{p}{\to} 0$. To get the uniform consistency, we can repeat the finite covering technique used in proof of Lemma \ref{lem: mse of uhat}. Omitting the details, we have $\sup_{\bm{x},y}|\hat{F}(y|\bm{x}) -\hat{F}^{(m)}(y|\bm{x}) |\overset{p}{\to} 0$. Moreover, since $\widehat{V}_t^{\left(Y_{n+1}\right)}=\left|\widehat{U}_t^{\left(Y_{n+1}\right)}-1 / 2\right|$, where $\widehat{U}_t^{\left(Y_{n+1}\right)}=\widehat{F}^{\left(Y_{n+1}\right)}\left(Y_t \mid \boldsymbol{X}_{t-1}\right)$ for $t=p+1, \ldots, n+1$. By the continuous mapping theorem, we can conclude that $\sup_{\bm{x},y}|\widehat{V}_t^{\left(Y_{n+1}\right)} -\widehat{V}_t^{\left(m,Y_{n+1}\right)} |\overset{p}{\to} 0$; where $\widehat{V}_t^{\left(m,Y_{n+1}\right)} := \left|\widehat{U}_t^{\left(m,Y_{n+1}\right)}-1 / 2\right|$ and $\widehat{U}_t^{\left(m,Y_{n+1}\right)} := \widehat{F}^{\left(m, Y_{n+1}\right)}\left(Y_t \mid \boldsymbol{X}_{t-1}\right)$ which is the augmented CDF estimator with $\{Y_t^{(m)}\}$ series. 

Now consider two events, $\mathbbm{1}\left\{\widehat{V}_t^{(Y_{n+1})} \geq \widehat{V}_{n+1}^{(Y_{n+1})}\right\}$ and $\mathbbm{1}\left\{\widehat{V}_t^{(m, Y_{n+1})} \geq \widehat{V}_{n+1}^{(m, Y_{n+1})}\right\}$, it is not hard to show 
\begin{align*}
\mathbbm{1}\left\{\widehat{V}_t^{(m, Y_{n+1})} \geq \widehat{V}_{n+1}^{(m, Y_{n+1})}+o_p(1)\right\}& \leq \mathbbm{1}\left\{\widehat{V}_t^{(Y_{n+1})} \geq \widehat{V}_{n+1}^{(Y_{n+1})}\right\} \\
&\leq \mathbbm{1}\left\{\widehat{V}_t^{(m, Y_{n+1})} \geq \widehat{V}_{n+1}^{(m, Y_{n+1})}-o_p(1)\right\}
\end{align*}
for each $t$ asymptotically. Then, by considering the summation for $t = p+1,\ldots, n+1$ and taking $\mathbb{P}(\cdot|\bm{X}_{n})$, we have 
$$
\mathbb{P}\left(\frac{1}{N} \sum_{t=p+1}^{n+1} \mathbbm{1}\left\{\widehat{V}_t^{\left(Y_{n+1}\right)} \geq \widehat{V}_{n+1}^{\left(Y_{n+1}\right)}\right\}>\alpha \bigg|\boldsymbol{X}_n\right) \overset{p}{\to} \mathbb{P}\left(\frac{1}{N} \sum_{t=p+1}^{n+1} \mathbbm{1}\left\{\widehat{V}_t^{\left(m,Y_{n+1}\right)} \geq \widehat{V}_{n+1}^{\left(m, Y_{n+1}\right)}\right\}>\alpha \bigg| \boldsymbol{X}_n\right).$$

Subsequently, we apply the MDCP method with $\{Y_t^{(m)}\}$ series but ignore the latest $m+p+1$ values. More specifically, we make the transformation functions without the latest $m+p+1$ values and then determine the prediction interval in a conformal way. Since $m+p+1 = o(n)$, ignoring the latest $m+p+1$ values does not influence the asymptotic property of $\hat{F}^{(m)}(y|\bm{x})$. So we have 
\begin{align*}
&\mathbb{P}\left(\frac{1}{N} \sum_{t=p+1}^{n+1} \mathbbm{1}\left\{\widehat{V}_t^{\left(m,Y_{n+1}\right)} \geq \widehat{V}_{n+1}^{\left(m, Y_{n+1}\right)}\right\}>\alpha \bigg|\boldsymbol{X}_n\right) \\
\quad &\overset{p}{\to}  \mathbb{P}\left(\frac{1}{N} \sum_{t=p+1}^{n+1} \mathbbm{1}\left\{\widetilde{V}_t^{\left(m,Y_{n+1}\right)} \geq \widetilde{V}_{n+1}^{\left(m, Y_{n+1}\right)}\right\}>\alpha \bigg| \boldsymbol{X}_n\right);
\end{align*}
where $\widetilde{V}_t^{\left(m,Y_{n+1}\right)}$ is the variant of $\widehat{V}_t^{\left(m,Y_{n+1}\right)}$ after deleting latest $m+p+1$ values. 

Lastly, to show the asymptotic conditional validity, we note that 
$\widetilde{V}_t^{(m,Y_{n+1})}$ depends on $(\bm{X}_n, Y_f)$, 
and we handle this dependence as follows. For any fixed pair $(\bm{x}_n, y_f)$ with $y_f \in \mathcal{Y}_{\text{trial}}$, 
the conformity score $\widetilde{V}_{n+1}^{(m,y_f)}$ is a deterministic 
function of $(\bm{x}_n, y_f)$. Moreover, $\widetilde{V}_t^{(m,y_f)}$ 
for $t = p+1,\ldots, n-m-p$ is measurable with respect to 
$\mathcal{F}_{p+1}^{n-m-p}$, 
while $\bm{X}_n \in \mathcal{F}_{n-p+1}^{n}$. 
Since $\{Y_t^{(m)}\}$ is $m$-dependent and the time gap between 
$\mathcal{F}_{p+1}^{n-m-p}$ and $\mathcal{F}_{n-p+1}^{n}$ is $m+1 > m$, 
these two $\sigma$-algebras are strictly independent. By the fact that the $m$-dependence implies geometrically $\beta$-mixing, following the same argument as in the proof of Proposition
\ref{Proposition:asymptoticMDCP},  we have
\begin{equation*}
    \mathbb{P}\left(\frac{1}{N}\sum_{t=p+1}^{n+1} 
    \mathbbm{1}\left\{\widetilde{V}_t^{(m,Y_{n+1})} \geq 
    \widetilde{V}_{n+1}^{(m,Y_{n+1})}\right\} > \alpha 
    \;\middle|\; \bm{X}_n = \bm{x}_n\right) 
    \to (1-\alpha).
\end{equation*}
Since $m+p+1 = o(n)$, the truncated version is asymptotically 
equivalent to the full version, and hence
\begin{equation*}
    \mathbb{P}\left(\frac{1}{N}\sum_{t=p+1}^{n+1} 
    \mathbbm{1}\left\{\hat{V}_t^{(m,Y_{n+1})} \geq 
    \hat{V}_{n+1}^{(m,Y_{n+1})}\right\} > \alpha 
    \;\middle|\; \bm{X}_n = \bm{x}_n\right) 
    \to (1-\alpha)
\end{equation*}
for any fixed $\bm{x}_n$. We can then conclude that
\begin{equation*}
    \left|\mathbb{P}\left(\frac{1}{N}\sum_{t=p+1}^{n+1} 
    \mathbbm{1}\left\{\hat{V}_t^{(Y_{n+1})} \geq 
    \hat{V}_{n+1}^{(Y_{n+1})}\right\} > \alpha 
    \;\middle|\; \bm{X}_n\right) - (1-\alpha)\right| 
    \xrightarrow{p} 0,
\end{equation*}
which is exactly what we want to prove.


\subsection{Technical Lemmas}\label{Appendix:Lemma}

 \begin{lemma}[Dvoretzky-Kiefer-Wolfowitz inequality \citep{kosorok2008introduction}]
 \label{lem: DKW}
 Consider i.i.d sample $X_1, \ldots, X_n$ from a distribution with CDF $F(x)$, and let
$$
\widehat{F}_n(x)=\frac{1}{n} \sum_{i=1}^{n} \mathbbm{1}\{X_i \leq x\}
$$
denote the usual empirical distribution function. Then,
$$
\mathbb{P}\left(\sup _{x \in \mathbb{R}}\left\{\sqrt{n}\left|\widehat{F}_n(x)-F(x)\right|\right\}>\epsilon\right) \leq 2 e^{-2 \epsilon^2}
$$
for all $\epsilon>0$. 


    

\begin{definition}[Strong ($\alpha$-)mixing condition]
\label{def: alpha}
Suppose $\{X_i\}_{i \in \mathbb{Z}}$ is a sequence of random variables on a given probability space $(\Omega, \mathcal{F}, \mathbb{P})$. For $-\infty \leq j \leq \ell \leq \infty$, let $\mathcal{F}_j^{\ell}$ denote the $\sigma$-field of events generated by the random variables
$X _{ k }, j \leq k \leq \ell( k \in \mathbb{Z})$. For any two $\sigma$-fields $\mathcal{F}_{-\infty}^j$ and $\mathcal{F}_\ell^\infty$, define the ``measure of dependence"
$$
\alpha(\mathcal{F}_{-\infty}^j, \mathcal{F}_{\ell}^\infty):=\sup _{A \in \mathcal{F}_{-\infty}^j, B \in \mathcal{F}_{\ell}^\infty}|\mathbb{P}(A \cap B)-\mathbb{P}(A) \mathbb{P}(B)| .
$$
For the given random sequence $\{X_i\}$, for any positive integer $n$, define the dependence coefficient
$$
\alpha(n)=\alpha(X,n):=\sup _{j \in \mathbb{Z}} \alpha\left(\mathcal{F}_{-\infty}^j, \mathcal{F}_{j+n}^{\infty}\right).
$$
In the case where the given sequence $X$ is strictly stationary, the dependent coefficient has a simpler form
$$
\alpha(n)=\alpha(X, n):=\alpha\left(\mathcal{F}_{-\infty}^0, \mathcal{F}_n^{\infty}\right).
$$
\end{definition}

\begin{definition}[$\beta$-mixing condition]
  Suppose the same settings in Definition \ref{def: alpha}. For any two $\sigma$-fields $\mathcal{F}_{-\infty}^j$ and $\mathcal{F}_\ell^\infty$, define the $\beta$-mixing coefficients:
  $$
\beta(\mathcal{F}_{-\infty}^j, \mathcal{F}_{\ell}^\infty):= \mathbb{E}\sup_{B\in \mathcal{F}_{\ell}^\infty}|\mathbb{P}(B)-\mathbb{P}(B|\mathcal{F}_{-\infty}^j)|.
  $$
For the given random sequence $\{X_i\}$, we have similar definitions for $\beta(n)$.
\end{definition}

\begin{claim}[The relationship between $\alpha$- and $\beta$-mixing conditions]
For any two $\sigma$-fields $\mathcal{F}_{-\infty}^j$ and $\mathcal{F}_\ell^\infty$, these measures of dependence satisfy
$$
0\leq \alpha(\mathcal{F}_{-\infty}^j, \mathcal{F}_{\ell}^\infty)\leq 1/4, \quad 0 \leq \beta(\mathcal{F}_{-\infty}^j, \mathcal{F}_{\ell}^\infty) \leq 1
$$
and
$$
\alpha(\mathcal{F}_{-\infty}^j, \mathcal{F}_{\ell}^\infty) \leq 1/2\beta(\mathcal{F}_{-\infty}^j, \mathcal{F}_{\ell}^\infty).
$$
Thus, if a process $X_t$ is $\beta$-mixing, it implies $X_t$ is also $\alpha$-mixing.
\end{claim}


\begin{lemma}[Davydov's and Billingsley's inequality frome Lemma 1.1 and 1.2 of \cite{ibragimov1962some}]
\label{lem: davydov}
  Let $\{X_i\}_{i \in \mathbb{Z}}$ be a strictly stationary series which satisfies above $\beta$-mixing condition. If random variable $\xi$ is measurable w.r.t. $\mathcal{F}_{-\infty}^j$ and $\eta$ is measurable w.r.t. $\mathcal{F}_{j+n}^{\infty}$, and if $\mathbb{E}|\xi|^r < \infty$, $\mathbb{E}|\eta|^s < \infty$, with $r>1$, $s>1$, $\frac{1}{r} + \frac{1}{s} = 1$ then 
$$
|\mathrm{Cov}(\xi, \eta)| \leq2(\beta(n))^{1/r}\|X\|_r\|Y\|_s.
$$

If $\{X_i\}_{i \in \mathbb{Z}}$ is a strictly stationary series which satisfies above $\alpha$-mixing condition, and if $|\xi| < C_1$, $|\eta| < C_2$ for some constants $C_1$ and $C_2$, then 
$$
|\mathrm{Cov}(\xi, \eta)| \leq4\alpha(n)C_1C_2.
$$

\end{lemma}

\begin{lemma}[Theorem 1.3 of \cite{bosq2012nonparametric}]
\label{lem: upper bound}
    Let $\{X_t\}_{t \in \mathbb{Z}}$ be a zero-mean real-valued strictly stationary bounded process satisfying $\alpha$-mixing condition. Then for each integer $q \in \left[1, \frac{n}{2}\right]$ and each $\varepsilon > 0$, we have
$$
\mathbb{P}\!\left( \left| \sum_{t=1}^n X_t \right| > n \varepsilon \right)
\leq
4 \exp\!\left(-\frac{\varepsilon^2}{8 v^2(q)}\, q\right)+ 22 \left(1 + \frac{4 \|X_0\|_\infty}{\varepsilon}\right)^{1/2} q\, \alpha\!\left( \left\lfloor \frac{n}{2q} \right\rfloor \right),
$$
where
$$
v^2(q) = \frac{2}{(\frac{n}{2q})^2}
\mathrm{Var}\bigg(\sum_{t=1}^{\left\lfloor \frac{n}{2q} \right\rfloor + 1} X_t\bigg)+\frac{\varepsilon \|X_0\|_\infty}{2},
$$
and $\alpha\!\left( \left\lfloor \frac{n}{2q} \right\rfloor \right)$ is the strong mixing coefficient
of order $\left\lfloor \frac{n}{2q} \right\rfloor$. For simplicity, we denote $\mathrm{Var}\bigg(\sum_{t=1}^{\left\lfloor \frac{n}{2q} \right\rfloor + 1} X_t\bigg)$ by $\sigma^2(q)$.
\end{lemma}
\subsubsection{Proof of Lemma \ref{lem: mse of uhat}}

\begin{proof}[Proof of Lemma \ref{lem: mse of uhat}]
    The proof is motivated by Proof of Theorem 1.4 in \cite{li2007nonparametric}. Throughout this proof, $C_1,\ldots  C_{12}$ represent different constants. We consider the decomposition
    \begin{equation}
        \begin{aligned}
 \sup _{\bm{x}, y} {|\hat{F}(y|\bm{x})-F(y|\bm{x})|} &=\sup _{\bm{x}, y}\left|\frac{1}{\overline{W}_h(\bm{x})}(\hat{F}(\bm{x},y)-\overline{W}_h(\bm{x}) F(y|\bm{x}))\right| \\
& =\sup _{\bm{x}, y}\left|\frac{1}{\overline{W}_h(\bm{x})}(\hat{F}(\bm{x}, y)-\mathbb{E}\hat{F}(\bm{x}, y)+\mathbb{E}\hat{F}(\bm{x}, y)-\mathbb{E}(\overline{W}_h(\bm{x}) F(y|\bm{x}))\right.\\
&\qquad +\mathbb{E}(\overline{W}_h(\bm{x}) F(y|\bm{x}))-\overline{W}_h(\bm{x}) F(y|\bm{x}))\Bigg| \\
& \leq C_1 \cdot \sup _{\bm{x}, y}\left|\hat{F}(\bm{x}, y)-\mathbb{E}\hat{F}(\bm{x}, y)+\mathbb{E}\hat{F}(\bm{x}, y)-\mathbb{E}(\overline{W}_h(\bm{x}) F(y|\bm{x}))\right.\\
&\qquad +\mathbb{E}(\overline{W}_h(\bm{x}) F(y|\bm{x}))-\overline{W}_h(\bm{x}) F(y|\bm{x})\Bigg| \\
& \leq C_1 \cdot[\sup _{\bm{x}, y}|\hat{F}(\bm{x}, y)-\mathbb{E}\hat{F}(\bm{x}, y)| +\sup _{\bm{x}, y}|\mathbb{E}\hat{F}(\bm{x}, y)-\mathbb{E}(\overline{W}_h(\bm{x}) F(y|\bm{x}))|\\
&\qquad +\sup _{\bm{x}, y}|\mathbb{E}(\overline{W}_h(\bm{x}) F(y|\bm{x}))-\overline{W}_h(\bm{x}) F(y|\bm{x})|] \\
& =C_1 \cdot\left(Q_1+Q_2+Q_3\right) ;
\end{aligned}
    \end{equation}
    where 
    $$
    \begin{aligned}
        Q_1 &:=\sup _{\bm{x}, y}|\hat{F}(\bm{x}, y)-\mathbb{E}\hat{F}(\bm{x}, y)|,\\
    Q_2 &:=\sup _{\bm{x}, y}|\mathbb{E}\hat{F}(\bm{x}, y)-\mathbb{E}(\overline{W}_h(\bm{x}) F(y|\bm{x}))|,\\
    Q_3&:=\sup _{\bm{x}, y}|\mathbb{E}(\overline{W}_h(\bm{x})  F(y|\bm{x})) - \overline{W}_h(\bm{x}) F(y|\bm{x})|.
    \end{aligned}
    $$
    Then, we will analyze the order of $Q_1, Q_2$ and $Q_3$ one by one.
    
\noindent{\sc Step 1. Consider the bound of $Q_1$.}
Under Assumption A4, the domain of $X$ and $Y$ is a compact set $\left[-C_M, C_M\right]^{p+1}$. Define the space $\mathcal{S}:= \left[-C_M, C_M\right]^{p+1}$ which can be covered by a finite number $J_n$ cubes $\Xi_k$ with centers $({x}_{k, n},y_{k,n})$ for $k= 1, \ldots, J_n$ ; $J_n$ is a constant which depends on the dimension $p+1$, the constant $C_M$, and the length of the cube $l_n$. Then, we have
 \begin{equation}
    \begin{aligned}
\sup _{\bm{x} \times y \in \mathcal{S}}|\hat{F}(\bm{x}, y)-\mathbb{E}\hat{F}(\bm{x}, y)|= & \max _{1 \leq k \leq J_n} \sup _{\bm{x} \times y \in \mathcal{S} \cap \Xi_k}|\hat{F}(\bm{x}, y)-\mathbb{E}\hat{F}(\bm{x}, y)| \\
\leq & \max _{1 \leq k \leq J_n} \sup _{\bm{x} \times y \in \mathcal{S} \cap \Xi_k}\left|\hat{F}(\bm{x}, y)-\hat{F}\left(\bm{x}_{k, n},y_{k,n}\right)\right| \\
& +\max _{1 \leq k \leq J_n}\left|\hat{F}\left(\bm{x}_{k,n},y_{k,n}\right)-\mathbb{E}\hat{F}\left(\bm{x}_{k,n},y_{k,n}\right)\right| \\
& +\max _{1 \leq k \leq J_n} \sup _{\bm{x} \times y \in \mathcal{S} \cap \Xi_k}\left|\mathbb{E} \hat{F}\left(\bm{x}_{k,n},y_{k,n}\right) -\mathbb{E}\hat{F}(\bm{x}, y)\right| \\
:= & V_1+V_2+V_3 .
\end{aligned}
\end{equation}
\noindent{\sc Step 1.1 Derive the term $V_1$.}
Denote $G(\bm{X}_{t-1}, Y_t,\bm{x},y) =  W_h(\bm{X}_{t-1},\bm{x})K\left(\frac{y-Y_t}{h_0}\right)$ and $N = n-p+1$. Thus, we have the upper bound of $V_1$ by
\begin{equation}
\label{eq: V1}
    \begin{aligned}
    & \max _{1 \leq k \leq J_n} \sup _{\bm{x} \times y \in \mathcal{S} \cap \Xi_k}\left|\hat{F}(\bm{x}, y)-\hat{F}\left(\bm{x}_{k, n},y_{k,n}\right)\right|  \\
    &=\max _{1 \leq k \leq J_n} \sup _{\bm{x} \times y \in \mathcal{S} \cap \Xi_k}\frac{1}{N}\left|\sum_{t=p+1}^{n+1} G(\bm{X}_{t-1},Y_t,\bm{x},y)- \sum_{t=p+1}^{n+1} G(\bm{X}_{t-1},Y_t,\bm{x}_{k,n},y_{k,n})\right| \\
    & \leq \frac{C_2}{Nh^p} \max_{1 \leq k\leq J_n }\sup_{\bm{x}\times y\in \mathcal{S}\cap \Xi_k} \sum_{t=p+1}^{n+1} \sqrt{L_w^2+L_K^2}\sqrt{\frac{\|\bm{x}-\bm{x}_{k,n}\|^2}{h^2} +\frac{(y-y_{k,n})^2}{h_0^2}}\\
    & \leq C_3\frac{\sqrt{h_0^2+h^2}}{h^{p+1}h_0}l_n;
    \end{aligned}
\end{equation}
where the first equality is due to the definition of kernel estimator; in the second equality, $w$ in $W_h$ is $L_w$-Lipschitz continuous, $K$ is $L_K$-Lipschitz continuous on the closed domains from Assumption A4, $w$ and $K$ are bounded on their whole domain from Assumption A5 and the basic inequality
$$
|f(x_1)g(y_1)-f(x_2)g(y_2) |\leq \sqrt{M_f^2L_g^2+M_g^2L_f^2}\cdot \sqrt{(x_1-x_2)^2+(y_1-y_2)^2},
$$
if $f$ is $L_f$-Lipschitz continuous and bounded by $M_f$ and $g$ is $L_g$-Lipschitz continuous and bounded by $M_g$; and the last inequality is due to the finite cover of the space $\mathcal{S}$; $C_3$ is constant independent of $l_n$, $h$, $h_0$ and $p$.

\noindent{\sc Step 1.2 Derive the upper bound of $V_3$.}

For the term $V_3:=\max _{1 \leq k \leq J_n} \sup _{\bm{x} \times y \in \mathcal{S} \cap \Xi_k}\left|\mathbb{E}\left(\hat{F}\left(\bm{x}_{k, n},y_{k,n}\right)\right)-\mathbb{E}(\hat{F}(\bm{x}, y))\right|$, we have:
\begin{equation}
\label{eq: V3}
\begin{aligned}
& \max _{1 \leq k \leq J_n} \sup _{\bm{x} \times y \in \mathcal{S} \cap \Xi_k}\left|\mathbb{E}\left(\hat{F}\left(\bm{x}_{k, n},y_{k,n}\right)\right)-\mathbb{E}(\hat{F}(\bm{x}, y))\right| \\
&\leq \max _{1 \leq k \leq J_n} \sup _{\bm{x} \times y \in \mathcal{S} \cap \Xi_k} \mathbb{E}\left[\left|\hat{F}\left(\bm{x}_{k, n},y_{k,n}\right)-\hat{F}(\bm{x}, y)\right|\right] \\
& \leq \max _{1 \leq k \leq J_n} \mathbb{E}\left[\sup _{\bm{x} \times y \in \mathcal{S} \cap \Xi_k}\left|\hat{F}\left(\bm{x}_{k, n},y_{k,n}\right)-\hat{F}(\bm{x}, y)\right|\right] \\
& \leq C_3\frac{\sqrt{h_0^2+h^2}}{h^{p+1}h_0}l_n;
\end{aligned}
\end{equation}
where the last inequality is due to the inequality \eqref{eq: V1}. 

\noindent{\sc Step 1.3 Derive the upper bound of $V_2$.}
For $$V_2:=\max _{1 \leq k \leq J_n}\left|\hat{F}\left(\bm{x}_{k, n},y_{k,n}\right)-\mathbb{E}\left(\hat{F}\left(\bm{x}_{k, n},y_{k,n}\right)\right)\right|,$$ we can define a function 
$$
\begin{aligned}
\Psi_n(\bm{x},y)&:= \hat{F}(\bm{x}, y)-\mathbb{E}\hat{F}(\bm{x},y)\\
&= \frac{1}{N}\left[\sum_{t=p+1}^{n+1}W_h(\bm{X}_{t-1},\bm{x})K\left(\frac{y-Y_t}{h_0}\right) -\mathbb{E}W_h(\bm{X}_{t-1},\bm{x})K\left(\frac{y-Y_t}{h_0}\right)\right]:=\frac{1}{N}\sum_{t=p+1}^{n+1}\psi_{t,n}.
\end{aligned}
$$
Then, for any $\eta>0$, we can write
\begin{equation}
\begin{aligned} 
\mathbb{P}\left(V_2>\eta\right) & =\mathbb{P}\left(\max _{1 \leq k \leq J_n}\left|\hat{F}\left(\bm{x}_{k,n}, y_{k,n}\right)-\mathbb{E}\left(\hat{F}\left(\bm{x}_{k,n}, y_{k,n}\right)\right)\right|>\eta\right) \\ 
& =\mathbb{P}\left(\max _{1 \leq k \leq J_n}\left|\Psi_n\left(\bm{x}_{k,n}, y_{k,n}\right)\right|>\eta\right) \\ 
& \leq \mathbb{P}\left(\left|\Psi_n\left(\bm{x}_{1,n}, y_{1,n}\right)\right|>\eta \bigcup\left|\Psi_n\left(\bm{x}_{2,n}, y_{2,n}\right)\right|>\eta \bigcup \cdots\left|\Psi_n\left(\bm{x}_{J_n,n}, y_{J_n,n}\right)\right|>\eta\right) \\ 
& \leq J_n \max _{1 \leq k \leq J_n } \mathbb{P}\left(\left|\Psi_n(\bm{x}_{k, n},y_{k,n})\right|>\eta\right) .
\end{aligned}
\end{equation}

The following proof works for any $(\bm{x}_{k,n},y_{k,n})$. Denote $C_4: =\sup_{\bm{x}}|\prod_{s=1}^p w(\bm{x}_s)|.$ Denote that $|G(\bm{X}_{t-1}, Y_t,\bm{x},y) |\leq b := C_4h^{-p}$. So, $\psi_{n,t} \leq C_5h^{-p}$. Moreover, by our assumption, there exist $\gamma>0$ and $\rho\in(0,1)$ such that $\alpha(k)\leq 1/2\beta(k) \leq {\gamma}/{2}\rho^k$ for $k\geq 1.$ 
By the definition of $\psi_{n,t}$, we have
$$
\mathrm{Var}(\psi_{n,t}) =\mathrm{Var}(G(\bm{X}_{t-1}, Y_t,\bm{x},y)) = \mathbb{E}(G^2(\bm{X}_{t-1}, Y_t,\bm{x},y))- \mathbb{E}^2(G(\bm{X}_{t-1}, Y_t,\bm{x},y)).
$$
For simplification, denote $G(\bm{X}_{t-1}, Y_t,\bm{x},y)$ by $G$. By the definition of $G$, we consider the second moment $\mathbb{E}[G^2]$
\begin{align*}
    \mathbb{E}[G^2] &= \frac{1}{h^{2p}}\iint \prod_{s=1}^p w^2\left(\frac{u_s-x_{s}}{h}\right) K^2\left(\frac{y-v}{h_0}\right) f_{\bm{X},Y}(\bm{u},v)d\bm{u} dv\\
   & =\frac{1}{h^{p}}\iint \prod_{s=1}^pw^2(u'_s)K^2\left(\frac{y-v}{h_0}\right) f_{\bm{X},Y}(\bm{x}h+u',v)du'dv = O\left(\frac{1}{h^{p}}\right)
\end{align*}
and first moment $\mathbb{E}[G]$ 
\begin{align*}
    \mathbb{E}[G] &= \frac{1}{h^{p}}\iint \prod_{s=1}^p w\left(\frac{u_s-x_{s}}{h}\right) K\left(\frac{y-v}{h_0}\right) f_{\bm{X},Y}(\bm{u},v)d\bm{u} dv\\
   & =\iint \prod_{s=1}^pw(u'_s)K\left(\frac{y-v}{h_0}\right) f_{X,Y}(\bm{x}h+\bm{u}',v)d\bm{u}'dv = O\left(1\right).
\end{align*}
Using Davydov’s inequality mentioned in Lemma \ref{lem: davydov} with $r = 1/2$ and $s = 1/2$, we have
\begin{align*}
|\mathrm{Cov}(\psi_{i,n}, \psi_{j,n}) |&\leq 4\beta(|i-j|)^{1/2}\|\psi_{i,n}\|_2\|\psi_{j,n}\|_2 \leq C_6\rho^{-|i-j|/2}\gamma/h^{p}.
\end{align*}
Therefore, by taking $q = \sqrt{N}h^{-\frac{p}{2}}$, for $n$ large enough, s.t., $q\in \left[1, \frac{N}{2}\right]$, we have 
\begin{align*}
    v^2(q) = \frac{8q^2}{N^2}\sigma^2(q) + \frac{A_6\eta}{2h^p}.
\end{align*}
Without loss of generality, we assume $k:= \left\lfloor \frac{N}{2q} \right\rfloor$ is an integer, then we have 
\begin{equation*}
    \begin{split}
        \sigma^2(q) = 
&\sum_{t=1}^{k+1} \operatorname{Var}\left(\psi_{t,n}\right)+2 \sum_{1 \leq i<j \leq k+1} \operatorname{Cov}\left(\psi_{i,n}, \psi_{j,n}\right) \\
&\leq O\left(\frac{k}{h^p}\right) + 2
\sum_{i=1}^{k} \sum_{l=1}^{k+1-i}\left|\operatorname{Cov}\left(\psi_{i,n}, \psi_{l+i, n}\right)\right|\\
&\leq O\left(\frac{k}{h^p}\right)  + O(h^{-p} \sum_{i=1}^{k}(k+1-i) \rho^{-i/2}) \\
&  \leq O\left(\frac{k}{h^p}\right)  +
 O( k h^{-p} \sum_{i=1}^{\infty} \rho^{-i/2}) =  O( k h^{-p}).
    \end{split}
\end{equation*}
With the $\eta$ chosen as below, the above inequality implies that
$
v^2(q) = O\left(\frac{A_6\eta}{h^{p}}\right).
$
Hence by Lemma \ref{lem: upper bound}, we have,
\begin{align*}
    \mathbb{P}\left(\frac{1}{N}\left|\sum_{t=p+1}^{n+1} \psi_{t,n} \right|>\eta\right) \leq 4\exp\left(-\frac{\eta^2}{8v^2(q)}q\right)+22\left(1+\frac{4A_5h^{-p}}{\eta}\right)^{1/2}\frac{q\gamma}{2} \rho^{[\frac{N}{2q}]},
\end{align*}
which implies 
$$
 \mathbb{P}\left(\frac{1}{N}\left|\sum_{t=p+1}^{n+1} \psi_{n,t} \right|>\eta\right) \leq 4\exp(-C_7(Nh^p)^{1/2}\eta)+C_8N^{1/2}h^{-p} \eta^{-1/2} \rho^{[\frac{N}{2q}]} ;
$$
By setting $\eta =\frac{\log N}{\sqrt{Nh^p}} $, we obtain the bound
$$
4N^{-C_7} + C_8N^{3/4}h^{-3p/4} (\log N)^{-1/2} \rho^{\sqrt{Nh^p}} \lesssim N^{-C_9},
$$
which tends to zero for some $C_9$ when $Nh^p$ is sufficiently large. The above inequality can be seen by writing 
\begin{equation*}
    \begin{split}
        C_8N^{3/4}h^{-3p/4} (\log N)^{-1/2} \rho^{\sqrt{Nh^p}} &= C e^{\frac{3}{4}\log N}\cdot e^{-\frac{3}{4}\log h^p}\cdot (\log N)^{-1/2} \cdot e^{\sqrt{Nh^p}\log \rho} \\
        &= C_8 e^{\frac{3}{4}\log \frac{N}{h^p}}\cdot (\log N)^{-1/2} \cdot e^{-C_{10}\sqrt{Nh^p}} \\
        & = C_8 (\log N)^{-1/2} \cdot e^{\frac{3}{4}\log \frac{N}{h^p} - C_{10}\sqrt{Nh^p}}.
    \end{split}
\end{equation*}
Let $h^p = N^{-\tau}$, where $0<\tau<1$ so the assumption $Nh^p \to \infty$ satisfies and $N^{1-\tau-\frac{2\tau}{p}-C_9}\to 0$. Then, we have $CN^{3/4}h^{-3p/4} (\log N)^{-1/2} \rho^{\sqrt{Nh^p}}  = C_8 (\log N)^{-1/2} \cdot e^{\frac{3}{4}\log N^{1+\tau} - C_{10}\sqrt{N^{1-\tau}}}$, which converges to 0 as exponentially fast rate. All in all, we have $\mathbb{P}(V_2 >\eta) \leq \frac{2J_n}{N^{C_9}}.$

\noindent{\sc Step 1.4. Combine all terms $V_1, V_2$ and $V_3$. }
To incorporate all terms $V_1, V_2$ and $V_3$ together, we can take $h_0 \asymp h$ for simplicity and $l_n=\log (N)\left(h^{p+2}/N\right)^{1 / 2}$. Then, inequalities \eqref{eq: V1} and \eqref{eq: V3} satisfy
\begin{equation}
\label{eq: V1 V3}
\begin{aligned}
& V_1:=\max _{1 \leq k \leq J_n} \sup _{\bm{x} \times y \in \mathcal{S} \cap \Xi_k}\left|\hat{F}(\bm{x}, y)-\hat{F}\left(\bm{x}_{k, n}, y_{k,n}\right)\right| \leq \frac{C_{11}}{h^{p+1}} l_n=C_{11}\log N\left(\frac{1}{N h^{p}}\right)^{1 / 2}; \\
& V_3:=\max _{1 \leq k \leq J_n} \sup _{\bm{x} \times y \in \mathcal{S} \cap \Xi_k}\left|\mathbb{E}\left(\hat{F}\left(\bm{x}_{k, n}, y_{k,n}\right)\right)-\mathbb{E}(\hat{F}(\bm{x}, y))\right| \leq \frac{C_{11}}{h^{p+1}} l_n=C_{11}\log N\left(\frac{1}{N h^{p}}\right)^{1 / 2} .
\end{aligned}
\end{equation}
Without loss of generality, we assume that $C_{12}$ is a large value such that $C_{12}>\max \left(C_{11}, 1\right)$. 
Combining the inequalities in $V_2$ and \eqref{eq: V1 V3}, we can obtain the upper bound for $Q_1$ with probability at least $1-2J_n N^{-C_9}$,
\begin{equation}
    \label{eq: Q1}
    \begin{aligned} Q_1 & :=\sup _{\bm{x} \times y \in \mathcal{S}}|\hat{F}(\bm{x}, y)-\mathbb{E}(\hat{F}(\bm{x}, y))| \leq V_1+V_2+V_3  \leq 3 C_{12}\log N\left(\frac{1}{N h^{p}}\right)^{1 / 2};\end{aligned}
\end{equation}
where $J_n \asymp \frac{1}{l_n^{p+1}}$.

\noindent{\sc Step 2. Consider the bound of $Q_3$.} For $Q_3:=\sup _{\bm{x}, y}|\mathbb{E}(\overline{W}_h(\bm{x}) F(y|\bm{x}))-\overline{W}_h(\bm{x}) F(y|\bm{x})|$, we have
$$
\sup _{\bm{x}, y}|\mathbb{E}(\overline{W}_h(\bm{x}) F(y|\bm{x}))-\overline{W}_h(\bm{x}) F(y|\bm{x})| \leq \sup _{\bm{x}}|\overline{W}_h(\bm{x})-\mathbb{E}(\overline{W}_h(\bm{x}))| ;
$$
then, with probability at least $1- 2J_n'N^{-C_9^{'}}$, we have
\begin{equation}
    \label{eq: Q3}
    \begin{aligned} 
    Q_3:=\sup _{\bm{x}, y}|\mathbb{E}(\overline{W}_h(\bm{x}) F(y|\bm{x}))-\overline{W}_h(\bm{x}) F(y|\bm{x})| \leq 3C_{12}^{'} \frac{\log N}{\sqrt{Nh^p}};
    \end{aligned}
\end{equation}
where $C_9^{'}$ and $C_{12}^{'}$ are constants similar to $C_9$ and $C_{12}$ above, respectively, by taking $l'_n = O\left(\log (N)\left(h^{p}/N\right)^{1 / 2}\right).$

\noindent{\sc Step 3. Consider the bound of $Q_2$.}
At first, a third-order differentiable $f(\bm{x})$ has a second-order $p$-dimensional Taylor expansion at $x$
$$
f(\bm{x}+ \Delta) = f(\bm{x}) + \nabla f(\bm{x} )^{\top} \Delta +\frac{1}{2}\Delta^{\top} \nabla^2 f(\bm{x} + t \Delta ) \Delta;
$$
where $t \in (0,1)^p$. Now we will use the definition of expectation for multivariate functions to discuss $\mathbb{E}\hat{F}(\bm{x},y)$.

For $$
\mathbb{E}\hat{F}(\bm{x}, y)= \frac{1}{Nh^p}\sum_{t=p+1}^{n+1}\mathbb{E}\left[\prod_{s=1}^p w\left(\frac{X_{t-1,s}-x_s}{h}\right)K\left(\frac{y-Y_t}{h_0}\right)\right],$$ we have
\begin{align*}
         \mathbb{E}\hat{F}(\bm{x},y) &
         = \frac{1}{h^p}\iint\prod_{s=1}^p w\left(\frac{u_s-x_s}{h}\right)K\left(\frac{y-u_y}{h_0}\right)f_{\bm{X},Y}(\bm{u},u_y)d\bm{u}du_y\\
         & = \iint \prod_{s=1}^p w(v_s)K\left(\frac{y-u_y}{h_0}\right)f_{\bm{X},Y}(\bm{x}+h\bm{v}, u_y) d\bm{v}du_y\\
         & \stackrel{(a)}{=}\iint \prod_{s=1}^pw(v_s) K\left(\frac{y-u_y}{h_0}\right) \cdot \bigg(f_{\bm{X},Y}(\bm{x},u_y) + h \bm{v}^\top \nabla_{\bm{x}} f_{\bm{X},Y}(\bm{x},u_y) \\
       &  \qquad \qquad + \frac{1}{2}h^2\bm{v}^\top \nabla^2_{\bm{x}} f_{\bm{X},Y}(\bm{\xi}, u_y) \bm{v}d\bm{v} du_y\bigg)\\
       & = \iint \prod_{s=1}^p w(v_s) K\left(\frac{y-u_y}{h_0}\right)f_{\bm{X},Y}(\bm{x},u_y) d\bm{v}du_y\\
       &\qquad + \iint \prod_{s=1}^pw(v_s) K\left(\frac{y-u_y}{h_0}\right)h \bm{v}^\top \nabla_{\bm{x}} f_{\bm{X},Y}(\bm{x},u_y) d\bm{v}du_y\\
       &\qquad +\frac{1}{2}\iint \prod_{s=1}^pw(v_s) K\left(\frac{y-u_y}{h_0}\right)h^2 \bm{v}^\top \nabla^2_{\bm{x}} f_{\bm{X},Y}(\bm{\xi}, u_y)\bm{v}d\bm{v} du_y \\
       & \stackrel{(b)}{=} \prod_{s=1}^p\int w(v_s)dv_s \int K\left(\frac{y-u_y}{h_0}\right)f_{\bm{X},Y}(\bm{x},u_y) du_y \\
       &\qquad +\frac{1}{2}h^2 \iint \prod_{s=1}^pw(v_s) K\left(\frac{y-u_y}{h_0}\right)\bm{v}^\top \nabla^2_{\bm{x}}f_{\bm{X},Y}(\bm{\xi}, u_y)\bm{v}d\bm{v} du_y \\
       & = f_X(\bm{x}) \int K\left(\frac{y-u_y}{h_0}\right)f_{Y|X}(u_y|\bm{x}) du_y\\
       & \qquad +\frac{1}{2}h^2 \iint \prod_{s=1}^pw(v_s) K\left(\frac{y-u_y}{h_0}\right)\bm{v}^\top \nabla^2_{\bm{x}}f_{\bm{X},Y}(\bm{\xi}, u_y)\bm{v}d\bm{v} du_y\\
       & \stackrel{(c)}{=} f_X(\bm{x}) F(y|\bm{x}) +\frac{1}{2}h^2 \iint \prod_{s=1}^pw(v_s) \bm{v}^\top \nabla^2_{\bm{x}}f_{\bm{X},Y}(\bm{\xi}, u_y)\bm{v}d\bm{v} du_y +O(h^2);
\end{align*}
where in (a), for each $s$, $\bm{\xi}_s$ in $\bm{\xi} =({\xi}_1, \ldots, {\xi}_p)$ ${\xi}_s$ is between ${x}_s$ and ${x}_s+h{v}_s$; (b) is derived from $\int vw(v)dv=0$ in Assumption A5; and (c) is derived from
\begin{align*}
   & \int K\left(\frac{y-u_y}{h_0}\right)f_{Y|X}(u_y|\bm{x}) du_y \\
    &= K\left(\frac{y-u_y}{h_0}\right)F(u_y|\bm{x})|_{-\infty}^{+\infty} - \int F(u_y|\bm{x})\left(-\frac{1}{h_0}k\left(\frac{y-u_y}{h_0}\right)\right)du_y\\
    & =\frac{1}{h_0}\int F(u_y|\bm{x})k\left(\frac{y-u_y}{h_0}\right)du_y = \int F(y-h_0 z|\bm{x})k(z)dz \\
    & = \int F(y|\bm{x})k(z)dz- \int h_0 zf(y|\bm{x}) k(z)dz +\frac{1}{2}h_0^2 \int z^2f'(\xi_y|\bm{x}) k(z) dz \\
    &= F(y|\bm{x}) + O(h^2);
\end{align*}
where $\int zk(z)dz =0$, $\int z^2 k(z)dz<\infty$, and $\sup_{\bm{x},y}|f'(y|\bm{x})|<\infty$.

And for $\mathbb{E}\overline{W}_h(\bm{x})F(y|\bm{x})$, we have
\begin{equation}
    \begin{aligned}
          \mathbb{E}(\overline{W}_h(\bm{x})F(y|\bm{x}))&= F(y|\bm{x}) \cdot \frac{1}{h^p}\int \prod_{s=1}^pw\left(\frac{u_s-x_s}{h}\right)f_{\bm{x}}(u_s)du\\
          & = F(y|\bm{x}) \int \prod_{s=1}^p w(v_s) f_{\bm{x}}({\bm{x}}+h\bm{v})d\bm{v}\\
          & = F(y|\bm{x}) \int \prod_{s=1}^p w(v_s)\bigg(f_{\bm{x}}(\bm{x})-h\bm{v}^\top \nabla_{\bm{x}} f_{\bm{x}}(\bm{x})+\frac{1}{2}h^2 \bm{v}^\top \nabla^2_{\bm{x}} f_{\bm{x}}(\bm{\xi}')\bm{v}\bigg)d\bm{v} \\
          & = F(y|\bm{x})f_X(\bm{x})+\frac{h^2}{2}F(y|\bm{x})\int \prod_{s=1}^p w(v_s) \bm{v}^\top \nabla^2_{\bm{x}} f_X(\bm{\xi}')\bm{v}d\bm{v}.
    \end{aligned}
\end{equation}

Thus, 
\begin{align*}
    Q_2&:=\sup_{\bm{x},y}|\mathbb{E}(\hat{F}(\bm{x},y) - \mathbb{E}(\overline{W}_h(\bm{x})F(y|\bm{x}))|\\
    & = \frac{h^2}{2}\sup_{\bm{x},y}\bigg| \iint \prod_{s=1}^pw(v_s) \bm{v}^\top \nabla^2_{\bm{x}}f_{\bm{X},Y}(\bm{\xi}, u_y)\bm{v}d\bm{v} du_y \\
    & \qquad - F(y|\bm{x})\int \prod_{s=1}^p w(v_s) \bm{v}^\top \nabla^2_{\bm{x}} f_X(\bm{\xi}')\bm{v}d\bm{v}\bigg|+O(h^2)\\
& \leq \frac{h^2}{2} \sup_{\bm{x},y} |M_{\bm{x},y} + F(y|\bm{x})\cdot M_{x}| \iint \prod_{s=1}^pw(v_s) \bm{v}^\top \bm{v}d\bm{v} \leq O( h^2);
\end{align*}
where $M_{\bm{x},y} := \sup_{\bm{x},y} |\nabla^2_{\bm{x}}f_{\bm{X},Y}(\bm{x}, y)|$, $M_{x}:=\sup_x |\nabla^2_{\bm{x}}f_X(\bm{x})|$.

\noindent{\sc Step 4. Combine the terms $Q_1, Q_2, Q_3$.} With Fréchet inequalities, we have
$$
\sup_{\bm{x},y}|\hat{F}(y|\bm{x}) - F(y|\bm{x})| \leq C_1(Q_1+Q_2+Q_3) \leq O\left( \frac{\log N}{\sqrt{Nh^{p}}}+h^2\right)
$$
with probability at least $1 - 2J_nN^{-C_9} - 2J_nN^{-C_9^{'}}$. By the choice of $C_9$ and $\tau$, we can write $1 - 2J_nN^{-C_9} - 2J_nN^{-C_9^{'}}$ as a sequence $1-S_N$ which converge to 1 in an appropriate rate.
\end{proof}

 \end{lemma}

\section{Additional Graphs}\label{Appendix:additionalgraphs}
\begin{figure}[!h]
    \centering
    \includegraphics[scale=0.8]{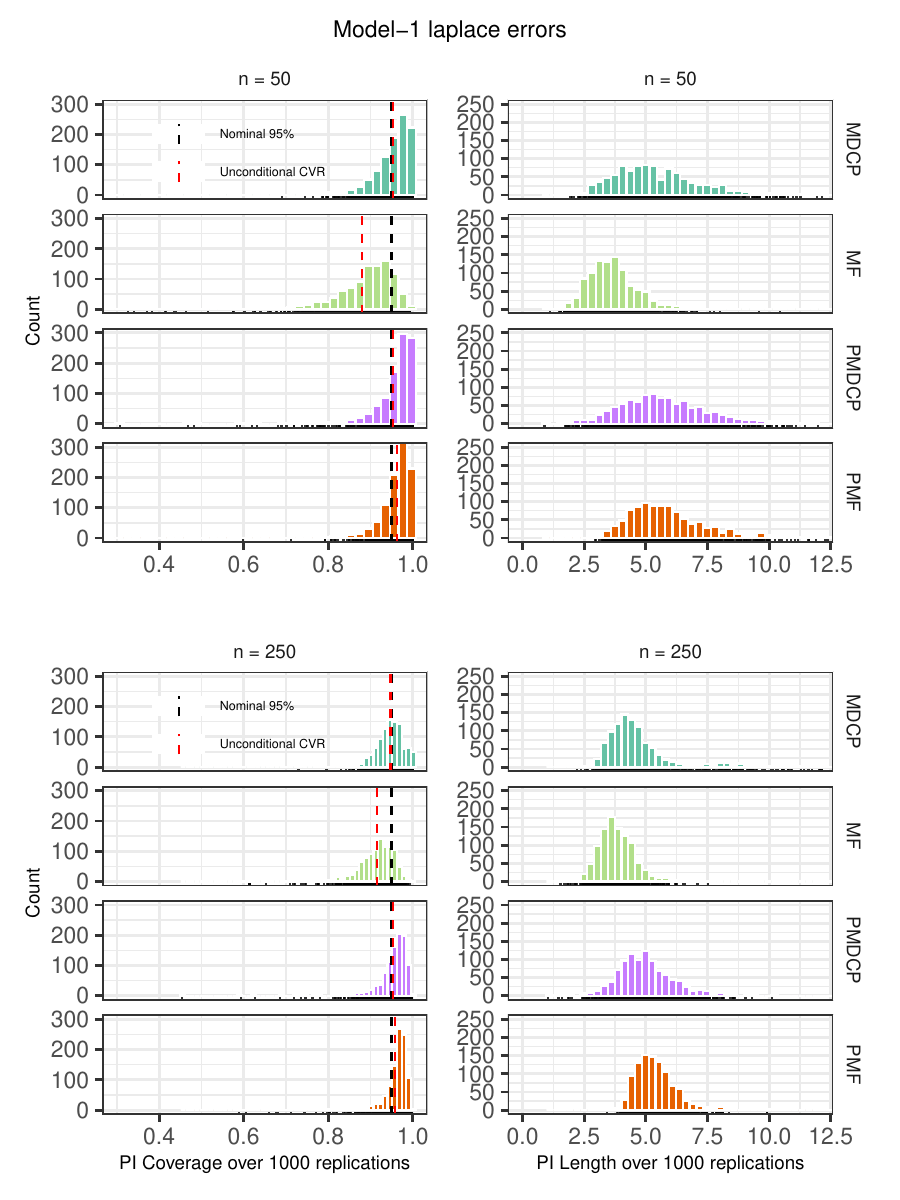}
    \caption{The conditional coverage rate and interval length for different 95$\%$-PIs with simulated data from Model 1 with laplace error.}
    \label{fig:M1laplace}
\end{figure}

\begin{figure}[htbp]
    \centering
    \includegraphics[scale=0.8]{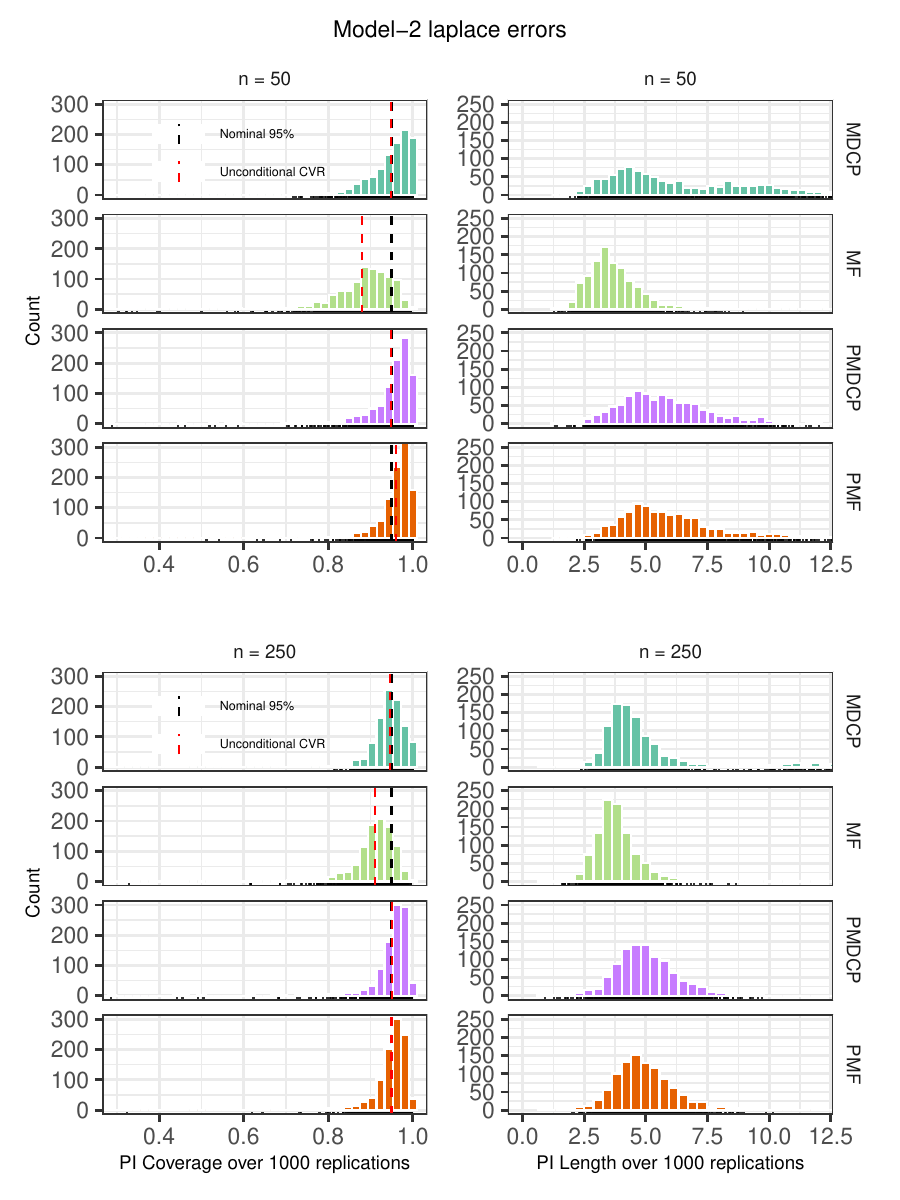}
    \caption{The conditional coverage rate and interval length for different 95$\%$-PIs with simulated data from Model 2 with laplace error.}
    \label{fig:M2laplace}
\end{figure}

%% file: main.bib
@article{shafer2008tutorial,
  title={A tutorial on conformal prediction.},
  author={Shafer, Glenn and Vovk, Vladimir},
  journal={Journal of Machine Learning Research},
  volume={9},
  number={3},
  year={2008}
}

@book{politis2015model,
  title={Model-free prediction and regression: a transformation-based approach to inference},
  author={Politis, Dimitris N},
  year={2015},
  publisher={Springer}
}

@book{li2007nonparametric,
  title={Nonparametric econometrics: theory and practice},
  author={Li, Qi and Racine, Jeffrey Scott},
  year={2007},
  publisher={Princeton University Press}
}

@article{chen2019optimal,
  title={Optimal {Multi-step-ahead} {Prediction} of {ARCH}/{GARCH} {Models} and {NoVaS} {Transformation}},
  author={Chen, Jie and Politis, Dimitris N},
  journal={Econometrics},
  volume={7},
  number={3},
  pages={1--23},
  year={2019},
  publisher={Multidisciplinary Digital Publishing Institute}
}

@book{kosorok2008introduction,
  title={Introduction to empirical processes and semiparametric inference},
  author={Kosorok, Michael R},
  volume={61},
  year={2008},
  publisher={Springer}
}

@article{xu2021conformal,
  title={Conformal prediction for time series},
  author={Xu, Chen and Xie, Yao},
  journal={IEEE transactions on pattern analysis and machine intelligence},
  volume={45},
  number={10},
  pages={11575--11587},
  year={2023},
  publisher={IEEE}
}

@inproceedings{pan2015model,
  title={Model-free bootstrap for markov processes},
  author={Pan, Li and Politis, Dimitris N},
  booktitle={Proceedings of the 60th World Statistics Congress--ISI2015},
  pages={26--31},
  year={2015}
}

@article{sesia2021conformal,
  title={Conformal prediction using conditional histograms},
  author={Sesia, Matteo and Romano, Yaniv},
  journal={Advances in neural information processing systems},
  volume={34},
  pages={6304--6315},
  year={2021}
}

@article{wang2022model,
  title={Model-free bootstrap for a general class of stationary time series},
  author={Wang, Yiren and Politis, Dimitris N},
  journal={Bernoulli},
  volume={28},
  number={2},
  pages={744--770},
  year={2022},
  publisher={Bernoulli Society for Mathematical Statistics and Probability}
}

@inproceedings{zaffran2022adaptive,
  title={Adaptive conformal predictions for time series},
  author={Zaffran, Margaux and F{\'e}ron, Olivier and Goude, Yannig and Josse, Julie and Dieuleveut, Aymeric},
  booktitle={International Conference on Machine Learning},
  pages={25834--25866},
  year={2022},
  organization={PMLR}
}

@article{chudy2020long,
  title={Long-term prediction intervals of economic time series},
  author={Chud{\`y}, Marek and Karmakar, Sayar and Wu, Wei Biao},
  journal={Empirical Economics},
  volume={58},
  number={1},
  pages={191--222},
  year={2020},
  publisher={Springer}
}

@article{wu2024deep,
  title={Deep limit model-free prediction in regression},
  author={Wu, Kejin and Politis, Dimitris N},
  journal={arXiv preprint arXiv:2408.09532},
  year={2024}
}

@article{pan2016bootstrap,
  title={Bootstrap prediction intervals for linear, nonlinear and nonparametric autoregressions},
  author={Pan, Li and Politis, Dimitris N},
  journal={Journal of Statistical Planning and Inference},
  volume={177},
  pages={1--27},
  year={2016},
  publisher={Elsevier}
}

@article{lei2018distribution,
  title={Distribution-free predictive inference for regression},
  author={Lei, Jing and G’Sell, Max and Rinaldo, Alessandro and Tibshirani, Ryan J and Wasserman, Larry},
  journal={Journal of the American Statistical Association},
  volume={113},
  number={523},
  pages={1094--1111},
  year={2018},
  publisher={Taylor \& Francis}
}

@article{lei2013distribution,
  title={Distribution-free prediction sets},
  author={Lei, Jing and Robins, James and Wasserman, Larry},
  journal={Journal of the American Statistical Association},
  volume={108},
  number={501},
  pages={278--287},
  year={2013},
  publisher={Taylor \& Francis}
}

@book{lehmann2005testing,
  title={Testing statistical hypotheses},
  author={Lehmann, Erich Leo and Romano, Joseph P},
  year={2005},
  publisher={Springer}
}

@article{politis2013model,
  title={Model-free model-fitting and predictive distributions},
  author={Politis, Dimitris N},
  journal={Test},
  volume={22},
  number={2},
  pages={183--221},
  year={2013},
  publisher={Springer}
}

@article{chernozhukov2021distributional,
  title={Distributional conformal prediction},
  author={Chernozhukov, Victor and W{\"u}thrich, Kaspar and Zhu, Yinchu},
  journal={Proceedings of the National Academy of Sciences},
  volume={118},
  number={48},
  pages={e2107794118},
  year={2021},
  publisher={National Academy of Sciences}
}

@article{wu2024bootstrap,
  title={Bootstrap prediction inference of nonlinear autoregressive models},
  author={Wu, Kejin and Politis, Dimitris N},
  journal={Journal of Time Series Analysis},
  year={2024},
  volume = {45}, 
  issue = {5},
  page = {800 - 822},
  publisher={Wiley Online Library}
}

@article{politis2023multi,
  title={Multi-Step-Ahead Prediction Intervals for Nonparametric Autoregressions via Bootstrap: Consistency, Debiasing, and Pertinence},
  author={Politis, Dimitris N and Wu, Kejin},
  journal={Stats},
  volume={6},
  number={3},
  pages={839--867},
  year={2023},
  publisher={MDPI}
}

@article{wu2025garchx,
  title={GARCHX-NoVaS: A Bootstrap-Based Approach of Forecasting for GARCHX Models},
  author={Wu, Kejin and Karmakar, Sayar and Gupta, Rangan},
  journal={Journal of Forecasting},
  year={2025},
  publisher={Wiley Online Library}
}

@article{gulay2018comparison,
  title={Comparison of forecasting performances: Does normalization and variance stabilization method beat GARCH (1, 1)-type models? Empirical evidence from the stock markets},
  author={Gulay, Emrah and Emec, Hamdi},
  journal={Journal of Forecasting},
  volume={37},
  number={2},
  pages={133--150},
  year={2018},
  publisher={Wiley Online Library}
}

@article{romano2019conformalized,
  title={Conformalized quantile regression},
  author={Romano, Yaniv and Patterson, Evan and Candes, Emmanuel},
  journal={Advances in neural information processing systems},
  volume={32},
  year={2019}
}

@incollection{efron1992bootstrap,
  title={Bootstrap methods: another look at the jackknife},
  author={Efron, Bradley},
  booktitle={Breakthroughs in statistics: Methodology and distribution},
  pages={569--593},
  year={1992},
  publisher={Springer}
}

@article{chen2016trimmed,
  title={Trimmed conformal prediction for high-dimensional models},
  author={Chen, Wenyu and Wang, Zhaokai and Ha, Wooseok and Barber, Rina Foygel},
  journal={arXiv preprint arXiv:1611.09933},
  year={2016}
}

@article{ibragimov1962some,
  title={Some limit theorems for stationary processes},
  author={Ibragimov, Ildar A},
  journal={Theory of Probability \& Its Applications},
  volume={7},
  number={4},
  pages={349--382},
  year={1962},
  publisher={SIAM}
}

@article{lei2014distribution,
  title={Distribution-free prediction bands for non-parametric regression},
  author={Lei, Jing and Wasserman, Larry},
  journal={Journal of the Royal Statistical Society Series B: Statistical Methodology},
  volume={76},
  number={1},
  pages={71--96},
  year={2014},
  publisher={Oxford University Press}
}

@article{foygel2021limits,
  title={The limits of distribution-free conditional predictive inference},
  author={Foygel Barber, Rina and Candes, Emmanuel J and Ramdas, Aaditya and Tibshirani, Ryan J},
  journal={Information and Inference: A Journal of the IMA},
  volume={10},
  number={2},
  pages={455--482},
  year={2021},
  publisher={Oxford University Press}
}

@inproceedings{vovk2012conditional,
  title={Conditional validity of inductive conformal predictors},
  author={Vovk, Vladimir},
  booktitle={Asian conference on machine learning},
  pages={475--490},
  year={2012},
  organization={PMLR}
}

@book{bosq2012nonparametric,
  title={Nonparametric statistics for stochastic processes: estimation and prediction},
  author={Bosq, Denis},
  volume={110},
  year={2012},
  publisher={Springer Science \& Business Media}
}

@article{zheng2024conformal,
  title={Conformal predictions under Markovian data},
  author={Zheng, Fr{\'e}d{\'e}ric and Proutiere, Alexandre},
  journal={arXiv preprint arXiv:2407.15277},
  year={2024}
}

@book{politis2019time,
  title={Time series: A first course with bootstrap starter},
  author={Politis, Dimitris N and McElroy, Tucker S},
  year={2019},
  publisher={Chapman and Hall/CRC}
}

@article{oliveira2024split,
  title={Split conformal prediction and non-exchangeable data},
  author={Oliveira, Roberto I and Orenstein, Paulo and Ramos, Thiago and Romano, Joao Vitor},
  journal={Journal of Machine Learning Research},
  volume={25},
  number={225},
  pages={1--38},
  year={2024}
}

@article{rosenblatt1952remarks,
  title={Remarks on a multivariate transformation},
  author={Rosenblatt, Murray},
  journal={The annals of mathematical statistics},
  volume={23},
  number={3},
  pages={470--472},
  year={1952},
  publisher={JSTOR}
}

@article{bradley2005basic,
  title={Basic properties of strong mixing conditions. A survey and some open questions},
  author = {Richard C. Bradley},
volume = {2},
journal = {Probability Surveys},
number = {none},
publisher = {Institute of Mathematical Statistics and Bernoulli Society},
pages = {107 -- 144},
year = {2005}
}

@article{bierens1983uniform,
  title={Uniform consistency of kernel estimators of a regression function under generalized conditions},
  author={Bierens, Herman J},
  journal={Journal of the American Statistical Association},
  volume={78},
  number={383},
  pages={699--707},
  year={1983},
  publisher={Taylor \& Francis}
}

@book{billingsley1968convergence,
  title={Convergence of probability measures},
  author={Billingsley, Patrick},
  year={1968},
  publisher={John Wiley \& Sons}
}

@book{potscher1997dynamic,
  title={Dynamic nonlinear econometric models: Asymptotic theory},
  author={P{\"o}tscher, Benedikt M and Prucha, Ingmar},
  year={1997},
  publisher={Springer Science \& Business Media}
}

@article{hormann2010weakly,
  title={Weakly dependent functional data},
  author={H{\"o}rmann, Siegfried and Kokoszka, Piotr},
  year={2010},
  journal = {Annals of Statistics},
  volume = {38},
  number = {3}
}

@article{ibragimov1975independent,
  title={Independent and stationary sequences of random variables},
  author={Ibragimov, I},
  journal={Wolters, Noordhoff Pub.},
  year={1975},
  publisher={Groningen}
}

@article{andrews1984non,
  title={Non-strong mixing autoregressive processes},
  author={Andrews, Donald WK},
  journal={Journal of Applied Probability},
  volume={21},
  number={4},
  pages={930--934},
  year={1984},
  publisher={Cambridge University Press}
}

@article{pascual2006bootstrap,
  title={Bootstrap prediction for returns and volatilities in GARCH models},
  author={Pascual, Lorenzo and Romo, Juan and Ruiz, Esther},
  journal={Computational Statistics \& Data Analysis},
  volume={50},
  number={9},
  pages={2293--2312},
  year={2006},
  publisher={Elsevier}
}

@book{box1976time,
  title={Time Series Analysis: Forecasting and Control},
  author={Box, George EP and Jenkins, Gwilym M and Reinsel, Gregory C and Ljung, Greta M},
  year={2015},
  publisher={John Wiley \& Sons}
}

@article{engle1982autoregressive,
  title={Autoregressive conditional heteroscedasticity with estimates of the variance of United Kingdom inflation},
  author={Engle, Robert F},
  journal={Econometrica: Journal of the econometric society},
  pages={987--1007},
  year={1982},
  publisher={JSTOR}
}

@article{bollerslev1986generalized,
  title={Generalized autoregressive conditional heteroskedasticity},
  author={Bollerslev, Tim},
  journal={Journal of Econometrics},
  volume={31},
  number={3},
  pages={307--327},
  year={1986},
  publisher={North-Holland}
}

@article{pascual2001effects,
  title={Effects of parameter estimation on prediction densities: a bootstrap approach},
  author={Pascual, Lorenzo and Romo, Juan and Ruiz, Esther},
  journal={International Journal of Forecasting},
  volume={17},
  number={1},
  pages={83--103},
  year={2001},
  publisher={Elsevier}
}

@article{chen2004nonparametric,
  title={Nonparametric multistep-ahead prediction in time series analysis},
  author={Chen, Rong and Yang, Lijian and Hafner, Christian},
  journal={Journal of the Royal Statistical Society: Series B (Statistical Methodology)},
  volume={66},
  number={3},
  pages={669--686},
  year={2004},
  publisher={Wiley Online Library}
}

@article{manzan2008bootstrap,
  title={A bootstrap-based non-parametric forecast density},
  author={Manzan, Sebastiano and Zerom, Dawit},
  journal={International Journal of Forecasting},
  volume={24},
  number={3},
  pages={535--550},
  year={2008},
  publisher={Elsevier}
}

@article{guo1999multi,
  title={Multi-step prediction for nonlinear autoregressive models based on empirical distributions},
  author={Guo, Meihui and Bai, Zhidong and An, Hong Zhi},
  journal={Statistica Sinica},
  volume = {9},
  number = {2},
  pages={559--570},
  year={1999},
  publisher={JSTOR}
}

@article{tibshirani2023conformal,
  title={Conformal prediction: Advanced topics in statistical learning (lecture note)},
  author={Tibshirani, R},
  year={2023}
}

@article{angelopoulos2024theoretical,
  title={Theoretical foundations of conformal prediction},
  author={Angelopoulos, Anastasios N and Barber, Rina Foygel and Bates, Stephen},
  journal={arXiv preprint arXiv:2411.11824},
  year={2024}
}

@article{azzalini1981note,
  title={A note on the estimation of a distribution function and quantiles by a kernel method},
  author={Azzalini, Adelchi},
  journal={Biometrika},
  volume={68},
  number={1},
  pages={326--328},
  year={1981},
  publisher={Oxford University Press}
}

@article{jin1999kernel,
  title={On kernel estimation of a multivariate distribution function},
  author={Jin, Zhezhen and Shao, Yongzhao},
  journal={Statistics \& probability letters},
  volume={41},
  number={2},
  pages={163--168},
  year={1999},
  publisher={Elsevier}
}

@article{liu2008kernel,
  title={Kernel estimation of multivariate cumulative distribution function},
  author={Liu, Rong and Yang, Lijian},
  journal={Journal of Nonparametric Statistics},
  volume={20},
  number={8},
  pages={661--677},
  year={2008},
  publisher={Taylor \& Francis}
}

@article{mansouri2024nonparametric,
  title={Nonparametric estimation of bivariate cumulative distribution function},
  author={Mansouri, Behzad and Rastin, Azam and Mombeni, Habib Allah},
  journal={Arabian Journal of Mathematics},
  volume={13},
  number={3},
  pages={621--632},
  year={2024},
  publisher={Springer}
}

@article{mombeni2021asymmetric,
  title={Asymmetric kernels for boundary modification in distribution function estimation},
  author={Mombeni, Habib Allah and Mansouri, Behzad and Akhoond, MohammadReza},
  journal={REVSTAT-Statistical Journal},
  volume={19},
  number={4},
  pages={463--484},
  year={2021}
}

@article{gibbs2025conformal,
  title={Conformal prediction with conditional guarantees},
  author={Gibbs, Isaac and Cherian, John J and Cand{\`e}s, Emmanuel J},
  journal={Journal of the Royal Statistical Society Series B: Statistical Methodology},
  pages={qkaf008},
  year={2025},
  publisher={Oxford University Press UK}
}

@article{pan2016bootstrapmarkov,
  title={Bootstrap prediction intervals for Markov processes},
  author={Pan, Li and Politis, Dimitris N},
  journal={Computational Statistics \& Data Analysis},
  volume={100},
  pages={467--494},
  year={2016},
  publisher={Elsevier}
}

@article{wang2026model,
  title={Model-free bootstrap and conformal prediction in regression: conditionality, conjecture testing, and pertinent prediction intervals},
  author={Wang, Yiren and Politis, Dimitris N},
  journal={Journal of Nonparametric Statistics},
  volume={38},
  number={1},
  pages={311--345},
  year={2026},
  publisher={Taylor \& Francis}
}

@article{chen2012testing,
  title={TESTING FOR THE MARKOV PROPERTY IN TIMESERIES},
  author={Chen, Bin and Hong, Yongmiao},
  journal={Econometric Theory},
  volume={28},
  number={1},
  pages={130--178},
  year={2012},
  publisher={Cambridge University Press}
}

@book{francq2019garch,
  title={GARCH models: structure, statistical inference and financial applications},
  author={Francq, Christian and Zakoian, Jean-Michel},
  year={2019},
  publisher={John Wiley \& Sons}
}
